\newtheorem{theorem}{Theorem}
\newtheorem{lemma}{Lemma}
\newtheorem{assumption}{Assumption}
\newtheorem{remark}{Remark}
\definecolor{codegreen}{rgb}{0,0.6,0}
\definecolor{codegray}{rgb}{0.5,0.5,0.5}
\definecolor{codepurple}{rgb}{0.58,0,0.82}
\definecolor{backcolour}{rgb}{0.95,0.95,0.92}
\tiny\color{codegray},
\newcommand{\size}[1]{\ensuremath{|#1|}}
\newcommand{\lrA}[1]{\ensuremath{\left(#1\right)}}
\def\B{\mathcal{B}}
\def\OPT{\mbox{OPT}}
\title{An Improved Approximation Algorithm for Maximum Weight 3-Path Packing}
\author[1,2]{Jingyang Zhao} 
\author[1]{Mingyu Xiao}
\affil[1]{University of Electronic Science and Technology of China}
\affil[2]{Kyung Hee University, Yongin-si, South Korea}
\date{}
\begin{document}

\maketitle

\begin{abstract}
Given a complete graph with $n$ vertices and non-negative edge weights, where $n$ is divisible by 3, the maximum weight 3-path packing problem is to find a set of $n/3$ vertex-disjoint 3-paths such that the total weight of the 3-paths in the packing is maximized. This problem is closely related to the classic maximum weight matching problem. In this paper, we propose a $10/17$-approximation algorithm, improving the best-known $7/12$-approximation algorithm (ESA 2015). Our result is obtained by making a trade-off among three algorithms. The first is based on the maximum weight matching of size $n/2$, the second is based on the maximum weight matching of size $n/3$, and the last is based on an approximation algorithm for star packing. Our first algorithm is the same as the previous $7/12$-approximation algorithm, but we propose a new analysis method---a charging method---for this problem, which is not only essential to analyze our second algorithm but also may be extended to analyze algorithms for some related problems.
\end{abstract}

\maketitle

\section{Introduction}
A \textit{3-path} in a graph is a simple path on three different vertices. 
A \emph{3-path packing} in a graph is a set of vertex-disjoint 3-paths.
Given an unweighted graph, the Maximum 3-Path Packing (M3PP) problem is to find a 3-path packing of maximum cardinality.
For an edge-weighted complete graph, 
we assume that every edge has a non-negative weight and the number of vertices in the graph $n$ is a multiple of 3. 
A 3-path packing is \emph{perfect} if its size is $n/3$ (i.e., it can cover all vertices).
The Maximum Weight 3-Path Packing (MW3PP) problem is to find a perfect 3-path packing in an edge-weighted complete graph such that the total weight of all 3-paths is maximized.
MW3PP is a natural special case of the team formation problem~\cite{baumer2015star,DBLP:journals/dam/Bar-NoyPRV18}, where the goal is to find a set of teams, each containing a leader and two other members, such that the total weight of the connections between the team leader and the other team members is maximized.

In this paper, we study approximation algorithms for MW3PP.

\subsection{Related work}
It is well-known that deciding whether an unweighted graph contains a perfect 3-path packing is NP-hard~\cite{KirkpatrickH78}. Hence, it is easy to see that both M3PP and MW3PP are NP-hard.

MW3PP is a special case of the weighted 3-Set Packing problem, which admits approximations ratio of $(1/2-\varepsilon)$~\cite{arkin1998local,DBLP:journals/njc/Berman00}, $[1/(2-1/63700992+\varepsilon)\approx 0.5+10^{-9}]$~\cite{DBLP:conf/stacs/Neuwohner21}, $(1/1.786\approx 0.5599)$~\cite{thiery2023improved}, and $(1/\sqrt{3}\approx 0.5773)$~\cite{thiery2023approximation}. For MW3PP, there are independent approximation algorithms. 
Specifically, Hassin and Rubinstein~\cite{hassin2006approximation} proposed a randomized $(0.5223-\varepsilon)$-approximation algorithm. Tanahashi and Chen~\cite{tanahashi2010deterministic} proposed a deterministic $(0.5265-\varepsilon)$-approximation algorithm. 
Both the previous two algorithms are based on a maximum weight cycle packing~\cite{hartvigsen1984extensions}.
Bar-Noy \emph{et al.}~\cite{DBLP:journals/dam/Bar-NoyPRV181,DBLP:journals/dam/Bar-NoyPRV18} proposed an improved $7/12$-approximation algorithm, which is based on a maximum weight matching of size $n/2$~\cite{gabow1974implementation,lawler1976combinatorial}.
For MW3PP on $\{0,1\}$-weighted graphs (i.e., a complete graph with edge weights of 0 and 1), Hassin and Schneider~\cite{hassin2013local} gave a $0.55$-approximation algorithm, which was later improved to $3/4$ by Bar-Noy \emph{et al.}~\cite{DBLP:journals/dam/Bar-NoyPRV181,DBLP:journals/dam/Bar-NoyPRV18}.

\subsection{Our results}
In this paper, we improve the approximation ratio from $7/12\approx 0.58333$ to $10/17>0.58823$ for MW3PP. Our ratio is achieved by making a trade-off among three algorithms.

Let $P^*$ be an optimal 3-path packing with the weight of $\OPT$. 
At the high-level, our first algorithm is to call the previous $7/12$-approximation algorithm~\cite{DBLP:journals/dam/Bar-NoyPRV18}, which is based on a maximum weight matching of size $n/2$. The framework of the previous analysis of this algorithm is based on so-called uniformly $c$-sparse graphs~\cite{DBLP:journals/dam/Bar-NoyPRV18}, which is a global graph structure property. However, we will give a counterexample to show that their analysis is incomplete, and hence we cannot directly employ their result. For the sake of completeness, we provide an alternative proof based on several local graph structure properties. Our new analysis method is also essential to analyze our second algorithm.
Our second algorithm is based on a maximum weight matching of size $n/3$. It is motivated by the following observation: in the worst case of the first algorithm the weight of the maximum weight matching of size $n/2$ is the same as the weight of the maximum weight matching of size $n/3$. 
A combination of the first two algorithms still cannot guarantee a better-than-$7/12$-approximation ratio. However, in the worst case of the first two algorithms we get that all of the weighted parts of the fixed optimal 3-path packing $P^*$ lie within the graph $LG$ induced by the vertex set of the maximum weight matching of size $n/3$. This special case is handled by our third algorithm, which is based on a good 2-star packing (i.e., a set of vertex-disjoint edges $K_{1,1}$ and 3-paths $K_{1,2}$ covering all vertices) in $LG$. Based on the $2/3$-approximation algorithm for the maximum weight 2-star packing problem~\cite{babenko2011new}, we show that we can obtain a good 2-star packing in $LG$ with a weight of at least $\frac{2}{3}\cdot\OPT>\frac{7}{12}\cdot\OPT$ in the special case. Since there are $n/3$ vertices in $RG\coloneqq G-LG$, any 2-star packing in $LG$ can be transformed into a 3-path packing of $G$ with a non-decreasing weight. Hence, we can obtain a 3-path packing with a weight of at least $\frac{2}{3}\cdot\OPT$ in the special case. At last, by making a trade-off among these three algorithms (i.e., returning the best solution among the results of these three algorithms), we show that we can obtain a better-than-$7/12$-approximation ratio.

\subsection{Further related work}
A \textit{$k$-path/cycle} in a graph is a simple path/cycle on $k$ different vertices. 
Analogously, we denote by M$k$PP and M$k$CP the Maximum $k$-Path Packing and Maximum $k$-Cycle Packing problems, respectively; and by MW$k$PP and MW$k$CP the Maximum Weight $k$-Path Packing and Maximum Weight $k$-Cycle Packing problems, respectively.

M$k$PP and M$k$CP are NP-hard for $k\geq 3$~\cite{KirkpatrickH78}; thus, MW$k$PP and MW$k$CP are NP-hard for $k\geq 3$, even on metric graphs where the weight function is a metric.
Moreover, M$k$PP and M$k$CP are APX-hard for $k\geq 3$~\cite{DBLP:journals/ipl/Kann94}; MW$k$PP and MW$k$CP are APX-hard on $\{0,1\}$-weighted graphs for $k\geq 3$~\cite{manthey2008approximating}.
Guruswam et al.~\cite{DBLP:conf/wg/GuruswamiRCCW98} showed that M3CP remains NP-hard on chordal graphs, planar graphs, line graphs, and total graphs.

Approximation algorithms for MW$k$PP and MW$k$CP have been studied extensively on both general and metric graphs.
Notably, MW$n$CP is known as MAX TSP.
In this subsection, we use $\alpha$ (resp., $\beta$) to denote the best-known approximation ratio of MAX TSP on general (resp., metric) graphs. 
Currently, $\alpha=4/5$~\cite{DBLP:conf/ipco/DudyczMPR17} and $\beta=7/8$~\cite{DBLP:journals/tcs/KowalikM09}.

\textbf{MW$k$PP.}
Hassin and Rubinstein~\cite{hassin1997approximation} proposed a $3/4$-approximation algorithm for $k=4$, and an $\alpha\cdot(1-1/k)$-approximation algorithm for $k\geq5$.
On $\{0,1\}$-weighted graphs, Berman and Karpinski~\cite{DBLP:conf/soda/BermanK06} gave a $6/7$-approximation algorithm for $k=n$.

\textbf{MW$k$CP.}
For $k=3$, Hassin and Rubinstein~\cite{hassin2006approximation,DBLP:journals/dam/HassinR06a} proposed a randomized $(0.518-\varepsilon)$-approximation algorithm. Chen et al.~\cite{DBLP:journals/dam/ChenTW09,DBLP:journals/dam/ChenTW10} improved this with a randomized $(0.523-\varepsilon)$-approximation algorithm. 
Van Zuylen~\cite{DBLP:journals/dam/Zuylen13} subsequently proposed a deterministic algorithm achieving the same approximation ratio. 
Moreover, Li and Yu~\cite{li2023cyclepack} gave a $2/3$-approximation algorithm for $k=4$ and an $\alpha\cdot(1-1/k)^2$-approximation algorithm for $k\geq5$.
Zhao and Xiao~\cite{DBLP:journals/tcs/ZhaoX25} gave an improved $3/4$-approximation algorithm for $k=4$.
On $\{0,1\}$-weighted graphs, Bar-Noy \emph{et al.}~\cite{DBLP:journals/dam/Bar-NoyPRV181,DBLP:journals/dam/Bar-NoyPRV18} gave a $3/5$-approximation algorithm for $k=3$, and the algorithm of Berman and Karpinski~\cite{DBLP:conf/soda/BermanK06} yields an approximation ratio of $6/7$ for $k=n$.

\textbf{MW$k$PP on metric graphs.}
Li and Yu~\cite{li2023cyclepack} gave a $3/4$-approximation algorithm for $k=3$, a $3/4$-approximation algorithm for $k=5$, and a $\beta\cdot(1-1/k)$-approximation algorithm for $k\geq6$. 
Zhao and Xiao~\cite{DBLP:journals/tcs/ZhaoX25} gave an improved $14/17$-approximation algorithm for $k=4$, and an improved $\frac{27k^2-48k+16}{32k^2-36k-24}$-approximation algorithm for $k\in\{6,8,10\}$.
On $\{1,2\}$-weighted graphs, Monnot and Toulouse~\cite{DBLP:journals/jda/MonnotT08} gave a $9/10$-approximation algorithm for $k=4$, and Guo et al.~\cite{DBLP:journals/jco/GuoYL25} gave a $19/24$-approximation algorithm for $k=5$. 
Notably, Zhao and Xiao~\cite{DBLP:journals/tcs/ZhaoX25} showed that for MW$k$PP (and MW$k$CP), any $\rho$-approximation algorithm on $\{0,1\}$-weighted graphs yields a $(1+\rho)/2$-approximation algorithm on $\{1,2\}$-weighted graphs. 
Then, since there exists a $24/35$-approximation algorithm for MW$5$PP on $\{0,1\}$-weighted graphs~\cite{DBLP:conf/soda/BermanK06,li2023cyclepack}, one can obtain a ratio of $59/70$ for MW$5$PP on $\{1,2\}$-weighted graphs, better than $19/24$~\cite{DBLP:journals/jco/GuoYL25}.

\textbf{MW$k$CP on metric graphs.}
For $k=3$, Hassin et al.~\cite{hassin1997approximation1} proposed a $2/3$-approximation algorithm, Chen et al.~\cite{DBLP:journals/jco/ChenCLWZ21} gave a randomized $(0.66768-\varepsilon)$-approximation algorithm, and Zhao and Xiao~\cite{DBLP:journals/tcs/ZhaoX24} gave a deterministic $(0.66835-\varepsilon)$-approximation algorithm.
Moreover, Li and Yu~\cite{li2023cyclepack} gave a $3/4$-approximation algorithm for $k=4$, a $3/5$-approximation algorithm for $k=5$, and a $\beta\cdot(1-1/k)^2$-approximation algorithm for $k\geq6$. 
Zhao and Xiao~\cite{DBLP:journals/tcs/ZhaoX25} improved these ratios to $5/6$ for $k=4$, to $7/10$ for $k=5$, to $(\frac{7}{8}-\frac{1}{8k})(1-\frac{1}{k})$ for fixed odd $k>5$, and to $\frac{7}{8} (1-\frac{1}{k}+\frac{1}{k(k-1)})$ for even $k>5$.
On $\{1,2\}$-weighted graphs, Zhao and Xiao~\cite{DBLP:journals/tcs/ZhaoX25} proposed a $9/11$-approximation algorithm for $k=3$ and a $7/8$-approximation algorithm for $k=4$, and Guo et al.~\cite{DBLP:journals/jco/GuoYL25} gave a $37/48$-approximation algorithm for $k=6$. Similarly, 
since there exists a $25/42$-approximation algorithm for MW$6$CP on $\{0,1\}$-weighted graphs~\cite{DBLP:conf/soda/BermanK06,li2023cyclepack}, one can obtain a ratio of $67/84$ for MW$6$CP on $\{1,2\}$-weighted graphs~\cite{DBLP:journals/tcs/ZhaoX25}, better than $37/48$~\cite{DBLP:journals/jco/GuoYL25}.

Finally, MW$k$PP and MW$k$CP (with fixed $k$) are special cases of weighted $k$-Set Packing, which admits approximation ratios of $\frac{1}{k-1}-\varepsilon$~\cite{arkin1998local}, $\frac{2}{k+1}-\varepsilon$~\cite{DBLP:journals/njc/Berman00}, and $\frac{2}{k+1-1/31850496}-\varepsilon$~\cite{DBLP:conf/stacs/Neuwohner21}. Recently, these results have been further improved (see~\cite{DBLP:conf/ipco/Neuwohner22,DBLP:conf/soda/Neuwohner23,thiery2023improved,thiery2023approximation}). 
To our knowledge, the only case where these improvements yield the state-of-the-art for MW$k$PP/MW$k$CP is MW3CP, namely the $1/\sqrt{3}$-approximation algorithm for weighted 3-Set Packing~\cite{thiery2023approximation}.

For additional results on the minimization versions, we refer the reader to~\cite{DBLP:journals/siamcomp/GoemansW95,DBLP:journals/jda/MonnotT08,DBLP:conf/aaim/LiYL24,DBLP:conf/cocoon/XuYL25}.

\section{Preliminaries}\label{prelim}
An instance of MW3PP is given by a complete graph $G=(V,E)$ with $n$ vertices, where $n\bmod 3=0$. There is a non-negative weight function $w: E\to \mathbb{R}_{\geq0}$ on the edges in $E$.
For any weight function $w:X\to \mathbb{R}_{\geq0}$, we also extend it to the subsets of $X$, i.e., we define $w(X') = \sum_{x\in X'} w(x)$ for any $X'\subseteq X$. Given a subgraph $S$, we use $V(S)$ (resp., $E(S)$) to denote the set of vertices (resp., edges) contained in the subgraph, and for any function $f$ on the edges, define $f(S)\coloneqq f(E(S))$. Given a set of vertices $V'\subseteq V$, we use $G[V']$ to denote the complete graph induced by $V'$. 
Two subgraphs or sets of edges are \emph{vertex-disjoint} if they do not share a common vertex. 
An undirected edge between vertices $u$ and $v$ is denoted by $uv$, and a directed edge from $u$ to $v$ is denoted by $(u,v)$. A directed edge will be called an \emph{arc}.
We may consider a directed complete graph $\overleftrightarrow{G}=(V,A)$ obtained by replacing each edge $uv\in E$ of $G$ by two arcs $\{(u,v),(v,u)\}$.

A \emph{matching} is a set of vertex-disjoint edges. We use $M^*_{p}$ to denote a maximum weight matching of size $p$ in graph $G$, which can be found in $O(n^3)$ time if it exists~\cite{gabow1974implementation,lawler1976combinatorial}. The maximum weight matching without the constraint on its size can also be found in $O(n^3)$ time~\cite{gabow1974implementation,lawler1976combinatorial}.
We assume that $n$ is even so that $M^*_{n/2}$ exists; otherwise, we use $O(n^2)$ loops to enumerate one 3-path in the optimal solution of MW3PP, and in the loop we get an approximate 3-path packing in the graph obtained by removing the 3-path, where the number of vertices is even. The approximation ratio is preserved at a cost of a runtime increase by a factor of $O(n^2)$.

A 3-path $xyz$ is a simple path on three different vertices $\{x,y,z\}$, and it contains exactly two edges $\{xy,yz\}$.
A \emph{perfect 3-path packing} in graph $G$ is a set of $n/3$ vertex-disjoint 3-paths that cover all vertices of $G$.
In the following, we will always consider a 3-path packing as a perfect 3-path packing and use $P^*$ to denote a maximum weight 3-path packing in $G$. We also let $\OPT\coloneqq w(P^*)$, and assume $\OPT\neq 0$ to exclude this trivial case.

A $t$-star is a complete bipartite graph $K_{1,t}$, where there is one vertex on one side, $t$ vertices on the other side, and $t$ edges in total. Hence, an edge is a 1-star, and a 3-path is a 2-star. A \emph{$t$-star packing} in a graph is a set of vertex-disjoint $t'$-stars with $1\leq t'\leq t$ that cover all vertices of the graph. Note that a 2-star packing is a set of edges and 3-paths, and a 3-path packing is also a 2-star packing by definition. 
To avoid confusion with normal edges, we refer to the edges in a star packing as a \emph{2-path}.
In a directed graph, the in-degree (resp., out-degree) of a vertex $v$ is the number of arcs entering (resp., leaving) $v$. A \emph{$t$-feasible arc set} in a directed graph is a subgraph such that each vertex has an in-degree of at most 1 and an out-degree of at most $t$.

Due to limited space, the proofs of the lemmas and theorems marked with ``*'' have been moved to the appendix.

\section{The First Algorithm}
The first algorithm has already
been considered in~\cite{DBLP:journals/dam/Bar-NoyPRV18}, and we denote it by Alg.1. It contains the following four steps. 

\medskip
\noindent\textbf{Step~1.} Find $M^*_{n/2}$, a maximum weight matching of size $n/2$ in the graph $G$.

\noindent\textbf{Step~2.} Construct a multi-graph $G/M^*_{n/2}$ by contracting the edges
in $M^*_{n/2}$, where there are $n/2$ super-vertices one-to-one corresponding to the edges in $M^*_{n/2}$. For two super-vertices corresponding to $ux,yz\in M^*_{n/2}$, there are four edges $uy,uz,xy,xz$ between them, and each edge $e\in\{uy,uz,xy,xz\}$ is defined to have a cost of $c(e)\coloneqq w(e)-\min\{w(ux),w(yz)\}$.

\noindent\textbf{Step~3.} Find a maximum cost matching $M^{**}_{n/6}$ of size $n/6$ in the graph $G/M^*_{n/2}$.

\noindent\textbf{Step~4.} Obtain a 3-path packing $P_1$ of $G$ as follows:
\begin{itemize}
    \item For each edge $xy\in M^{**}_{n/6}$, if there are two edges $e_x,e_y\in M^*_{n/2}$ corresponding to the endpoints of $xy$ in $G/M^*_{n/2}$, where we assume that $w(e_x)\geq w(e_y)$, obtain a 3-path using edges $xy$ and $e_x$, and call the vertex of $e_y$ that is not incident to $xy$ a \emph{residual vertex}.
    \item For each edge $e\in M^*_{n/2}$ corresponding to a vertex of $G/M^*_{n/2}$ that is not matched by $M^{**}_{n/6}$, create a 3-path from $e$ with an arbitrary residual vertex.
\end{itemize}
\medskip

An illustration of Alg.1 can be seen in Fig.\ref{fig:fig01}.
\begin{figure}[t]
    \centering
    \includegraphics[scale=0.63]{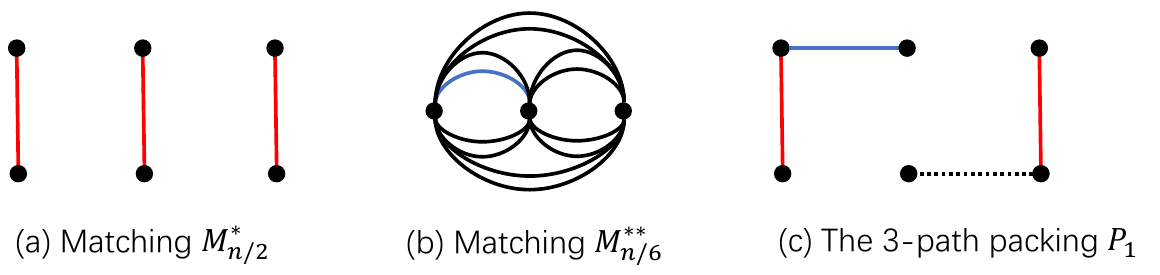}
    \caption{An illustration of Alg.1: In (a), each red edge has a weight of 1, each omitted edge has a weight of 0, and $M^*_{n/2}$ contains three red edges; In (b), each edge has a cost of -1 by definition and $M^{**}_{n/6}$ contains one blue edge; In (c), Alg.1 outputs two 3-paths with a weight of $2$.}
    \label{fig:fig01}
\end{figure}

\subsection{A note on the previous analysis}
An easy bound related to $M^*_{n/2}$ can be obtained as follows.
\begin{lemma}\label{lb1}
We have $w(M^*_{n/2})\geq w(M^*_{n/3})\geq\sum_{xyz\in P^*}\max\{w(xy),w(yz)\}\geq \frac{1}{2}\cdot\OPT$.
\end{lemma}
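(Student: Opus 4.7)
The plan is to prove the chain of three inequalities separately, working from right to left since each rests on a distinct simple observation: an averaging argument, the exhibition of an explicit matching of size $n/3$ coming from $P^*$, and an extension argument using non-negativity of edge weights.

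For the rightmost inequality, I would write $\OPT = \sum_{xyz\in P^*}(w(xy)+w(yz))$, and apply the pointwise bound $\max\{w(xy),w(yz)\} \geq \tfrac{1}{2}(w(xy)+w(yz))$ on each 3-path. Summing over the $n/3$ 3-paths of $P^*$ yields $\sum_{xyz\in P^*}\max\{w(xy),w(yz)\} \geq \tfrac{1}{2}\OPT$ directly.

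For the middle inequality, from each 3-path $xyz \in P^*$ I would pick the edge in $\{xy,yz\}$ of larger weight (breaking ties arbitrarily). Since the 3-paths in $P^*$ are pairwise vertex-disjoint, this selection yields a set of $n/3$ pairwise vertex-disjoint edges, i.e., a matching of size exactly $n/3$ in $G$ whose total weight equals $\sum_{xyz\in P^*}\max\{w(xy),w(yz)\}$. Because $M^*_{n/3}$ is by definition a maximum weight matching of size $n/3$, its weight must be at least that of this explicit witness.

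For the leftmost inequality, I would invoke the standing assumption that $n$ is even together with $n \bmod 3 = 0$, which gives $n \bmod 6 = 0$ and hence $n/3$ is even. Starting from $M^*_{n/3}$, exactly $n/3$ vertices remain uncovered, and since $G$ is complete and this count is even, these vertices admit a perfect matching among themselves of size $n/6$. Adjoining this to $M^*_{n/3}$ produces a matching of size $n/2$ in $G$, whose weight is at least $w(M^*_{n/3})$ because all edge weights are non-negative. Hence $w(M^*_{n/2}) \geq w(M^*_{n/3})$, which completes the chain. No step here is a serious obstacle; the only subtlety is the parity check that ensures $M^*_{n/3}$ can be extended to a matching of size $n/2$ within $G$.
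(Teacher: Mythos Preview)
Your proof is correct and follows essentially the same approach as the paper: the averaging bound for the rightmost inequality, the heavier-edge matching from $P^*$ for the middle inequality, and the non-negativity of weights for the leftmost inequality. The only difference is that you make the extension of $M^*_{n/3}$ to a size-$n/2$ matching explicit via the parity check $6\mid n$, whereas the paper simply asserts $w(M^*_{n/2})\geq w(M^*_{n/3})$ from non-negativity; your version is slightly more careful but not a different argument.
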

\begin{proof}
Since the weights of edges are non-negative, we have $w(M^*_{n/2})\geq w(M^*_{n/3})$.
Since the optimal 3-path packing $P^*$ contains $n/3$ 3-paths, it is easy to see that $\{xy\mid xyz\in P^*,w(xy)\geq w(yz)\}$ is a matching of size $n/3$. Hence, we have $w(M^*_{n/3})\geq \sum_{xyz\in P^*}\max\{w(xy),w(yz)\}\geq\sum_{xyz\in P^*}\frac{1}{2}(w(xy)+w(yz))=\sum_{xyz\in P^*}\frac{1}{2}w(xyz)=\frac{1}{2}w(P^*)=\frac{1}{2}\cdot\OPT$.
\end{proof}

Bar-Noy \emph{et al.}~\cite{DBLP:journals/dam/Bar-NoyPRV18} proved that the approximation of Alg.1 is $\frac{7}{12}$ using the following result.

\begin{lemma}[\cite{DBLP:journals/dam/Bar-NoyPRV18}, cf. theorem~1]\label{lb2}
It holds that $w(P_1)\geq \frac{2}{3}w(M^*_{n/2})+\frac{1}{4}\cdot\OPT$.
\end{lemma}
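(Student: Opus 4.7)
The first step is a bookkeeping identity. For each matched super-edge $xy \in M^{**}_{n/6}$ joining super-vertices $e_x, e_y \in M^*_{n/2}$ with $w(e_x) \geq w(e_y)$, the corresponding 3-path in $P_1$ has weight $w(xy)+w(e_x) = c(xy)+w(e_x)+w(e_y)$; each super-vertex $e'\in M^*_{n/2}$ not matched by $M^{**}_{n/6}$ yields a 3-path of weight at least $w(e')$ by attaching any residual vertex. Summing across $P_1$ gives
\[
w(P_1) \;\geq\; w(M^*_{n/2}) \;+\; c(M^{**}_{n/6}),
\]
so the lemma reduces to the single cost inequality
\[
c(M^{**}_{n/6}) \;\geq\; \tfrac{1}{4}\,\OPT \;-\; \tfrac{1}{3}\,w(M^*_{n/2}).
\]

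The plan is to construct an explicit size-$n/6$ matching $M'$ in $G/M^*_{n/2}$ realizing the right-hand side; since $M^{**}_{n/6}$ is cost-optimal among size-$n/6$ matchings, this finishes the proof. The candidate super-edges come from $P^*$: because $M^*_{n/2}$ is a matching, each 3-path of $P^*$ has at most one edge in $M^*_{n/2}$, so at least one of its $P^*$-edges descends to a super-edge of $G/M^*_{n/2}$ (self-loops are discarded). The construction proceeds component-by-component on the auxiliary multigraph $H := M^*_{n/2}\cup E(P^*)$. Since every vertex has degree exactly one in $M^*_{n/2}$ and degree one or two in $P^*$, every component of $H$ is a small path or cycle with an alternating pattern of one $M^*_{n/2}$-edge and one or two $P^*$-edges, so there are only a handful of local shapes. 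Distinct components are vertex-disjoint in $G$ and thus touch disjoint sets of super-vertices in $G/M^*_{n/2}$, which means selections made independently in different components automatically combine into a valid global matching.

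For a selected super-edge $e$ between super-vertices $e_a, e_b$, the cost $c(e) = w(e) - \min\{w(e_a), w(e_b)\}$ splits into a positive $P^*$-weight contribution and a negative $M^*_{n/2}$-penalty. The heart of the argument is a charging scheme that, in every local configuration of a component of $H$, picks a subset of its $P^*$-edges to donate to $M'$ so that (i) the collected $P^*$-weight is a definite fraction of the component's total $P^*$-weight, and (ii) the $\min$-penalties can be apportioned among the $M^*_{n/2}$-edges of that component at a controlled rate. After padding $M'$ up to cardinality exactly $n/6$ with nonnegative-cost super-edges on otherwise unused super-vertices, the local bounds integrate to the required $\tfrac{1}{4}\OPT - \tfrac{1}{3}w(M^*_{n/2})$; the mix of coefficients comes from trading the $\tfrac{1}{2}\OPT$ floor of Lemma~\ref{lb1} against the penalty rate.

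The main obstacle, and exactly where the earlier ``uniformly $c$-sparse'' argument of \cite{DBLP:journals/dam/Bar-NoyPRV18} breaks down, is ensuring consistency of the local matching choices when two 3-paths of $P^*$ share a super-vertex (i.e., both meet the same $M^*_{n/2}$-edge): only one of them can donate to $M'$, and the weight of the discarded one must still be accounted for. The case analysis across the small component types of $H$ (short paths, short cycles, and 3-paths of $P^*$ lying adjacent to an $M^*_{n/2}$-edge) is the bulk of the work, and in the tight cases the bound is closed by invoking the optimality of $M^*_{n/2}$ through local swap inequalities of the form $w(e_a)+w(e_b) \geq w(f_1)+w(f_2)$ for any two vertex-disjoint competing edges $f_1, f_2$ in $G$ that would form an alternative matching of the same vertices. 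This local, structural style of reasoning is what the introduction flags as the new charging method and what also paves the way for analyzing the second algorithm.
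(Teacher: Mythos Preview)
The statement you are attempting to prove is false, and the paper does not prove it---it quotes Lemma~\ref{lb2} from \cite{DBLP:journals/dam/Bar-NoyPRV18} and then immediately refutes it with the counterexample of Fig.~\ref{fig:fig01}: there $w(M^*_{n/2})=3$, $\OPT=2$, and $w(P_1)=2$, so $\tfrac{2}{3}w(M^*_{n/2})+\tfrac{1}{4}\OPT=\tfrac{5}{2}>w(P_1)$. The paper's replacement is Lemma~\ref{lb8}, which proves the weaker (and correct) bound $w(P_1)\geq \tfrac{2}{3}w(M^*_{n/2})+\tfrac{1}{2}\OPT-\tfrac{1}{2}\sum_{xyz\in P^*}\max\{w(xy),w(yz)\}$, still sufficient for the $\tfrac{7}{12}$ ratio.

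The concrete gap in your argument is the sentence ``After padding $M'$ up to cardinality exactly $n/6$ with nonnegative-cost super-edges on otherwise unused super-vertices.'' Because $c(e)=w(e)-\min\{w(e_x),w(e_y)\}$, every super-edge can have strictly negative cost (in the counterexample all of them have cost $-1$), so there is no guarantee that such padding exists, and forcing the matching to size exactly $n/6$ can strictly decrease its cost. This is precisely the failure mode the paper diagnoses: ``the costs of the edges in $G/M^*_{n/2}$ may be negative. Due to this, some inequalities in the proof of Lemma~\ref{lb2} in~\cite{DBLP:journals/dam/Bar-NoyPRV18} may not hold.'' Your reduction $w(P_1)\geq w(M^*_{n/2})+c(M^{**}_{n/6})$ is fine (it is Lemma~\ref{lb3}), and the idea of building $M'$ from $P^*$-edges is the right one, but the target inequality $c(M^{**}_{n/6})\geq \tfrac{1}{4}\OPT-\tfrac{1}{3}w(M^*_{n/2})$ is simply not true in general; in the counterexample $c(M^{**}_{n/6})=-1$ while $\tfrac{1}{4}\OPT-\tfrac{1}{3}w(M^*_{n/2})=-\tfrac{1}{2}$.
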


By Lemmas~\ref{lb1} and \ref{lb2}, it is easy to get that $w(P_1)\geq \frac{7}{12}\cdot\OPT$. However, Lemma~\ref{lb2} may not hold. In the example shown in Fig.\ref{fig:fig01}, we have $w(M^*_{n/2})=3$, $\OPT=2$, and $w(P_1)=2$, while we get $w(P_1)\geq 2+\frac{1}{2}>\OPT$ by Lemma~\ref{lb2}. Hence, Lemma~\ref{lb2} implies that the weight of $P_1$ is even greater than the optimal solution, a contradiction. 

Note that the costs of the edges in $G/M^*_{n/2}$ may be negative. Due to this, some inequalities in the proof of Lemma~\ref{lb2} in~\cite{DBLP:journals/dam/Bar-NoyPRV18} may not hold.
By additionally analyzing this case using their method, we may still prove an approximation ratio of $\frac{7}{12}$ for Alg.1. However, we provide a new analysis method since it is also essential to analyze our second algorithm.

\subsection{The analysis}
First, we consider the quality of the 3-path packing $P_1$ returned by Alg.1.
\begin{lemma}[\cite{DBLP:journals/dam/Bar-NoyPRV18}]\label{lb3}
It holds that $w(P_1)\geq w(M^*_{n/2})+c(M^{**}_{n/6})$.
\end{lemma}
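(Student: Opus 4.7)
The plan is to unfold the construction in Step~4 and tally $w(P_1)$ as a sum over the two kinds of 3-paths produced, then rearrange using the definition of the cost function $c$.

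First, I partition $P_1$ into two groups. The first group contains the $n/6$ 3-paths built from matched super-vertices: for each $xy\in M^{**}_{n/6}$ with corresponding matching edges $e_x,e_y\in M^{*}_{n/2}$ (and the convention $w(e_x)\ge w(e_y)$), the 3-path uses $xy$ together with $e_x$, contributing $w(xy)+\max\{w(e_x),w(e_y)\}$ to $w(P_1)$. The second group contains the remaining 3-paths, each formed by joining an unmatched matching edge $e\in M^{*}_{n/2}$ to a single residual vertex via a new edge of non-negative weight; this 3-path contributes at least $w(e)$ to $w(P_1)$. A quick count confirms feasibility: there are exactly $|M^{**}_{n/6}|=n/6$ residual vertices, and exactly $n/2-2\cdot n/6=n/6$ unmatched super-vertices, so the pairing in the second bullet of Step~4 covers every vertex of $V$ by vertex-disjoint 3-paths.

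Second, I would sum these contributions. Writing $E_0\subseteq M^{*}_{n/2}$ for the set of matching edges whose super-vertices are unmatched by $M^{**}_{n/6}$, the above gives
\[
w(P_1)\;\ge\;\sum_{xy\in M^{**}_{n/6}}\Big(w(xy)+\max\{w(e_x),w(e_y)\}\Big)\;+\;\sum_{e\in E_0}w(e).
\]
Using $\max\{w(e_x),w(e_y)\}=w(e_x)+w(e_y)-\min\{w(e_x),w(e_y)\}$ and observing that every edge of $M^{*}_{n/2}$ is counted exactly once---either as an $e_x$ or $e_y$ for a unique $xy\in M^{**}_{n/6}$, or as a member of $E_0$---the terms involving the matching edges collapse to $w(M^{*}_{n/2})-\sum_{xy\in M^{**}_{n/6}}\min\{w(e_x),w(e_y)\}$.

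Third, plugging back in and invoking the definition $c(xy)=w(xy)-\min\{w(e_x),w(e_y)\}$ yields
\[
w(P_1)\;\ge\;w(M^{*}_{n/2})+\sum_{xy\in M^{**}_{n/6}}c(xy)\;=\;w(M^{*}_{n/2})+c(M^{**}_{n/6}),
\]
which is the claimed bound. There is no real obstacle in this proof: it is essentially a bookkeeping exercise. The only points that require care are (i) dropping the new edge in the second kind of 3-path from the lower bound, which is valid because $w\ge 0$, and (ii) the accounting that partitions $M^{*}_{n/2}$ into the $2|M^{**}_{n/6}|$ matching edges adjacent to $M^{**}_{n/6}$ and the $|E_0|$ unmatched ones, so no $w(e)$ is double-counted or missed.
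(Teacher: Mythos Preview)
Your proof is correct and follows essentially the same approach as the paper: partition $P_1$ into the $n/6$ 3-paths built from edges of $M^{**}_{n/6}$ (each contributing $w(xy)+\max\{w(e_x),w(e_y)\}=w(e_x)+w(e_y)+c(xy)$) and the $n/6$ 3-paths built from the unmatched matching edges (each contributing at least $w(e)$), then sum. The paper's notation $M',M''$ corresponds to your partition of $M^*_{n/2}$ into the $2|M^{**}_{n/6}|$ matched edges and your set $E_0$, and the bookkeeping is identical.
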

\begin{proof}
Let $M'$ be the set of edges in $M^*_{n/2}$ such that each edge corresponds to a vertex of $G/M^*_{n/2}$ that is matched by $M^{**}_{n/6}$, and let $M''\coloneqq M^*_{n/2}\setminus M'$. Note that $\size{M'}=n/3$ and $\size{M''}=\size{M^*_{n/2}}-\size{M'}=n/2-n/3=n/6$.

By Step 4 of Alg.1, we have that the 3-paths in $P_1$ contain all edges in $M^{**}_{n/6}$, $n/6$ edges in $M'$, and all edges in $M''$ (see the example in Fig.\ref{fig:fig01}).
Specifically, for each edge $xy\in M^{**}_{n/6}$, there are two edges $e_x,e_y\in M^*_{n/2}$ corresponding to the endpoints of $xy$ in $G/M^*_{n/2}$, and we have $e_x,e_y\in M'$. If $w(e_x)\geq w(e_y)$, there is a 3-path containing two edges $xy$ and $e_x$, and we have $w(xy)+w(e_x)=w(xy)+w(e_x)+w(e_y)-\min\{w(e_x),w(e_y)\}=w(e_x)+w(e_y)+c(xy)$. Hence, the weight of the $n/6$ 3-paths using $n/6$ edges in $M^{**}_{n/6}$ and $n/6$ edges in $M'$ has a weight of $w(M')+c(M^{**}_{n/6})$. Moreover, for each $e\in M''$, there exists a 3-path in $P_1$ containing it with a residual vertex. Since $\size{M''}=n/6$, the weight of these $n/6$ 3-paths has a weight of at least $w(M'')$. Hence, we have $w(P_1)\geq w(M')+c(M^{**}_{n/6})+w(M'')=w(M^*_{n/2})+c(M^{**}_{n/6})$.
\end{proof}

We have obtained a bound related to $w(M^*_{n/2})$ in Lemma~\ref{lb1}.
To obtain a bound related to $c(M^{**}_{n/6})$, we first define some notation. We split 3-paths in $P^*$ into two disjoint sets $P^*_1$ and $P^*_2$, where $P^*_1\coloneqq\{xyz\mid xyz\in P^*,\{xy,yz\}\cap E(M^*_{n/2})=\emptyset\}$ and $P^*_2\coloneqq P^*\setminus P^*_1$. Hence, $P^*_1$ contains all 3-paths in $P^*$ with edges not contained in $M^*_{n/2}$, and for each 3-path in $P^*_2$, there is exactly one edge contained in $M^*_{n/2}$. In this section, we let 
\[
E_1\coloneqq E(P^*_1)\quad\quad\mbox{and}\quad\quad E_2\coloneqq E(P^*_2)\setminus E(M^*_{n/2}).
\]

\begin{lemma}\label{lb4}
For any edge $xy$ in $G/M^*_{n/2}$, there are two edges $e_x,e_y\in M^*_{n/2}$ corresponding to the endpoints of $xy$ in $G/M^*_{n/2}$. 
For any $0\leq \theta\leq 1$, it holds that $c(xy)\geq w(xy)-(\theta\cdot w(e_x)+(1-\theta)\cdot w(e_y))$.
\end{lemma}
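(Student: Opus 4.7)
The plan is to unfold the definition of the cost function. By Step~2 of Alg.1, we have $c(xy) = w(xy) - \min\{w(e_x), w(e_y)\}$, so the claimed inequality is equivalent to
\[
\min\{w(e_x), w(e_y)\} \leq \theta \cdot w(e_x) + (1-\theta) \cdot w(e_y).
\]
Hence the lemma reduces to a one-line convexity statement: the minimum of two non-negative reals is at most any convex combination of them. The key step is therefore to verify this inequality; everything else is just rewriting.

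To verify the convex-combination bound, I would assume without loss of generality that $w(e_x) \leq w(e_y)$, so that $\min\{w(e_x), w(e_y)\} = w(e_x)$. Then
\[
\theta \cdot w(e_x) + (1-\theta) \cdot w(e_y) = w(e_x) + (1-\theta)\bigl(w(e_y) - w(e_x)\bigr) \geq w(e_x),
\]
since $1-\theta \geq 0$ and $w(e_y) - w(e_x) \geq 0$. The symmetric case $w(e_y) \leq w(e_x)$ follows by swapping the roles of $\theta$ and $1-\theta$. Substituting this bound into the definition of $c(xy)$ immediately yields the lemma.

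There is no real obstacle here; the lemma is a technical but essentially trivial parameterization of the trivial inequality $c(xy) \geq w(xy) - \max\{w(e_x), w(e_y)\}$ (corresponding to the extremal choices $\theta \in \{0,1\}$) together with $c(xy) = w(xy) - \min\{w(e_x), w(e_y)\}$ (which one recovers by choosing $\theta$ so as to pick out the smaller endpoint). The value of the lemma is not its difficulty but its flexibility: leaving $\theta$ as a free parameter will let later arguments tune the charging weight between $e_x$ and $e_y$ when edges of $E_1 \cup E_2$ are summed over all super-edges of $G/M^*_{n/2}$.
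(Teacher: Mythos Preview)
Your proposal is correct and follows essentially the same approach as the paper: both unfold the definition $c(xy)=w(xy)-\min\{w(e_x),w(e_y)\}$, assume without loss of generality that $w(e_x)\leq w(e_y)$, and then verify the elementary fact that any convex combination $\theta\, w(e_x)+(1-\theta)\, w(e_y)$ dominates the minimum. The only difference is cosmetic---the paper writes $w(e_x)=\theta\, w(e_x)+(1-\theta)\, w(e_x)\leq \theta\, w(e_x)+(1-\theta)\, w(e_y)$ directly, whereas you rearrange to isolate $(1-\theta)(w(e_y)-w(e_x))\geq 0$.
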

\begin{proof}
We have $c(xy)=w(xy)-\min\{w(e_x), w(e_y)\}$ by definition. Assume w.l.o.g. that $w(e_x) = \min\{w(e_x), w(e_y)\}\leq w(e_y)$. We get $c(xy)=w(xy)-w(e_x)=w(xy)-(\theta\cdot w(e_x)+(1-\theta)\cdot w(e_x))\geq w(xy)-(\theta\cdot w(e_x)+(1-\theta)\cdot w(e_y))$ since $0\leq \theta\leq 1$.
\end{proof}

Based on the result in Lemma~\ref{lb4}, we use a charging method (a new analysis method for this problem) to prove the following lemma.
\begin{lemma}\label{lb5}
It holds that $c(E(P^*)\setminus E(M^*_{n/2}))\geq w(P^*)-\frac{4}{3}w(M^*_{n/2})$.
\end{lemma}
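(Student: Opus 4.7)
The plan is to apply Lemma~\ref{lb4} edge by edge to the set $E(P^*)\setminus E(M^*_{n/2}) = E_1 \cup E_2$ with a cleverly chosen $\theta$ per edge, sum the resulting bounds, and reduce the claim to a ``charging'' inequality on $M^*_{n/2}$. Concretely, for each $f = xy \in E_1 \cup E_2$, I will pick $\beta_{f,x} \in [0,1]$ and set $\beta_{f,y} \coloneqq 1-\beta_{f,x}$; Lemma~\ref{lb4} then gives $c(f) \ge w(f) - \beta_{f,x}\,w(e(x)) - \beta_{f,y}\,w(e(y))$, where $e(u)$ denotes the unique matching edge at $u$. Writing $M_0 \coloneqq E(P^*)\cap E(M^*_{n/2})$, so that $w(E_1\cup E_2) = w(P^*) - w(M_0)$, summing over $f$ reduces the goal to showing that for every $e = uv \in M^*_{n/2}$,
\[
\mathbf{1}[e \in M_0] \;+\; \sum_{f \ni u,\,f \in E_1 \cup E_2} \beta_{f,u} \;+\; \sum_{f \ni v,\,f \in E_1 \cup E_2} \beta_{f,v} \;\le\; \tfrac{4}{3}.
\]

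To design the $\beta$'s, I first pin down local structure. For a 3-path $xyz \in P^*_1$ no edge of the 3-path is in $M^*_{n/2}$, the three matching edges $e(x),e(y),e(z)$ are distinct elements of $M^*_{n/2}\setminus M_0$, and both $xy,yz$ lie in $E_1$. For a 3-path $xyz \in P^*_2$ with $xy \in M_0$ (WLOG), only $yz$ lies in $E_1\cup E_2$, we have $e(x) = e(y) = xy \in M_0$, and $e(z) \in M^*_{n/2}\setminus M_0$; crucially, the endpoint $x$ touches no edge of $E_1\cup E_2$, which is the slack that enables the charging.

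Given this, I will set $\beta_{xy,x} = \beta_{yz,z} = \tfrac{2}{3}$ and $\beta_{xy,y} = \beta_{yz,y} = \tfrac{1}{3}$ for every $xyz \in P^*_1$, and $\beta_{yz,y} = \tfrac{1}{3}$, $\beta_{yz,z} = \tfrac{2}{3}$ for every $xyz \in P^*_2$ with $xy \in M_0$. A brief case analysis will then confirm that every vertex $u$ with $e(u)\notin M_0$ accumulates incoming load exactly $\tfrac{2}{3}$ (either from one incident $E_1\cup E_2$ edge with share $\tfrac{2}{3}$ if $u$ has $P^*$-degree $1$, or from two edges with shares $\tfrac{1}{3}+\tfrac{1}{3}$ if $u$ is the middle of a $P^*_1$ path), every middle vertex $y$ with $e(y)\in M_0$ accumulates $\tfrac{1}{3}$, and every path-endpoint $x$ with $e(x)\in M_0$ accumulates $0$. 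Hence every $e \in M^*_{n/2}\setminus M_0$ is loaded by $\tfrac{2}{3}+\tfrac{2}{3} = \tfrac{4}{3}$, and every $e \in M_0$ is loaded by $1+\tfrac{1}{3}+0 = \tfrac{4}{3}$, meeting the budget with equality. Summing and combining with Lemma~\ref{lb4} yields the claim.

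The real work is discovering the $\tfrac{1}{3}$--$\tfrac{2}{3}$ split, and after that the proof is routine. Because the self-charge $1$ at each $e\in M_0$ leaves only $\tfrac{1}{3}$ of budget for the unique incident edge $yz \in E_2$ (touching the middle $y$ of the corresponding $P^*_2$ 3-path), we are forced to take $\beta_{yz,y}\le \tfrac{1}{3}$, hence $\beta_{yz,z}\ge \tfrac{2}{3}$; for the budget $\tfrac{4}{3}$ to be met simultaneously at every edge of $M^*_{n/2}\setminus M_0$, the same $\tfrac{1}{3}$-middle / $\tfrac{2}{3}$-endpoint pattern must be used on $P^*_1$ paths as well. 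Once this split is identified, verifying the per-edge inequality above is a mechanical case check on the two 3-path types.
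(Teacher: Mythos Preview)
Your proof is correct and is essentially the same charging argument as the paper's: the paper assigns, for each $xy\in E_1\cup E_2$ with $x$ of $P^*$-degree~$1$ and $y$ of $P^*$-degree~$2$, a charge of $\tfrac{2}{3}$ to $e_x$ and $\tfrac{1}{3}$ to $e_y$, then totals the charge per matching edge ($\tfrac{4}{3}$ off $M_0$, $\tfrac{1}{3}$ on $M_0$), which is exactly your per-vertex load computation viewed edgewise. One small inaccuracy that does not affect your argument: for a 3-path $xyz\in P^*_1$ the edges $e(x),e(y),e(z)$ need not be distinct (one may have $e(x)=e(z)=xz\in M^*_{n/2}$); your per-vertex load bound still gives $\tfrac{2}{3}+\tfrac{2}{3}=\tfrac{4}{3}$ on such an edge, so the conclusion stands.
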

\begin{proof}
By definition, we have $E(P^*)\setminus E(M^*_{n/2}) = E_1\cup E_2$ and $E(P^*)\cap E(M^*_{n/2}) = E(P^*_2)\cap E(M^*_{n/2})$. Hence, $E(P^*) = E_1\cup E_2\cup (E(P^*_2)\cap E(M^*_{n/2}))$, and it suffices to prove
\[
c(E_1)+c(E_2)\geq w(E_1) + w(E_2) + w(E(P^*_2)\cap E(M^*_{n/2})) - \frac{4}{3}w(M^*_{n/2}).
\]

For any edge $xy\in E_1\cup E_2$, there are two edges $e_x,e_y\in M^*_{n/2}$ corresponding to the endpoints of $xy$ in $G/M^*_{n/2}$. By Lemma~\ref{lb4}, it holds that $c(xy)\geq w(xy)-(\theta\cdot w(e_x)+(1-\theta)\cdot w(e_y))$ for any $0\leq \theta\leq 1$. We say that $xy$ charges $\theta$ point from $e_x$ and $1-\theta$ point from $e_y$. Next, we design a charging rule so that each edge in $E_1\cup E_2$ receives 1 point in total.

A vertex is called an \emph{$i$-degree vertex} if it is incident to $i$ edges in $E(P^*)$. For any edge $xy\in E_1\cup E_2$, there exist exactly one 2-degree vertex and one 1-degree vertex in $\{x,y\}$. Assume w.l.o.g. that $x$ is a 1-degree vertex and $y$ is a 2-degree vertex. We let $xy$ charge $\frac{2}{3}$ point from $e_x$ and $\frac{1}{3}$ point from $e_y$. See Fig.\ref{fig:fig02} for an illustration.
\begin{figure}[t]
    \centering
    \includegraphics[scale=0.58]{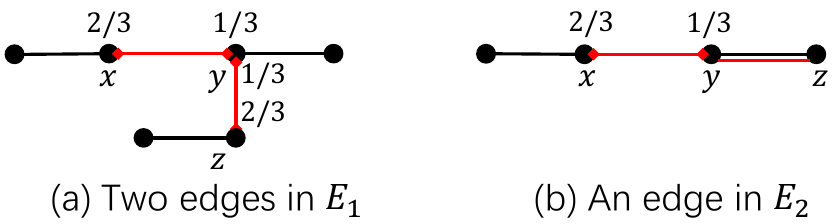}
    \caption{An illustration of the charging rule, where the edges in $M^*_{n/2}$ are represented by black edges and the edges in $E(P^*)$ are represented by red edges: In (a), we have a 3-path $xyz$ with $xy,yz\in E_1$, $e_x$, $e_y$ and $e_z$ are all charged by $\frac{2}{3}$ point by $xy$ and $yz$ in total; In (b), we have a 3-path $xyz$ with $xy\in E_2$, $e_x$ is charged by $\frac{2}{3}$ point by $xy$, and $e_y$ is charged by $\frac{1}{3}$ point by $xy$.}
    \label{fig:fig02}
\end{figure}

Assume that for any edge $e\in M^*_{n/2}$ it is charged by $\eta_e$ points in total.
Then, by Lemma~\ref{lb4}, we get $c(E_1)+c(E_2)\geq w(E_1)+w(E_2)-\sum_{e\in E(M^*_{n/2})}\eta_e\cdot w(e)$.
Therefore, we only need to calculate $\eta_e$ for each $e\in M^*_{n/2}$. We consider the following two cases.

\textbf{Case~1: $e=xy\in E(M^*_{n/2})\setminus E(P^*)$.}
Note that $x$ is either a 1-degree vertex or a 2-degree vertex. If it is a 1-degree vertex, it is charged by $\frac{2}{3}$ point by the edge of $E_1\cup E_2$ incident to $x$; otherwise, it is charged by $\frac{2}{3}$ point by two edges of $E_1\cup E_2$ incident to $x$. Hence, $xy$ is charged by $\frac{2}{3}$ point in total by the edges of $E_1\cup E_2$ incident to $x$. Similarly, $xy$ is charged by $\frac{2}{3}$ point in total by the edges of $E_1\cup E_2$ incident to $y$. So, $xy$ is charged by $\frac{4}{3}$ points in total.

\textbf{Case~2: $e\in E(M^*_{n/2})\cap E(P^*)$.}
It is easy to see that $e$ is charged by $\frac{1}{3}$ point by only one edge in $E_2$ (see example (b) in Fig.\ref{fig:fig02}).

Therefore, we get that 
\begin{align*}
\sum_{e\in E(M^*_{n/2})}\eta_e\cdot w(e) &= \sum_{e\in E(M^*_{n/2})\setminus E(P^*)}\frac{4}{3}w(e)+\sum_{e\in E(M^*_{n/2})\cap E(P^*)}\frac{1}{3}w(e)\\
&=\sum_{e\in E(M^*_{n/2})}\frac{4}{3}w(e)-\sum_{e\in E(M^*_{n/2})\cap E(P^*)}w(e)\\
&=\frac{4}{3}w(M^*_{n/2})-w(E(P^*_2)\cap E(M^*_{n/2})).
\end{align*}
Hence, we get $c(E_1)+c(E_2)\geq w(E_1) + w(E_2) + w(E(P^*_2)\cap E(M^*_{n/2})) - \frac{4}{3}w(M^*_{n/2})$.
\end{proof}
 
We are ready to obtain a bound related to $c(M^{**}_{n/6})$.
\begin{lemma}\label{lb6}
It holds that $c(M^{**}_{n/6})\geq \frac{1}{4}c(E_1)+\frac{1}{2}c(E_2)= \frac{1}{4}c(E(P^*)\setminus E(M^*_{n/2}))+\frac{1}{4}c(E_2)$.
\end{lemma}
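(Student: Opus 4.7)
The plan is to exhibit a matching of size $n/6$ in the multigraph $G/M^*_{n/2}$ whose cost is at least $\tfrac{1}{4}c(E_1)+\tfrac{1}{2}c(E_2)$. Since $M^{**}_{n/6}$ is, by definition, the maximum-cost matching of size $n/6$, the inequality of the lemma will then follow. The candidate object I would use is the fractional matching $x$ on $G/M^*_{n/2}$ defined by $x(e)=\tfrac{1}{4}$ for $e\in E_1$, $x(e)=\tfrac{1}{2}$ for $e\in E_2$, and $x(e)=0$ otherwise. Before anything else I would point out that every edge of $E_1\cup E_2$ really is an edge of $G/M^*_{n/2}$: by construction $E_1,E_2\subseteq E(P^*)\setminus E(M^*_{n/2})$, so the endpoints of each such edge must live in different super-vertices.

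Next I would verify three properties of $x$. Its total mass is $\sum_e x(e)=\tfrac{2|P^*_1|}{4}+\tfrac{|P^*_2|}{2}=\tfrac{|P^*_1|+|P^*_2|}{2}=\tfrac{n}{6}$. Its cost is $\sum_e x(e)c(e)=\tfrac{1}{4}c(E_1)+\tfrac{1}{2}c(E_2)$ by construction. For feasibility at a super-vertex $S=\{u,v\}$ with $uv\in M^*_{n/2}$, I would mirror the case split used in the proof of Lemma~\ref{lb5}. If $uv\in M^*_{n/2}\cap E(P^*)$ (Case~2), then $uv$ sits in a 3-path of $P^*_2$ whose other edge, lying in $E_2$, is the unique edge of $E_1\cup E_2$ incident to $S$; so the $x$-degree at $S$ is exactly $\tfrac{1}{2}$. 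If $uv\in M^*_{n/2}\setminus E(P^*)$ (Case~1), then by the same exclusions as in Lemma~\ref{lb5} each of $u,v$ independently contributes at most $\tfrac{1}{2}$ to the $x$-degree: a middle of a $P^*_1$-path contributes $2\cdot\tfrac{1}{4}=\tfrac{1}{2}$, an endpoint of a $P^*_1$-path contributes $\tfrac{1}{4}$, and an endpoint of a $P^*_2$-path whose $M^*_{n/2}$-edge is not incident to it contributes $\tfrac{1}{2}$. Therefore $\sum_{e\ni S}x(e)\le \tfrac{1}{2}+\tfrac{1}{2}=1$ in every case, so $x$ is a feasible fractional matching.

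The main obstacle is then to upgrade the fractional matching $x$ into an integral matching of size exactly $n/6$ without losing cost. The graph $G/M^*_{n/2}$ is not bipartite in general, so the matching polytope's LP relaxation can have an odd-set integrality gap and feasibility of $x$ is not by itself enough. My plan is to exploit two structural facts extracted during the feasibility check: every $E_2$-edge joins a Case~2 super-vertex to a Case~1 super-vertex, while every $E_1$-edge has both endpoints in the Case~1 part; and the multiset obtained by doubling each $E_2$-edge has $2|P^*_1|+2|P^*_2|=2n/3$ edges and maximum degree $4$. Using these I would attempt an explicit decomposition of this doubled multiset into four matchings of average size $n/6$ and total cost $c(E_1)+2c(E_2)$, so that the heaviest colour class realises the desired bound; if the colouring cannot be arranged to produce size exactly $n/6$, I would supplement or trim it using the structure that forces $x$-tightness only where the case analysis above attains its maximum of $1$. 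Should a clean decomposition prove elusive, the fallback is a coupled randomised construction that selects each $E_1$-edge with marginal probability $\tfrac{1}{4}$ and each $E_2$-edge with marginal probability $\tfrac{1}{2}$ in such a way that the output is always a matching of cardinality exactly $n/6$; the expected cost is then $\tfrac{1}{4}c(E_1)+\tfrac{1}{2}c(E_2)$ and the probabilistic method finishes the proof.
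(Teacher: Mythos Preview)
Your fractional-matching setup is correct and the feasibility check is right, but the rounding step is a genuine gap. Plan~A (4-edge-colour the multiset $H=E_1\cup 2E_2$) does not go through as stated: maximum degree~$4$ does not guarantee a proper $4$-edge-colouring of a multigraph (the chromatic index can reach $\lfloor 3\Delta/2\rfloor$), and you give no structural argument ruling out the obstructions here; even granting a $4$-colouring, nothing forces the colour classes to have size exactly $n/6$, and since edge costs in $G/M^*_{n/2}$ can be negative you cannot freely pad or trim. Plan~B (the ``coupled randomised construction'') is just a restatement of the claim that $x$ lies in the convex hull of size-$n/6$ integral matchings---precisely the thing that needs proving, and the odd-set/blossom obstructions you already flagged still stand.

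The paper avoids this by a different, combinatorial route. It selects one edge from each $3$-path of $P^*$---the higher-$c$ edge from each $P^*_1$-path, and the unique non-$M^*_{n/2}$ edge from each $P^*_2$-path---obtaining a set $E'$ of $n/3$ vertex-disjoint edges in $G$ with $c(E')\ge\frac{1}{2}c(E_1)+c(E_2)$. Since $E'$ is a matching in $G$, it has maximum degree~$2$ in $G/M^*_{n/2}$, so $E'\cup M^*_{n/2}$ is a disjoint union of paths and cycles. The key technical step, absent from your proposal, is to eliminate the cycles: every red edge on a cycle must lie in $E_1$ (each $E_2$-edge has a Case-2 endpoint of degree one, hence cannot sit on a cycle), and for each cycle one swaps the selected edge of a carefully chosen $P^*_1$-path (the one minimising $c(xy)-c(yz)$ along the cycle) to break the cycle while preserving $c(E'')\ge\frac{1}{2}c(E_1)+c(E_2)$. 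After this, $E''$ is a path system with $n/3$ edges in $G/M^*_{n/2}$; since $n/3$ is even it decomposes into two matchings of size exactly $n/6$, and the heavier of the two gives the bound. This cycle-elimination argument is the missing idea you would need to make your approach (or any approach) land.
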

\begin{proof}
We will construct a matching $M'_{n/6}$ of size $n/6$ in $G/M^*_{n/2}$ using the edges in $E_1\cup E_2$ such that $c(M'_{n/6})\geq \frac{1}{4}c(E_1)+\frac{1}{2}c(E_2)$.

Recall that $P^*=P^*_1\cup P^*_2$ and $E(P^*_1)\cap E(M^*_{n/2})=\emptyset$.
We first select $n/3$ edges in $E_1\cup E_2$ as follows. 
For each 3-path $xyz\in P^*_1$ with $c(xy)\geq c(yz)$, we select $xy$; for each 3-path $xyz\in P^*_2$ with $xy\notin E(M^*_{n/2})$, we select $xy$. We denote the set of these edges by $E'$. It holds that $c(E')\geq \frac{1}{2}c(E_1)+c(E_2)$ because $E_1=E(P^*_1)$ and $E_2= E(P^*_2)\setminus E(M^*_{n/2})$. Since $E'$ contains only one edge from each 3-path in $P^*$, these edges are vertex-disjoint. Moreover, since these edges are distinct from the edges in $E(M^*_{n/2})$, we know that $E'\cup E(M^*_{n/2})$ forms a set of vertex-disjoint paths and cycles in $G$. Denote the set of the paths (resp., cycles) by $P_{E'}$ (resp., $C_{E'}$).
As proved in~\cite{DBLP:journals/dam/Bar-NoyPRV18}, the cycles in $C_{E'}$ can be eliminated by updating the edges in $E_1\cap E'$ in a way satisfying that $E'\cup E(M^*_{n/2})$ forms a set of paths only, while maintaining $E'\cap E(M^*_{n/2})=\emptyset$, $\size{E'}=n/3$, and $c(E')\geq \frac{1}{2}c(E_1)+c(E_2)$.
We give a self-contained proof.

Let $e_v$ be the edge in $M^*_{n/2}$ containing $v$ as an endpoint.
We color the edges in $E'$ by red and the edges in $M^*_{n/2}$ by blue.
Assume that there is a cycle $C\in C_{E'}$. 
On the cycle $C$, the number of red edges, which is equal to the number of blue edges, is at least 2. This is because $E' \cap E(M^*_{n/2}) = \emptyset$, the edges in $E'$ are vertex-disjoint, and the edges in $E(M^*_{n/2})$ are also vertex-disjoint.
For each red edge $xy\in E(C)$, we assume that there is a 3-path $xyz\in P^*$. It is easy to see that $xy,yz\in E_1$; otherwise, we have $xy\in E_2$, $e_y=e_z=yz$, and the blue edge $yz$ will be incident to only one red edge in $E'\cup E(M^*_{n/2})$, a contradiction. Hence, we have $E'\cap E(C)=E_1\cap E(C)$. Moreover, we get that the blue edge $e_z$ is an end edge (i.e., the first or the last edge) of a path $P\in P_{E'}$. See Fig.\ref{fig:eliminate} for an illustration. 
\begin{figure}[t]
    \centering
    \includegraphics[scale=0.58]{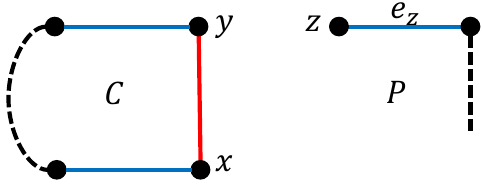}
    \caption{An illustration of a cycle $C\in C_{E'}$ and a 3-path $xyz\in P^*$, where $xy\in E(C)$, $e_z$ is an end edge of a path $P\in P_{E'}$, the edges in $M^{*}_{n/2}$ are colored blue, the edges in $E'$ are colored red, and the omitted edges in $E(C)\cup E(P)$ are represented by dotted black edges.}
    \label{fig:eliminate}
\end{figure}
Note that if we color $yz$ with red instead of $xy$, i.e., we select $yz$ instead of $xy$ when selecting $n/3$ edges in $E_1\cup E_2$, the cycle $C$ will be eliminated.

For each cycle $C\in C_{E'}$, we assume that $x_Cy_Cz_C\in P^*$ and $x_Cy_C$ is the red edge attaining 
\[
\arg\min_{xy\in E(C)\cap E',xyz\in P^*}\{c(xy)-c(yz)\}.
\]
Let $E''\coloneqq E'\setminus\{x_Cy_C\mid C\in C_{E'}\}\cup\{y_Cz_C\mid C\in C_{E'}\}$. Note that $E''$ is obtained by updating the edges in $E_1\cap E'$ and $E''$ still contains $n/3$ edges. It is easy to see that $E''\cup E(M^*_{n/2})$ forms a set of paths only. Next, we show that $c(E'')\geq \frac{1}{2}c(E_1)+c(E_2)$.

Let $E'_1\coloneqq\bigcup_{C\in C_{E'}}E'\cap E(C)$ and $E''_1\coloneqq E'\cap E_1\setminus E'_1$. Then, we have $E'\cap E_1=E'_1\cup E''_1$. Recall that $E'=E'\cap E_1\cup E'\cap E_2$ and  $E'\cap E_2=E_2$. We have $E'=E'_1\cup E''_1\cup E_2$. Note that $E(C)\cap E'=\{xy\mid xy\in E'\cap E(C),xyz\in P^*\}$.
Then, we have
\begin{align*}
c(E'')&=c(E')+\sum_{C\in C_{E'}}(c(y_Cz_C)-c(x_Cy_C))\\
&=c(E'_1)+c(E''_1)+c(E_2)+\sum_{C\in C_{E'}}(c(y_Cz_C)-c(x_Cy_C))\\
&=\sum_{C\in C_{E'}}\lrA{\sum_{\substack{xy\in E'\cap E(C),\\xyz\in P^*}}c(xy)+(c(y_Cz_C)-c(x_Cy_C))}+c(E''_1)+c(E_2)\\
&\geq\sum_{C\in C_{E'}}\lrA{\sum_{\substack{xy\in E'\cap E(C),\\xyz\in P^*}}c(xy)+(c(y_Cz_C)-c(x_Cy_C))}+\sum_{\substack{xy\in E''_1,\\xyz\in P^*}}\frac{1}{2}c(xyz)+c(E_2),
\end{align*}
where the inequality follows from $c(xy)\geq \frac{1}{2}c(xyz)$ since by the selection of $xy\in E'_1\cup E''_1$ with $xyz\in P^*$ we have $c(xy)\geq c(yz)$.

For each cycle $C\in C_{E'}$, there are at least two red edges in $E(C)$, and hence there exists another red edge $x'y'\in E(C)\cap E'$ and a 3-path $x'y'z'\in P^*$ such that $x'y'\neq x_Cy_C$ and $c(x_Cy_C)-c(y_Cz_C)\leq c(x'y')-c(y'z')$. Then, we have 
\begin{align*}
c(x'y') + c(y_Cz_C)&\geq \frac{1}{2}(c(x_Cy_C)+c(y_Cz_C)+c(x'y')+c(y'z'))\\
&=\frac{1}{2}(c(x_Cy_Cz_C)+c(x'y'z')).
\end{align*}
Hence, we have 
\begin{align*}
&\sum_{\substack{xy\in E'\cap E(C),\\xyz\in P^*}}c(xy)+(c(y_Cz_C)-c(x_Cy_C))\\
&=\sum_{\substack{xy\in E'\cap E(C)\setminus\{x_Cy_C,x'y'\},\\xyz\in P^*} }c(xy)+(c(x_Cy_C)+c(x'y')+c(y_Cz_C)-c(x_Cy_C))\\
&\geq \sum_{\substack{xy\in E'\cap E(C)\setminus\{x_Cy_C,x'y'\},\\xyz\in P^*}}\frac{1}{2}c(xyz)+\frac{1}{2}(c(x_Cy_Cz_C)+c(x'y'z'))=\sum_{\substack{xy\in E'\cap E(C),\\xyz\in P^*}}\frac{1}{2}c(xyz),
\end{align*}
where the inequality follows from $c(xy)\geq \frac{1}{2}c(xyz)$ since $E'\cap E(C)\subseteq E'_1$ and by the selection of $xy\in E'_1\cup E''_1$ with $xyz\in P^*$ we have $c(xy)\geq c(yz)$.

Then, we get 
\begin{align*}
c(E'')&\geq \sum_{C\in C_{E'}}\lrA{\sum_{\substack{xy\in E'\cap E(C),\\xyz\in P^*}}\frac{1}{2}c(xyz)}+\sum_{\substack{xy\in E''_1,\\xyz\in P^*}}\frac{1}{2}c(xyz)+c(E_2)\\
&=\sum_{\substack{xy\in E'_1,\\xyz\in P^*}}\frac{1}{2}c(xyz)+\sum_{\substack{xy\in E''_1,\\xyz\in P^*}}\frac{1}{2}c(xyz)+c(E_2)\\
&=\sum_{\substack{xy\in E'\cap E_1,\\xyz\in P^*}}\frac{1}{2}c(xyz)+c(E_2)=\frac{1}{2}c(E_1)+c(E_2),
\end{align*}
where the last equality follows from the fact that $E'\cap E_1$ is obtained by selecting an edge from each 3-path in $P^*_1$ and $E(P^*_1)=E_1$.

Therefore, given the optimal solution, we can construct a set of edges $E''$ in a way satisfying that $E''\cap E(M^*_{n/2})=\emptyset$ and $E''\cup E(M^*_{n/2})$ is a set of paths with $c(E'')\geq \frac{1}{2}c(E_1)+c(E_2)$ and $\size{E''}=n/3$. Hence, $E''$ is a set of paths in $G/M^*_{n/2}$, and it can be decomposed into two matchings of size $n/6$, as $n/3$ is even. Therefore, there exists a matching $M'$ of size $n/6$ in $G/M^*_{n/2}$ with a cost of at least $\frac{1}{2}c(E'')\geq \frac{1}{4}c(E_1)+\frac{1}{2}c(E_2)$. We get $c(M^{**}_{n/6})\geq c(M')\geq \frac{1}{4}c(E_1)+\frac{1}{2}c(E_2)$ by the optimality of $M^{**}_{n/6}$.
By the proof of Lemma~\ref{lb5}, we have $c(E_1)+c(E_2)=c(E(P^*)\setminus E(M^*_{n/2}))$, and hence $c(M^{**}_{n/6})\geq \frac{1}{4}c(E_1)+\frac{1}{2}c(E_2)= \frac{1}{4}c(E(P^*)\setminus E(M^*_{n/2}))+\frac{1}{4}c(E_2)$.
\end{proof}

Then, we are ready to analyze the approximation quality of Alg.1.
\begin{lemma}\label{lb8}
It holds that $w(P_1)\geq\frac{2}{3}w(M^*_{n/2})+\frac{1}{2}\cdot\OPT-\frac{1}{2}\sum_{xyz\in P^*}\max\{w(xy),w(yz)\}\geq \frac{7}{12}\cdot\OPT$.
\end{lemma}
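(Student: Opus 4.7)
The plan is to chain together the three previously established bounds on $w(P_1)$, $c(M^{**}_{n/6})$, and $c(E(P^*)\setminus E(M^*_{n/2}))$ from Lemmas~\ref{lb3},~\ref{lb6}, and~\ref{lb5}, and to supplement them with a new direct lower bound on $c(E_2)$ that is sensitive to the asymmetry $\max\{w(xy),w(yz)\}$ inside each optimal 3-path. Concretely, Lemma~\ref{lb3} gives $w(P_1)\ge w(M^*_{n/2})+c(M^{**}_{n/6})$; substituting the decomposition from Lemma~\ref{lb6} produces $c(M^{**}_{n/6})\ge \tfrac14 c(E(P^*)\setminus E(M^*_{n/2}))+\tfrac14 c(E_2)$, and then Lemma~\ref{lb5} replaces the first term by $\tfrac14\bigl(\OPT-\tfrac43 w(M^*_{n/2})\bigr)$. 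Putting these together yields
\[
w(P_1)\ \ge\ \tfrac{2}{3}w(M^*_{n/2})+\tfrac{1}{4}\OPT+\tfrac{1}{4}c(E_2),
\]
so the remaining task for the first inequality is to show
\[
c(E_2)\ \ge\ \OPT-2\sum_{xyz\in P^*}\max\{w(xy),w(yz)\}.
\]

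For the bound on $c(E_2)$, I would argue edge-by-edge on $P^*_2$. Every 3-path $xyz\in P^*_2$ contributes exactly one edge to $E_2$; by relabelling assume it is $xy$ and that $yz\in E(M^*_{n/2})$. Then by the definition of the cost, $c(xy)=w(xy)-\min\{w(e_x),w(yz)\}\ge w(xy)-w(yz)$, and since $w(xy)-w(yz)\ge -|w(xy)-w(yz)|=w(xy)+w(yz)-2\max\{w(xy),w(yz)\}$, summing over $P^*_2$ gives $c(E_2)\ge w(P^*_2)-2\sum_{xyz\in P^*_2}\max\{w(xy),w(yz)\}$. To extend the sum from $P^*_2$ to all of $P^*$, I would add and subtract the contribution of $P^*_1$, using the elementary inequality $w(xyz)\le 2\max\{w(xy),w(yz)\}$ on each 3-path in $P^*_1$; this absorbs $w(P^*_1)$ into the extra $\max$-terms and turns the bound into $c(E_2)\ge \OPT-2\sum_{xyz\in P^*}\max\{w(xy),w(yz)\}$.

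Plugging this into the earlier inequality immediately yields the first claimed bound
\[
w(P_1)\ \ge\ \tfrac{2}{3}w(M^*_{n/2})+\tfrac{1}{2}\OPT-\tfrac{1}{2}\sum_{xyz\in P^*}\max\{w(xy),w(yz)\}.
\]
For the second inequality I would simply apply Lemma~\ref{lb1} twice: first use $\sum_{xyz\in P^*}\max\{w(xy),w(yz)\}\le w(M^*_{n/2})$ to reduce the right-hand side to $\tfrac{1}{6}w(M^*_{n/2})+\tfrac{1}{2}\OPT$, then use $w(M^*_{n/2})\ge \tfrac{1}{2}\OPT$ to conclude $w(P_1)\ge \tfrac{1}{12}\OPT+\tfrac{1}{2}\OPT=\tfrac{7}{12}\OPT$. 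Notice that this final step balances two opposing regimes (the one where $\sum\max$ is close to $w(M^*_{n/2})$ versus the one where it is close to $\tfrac12\OPT$), both of which turn out to be tight at $\tfrac{7}{12}\OPT$ once $w(M^*_{n/2})=\tfrac12\OPT$; this coincidence is exactly the signature worst case that motivates the second algorithm in the paper. The only genuinely delicate step is the edge-by-edge bound on $c(E_2)$, because the cost $c$ can be negative and the minimum in its definition needs to be handled without any assumption on how $w(e_x)$ compares to $w(yz)$; the rest is bookkeeping and direct substitution.
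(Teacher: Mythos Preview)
Your proposal is correct and follows essentially the same route as the paper: both combine Lemmas~\ref{lb3}, \ref{lb5}, and \ref{lb6} to reach $w(P_1)\ge \tfrac23 w(M^*_{n/2})+\tfrac14\OPT+\tfrac14 c(E_2)$, then bound $c(E_2)$ edge-by-edge over $P^*_2$ using $c(xy)\ge w(xy)-w(yz)$ (the paper states this as $c(E_2)\ge\sum_{xyz\in P^*}(\min-\max)$, which is algebraically identical to your $c(E_2)\ge \OPT-2\sum\max$), and finally apply Lemma~\ref{lb1} twice exactly as you do.
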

\begin{proof}
Firstly, we prove that
\begin{equation}\label{lb-e2}
c(E_2)\geq \sum_{xyz\in P^*}(\min\{w(xy),w(yz)\}-\max\{w(xy),w(yz)\}).    
\end{equation}

For each 3-path $xyz\in P^*_2$, we assume that $xy\notin M^*_{n/2}$.
There are two edges $e_x$, $e_y\in M^*_{n/2}$ corresponding to the endpoints of $xy$ in $G/M^*_{n/2}$. Note that $e_y=yz$.
By Lemma~\ref{lb4}, we have 
\begin{align*}
c(E_2)&=\sum_{xyz\in P^*_2}c(xy)\geq \sum_{xyz\in P^*_2}(w(xy)-w(yz))\\
&\geq \sum_{xyz\in P^*_2}(\min\{w(xy),w(yz)\}-\max\{w(xy),w(yz)\})\\
&\geq \sum_{xyz\in P^*}(\min\{w(xy),w(yz)\}-\max\{w(xy),w(yz)\}), 
\end{align*}
where the first inequality follows from $c(xy)=w(xy)-\min\{w(e_x),w(e_y)\}\geq w(xy)-w(e_y)=w(xy)-w(yz)$, and the last from the fact that for any 3-path $xyz$ we have $\min\{w(xy),w(yz)\}-\max\{w(xy),w(yz)\}\leq 0$.

Thus, (\ref{lb-e2}) holds.

Then, we have that 
\begin{align*}
&w(P_1)\\
&\geq w(M^*_{n/2})+c(M^{**}_{n/6})\\
&\geq w(M^*_{n/2})+\frac{1}{4}c(E(P^*)\setminus E(M^*_{n/2}))+\frac{1}{4}c(E_2)\\
&\geq w(M^*_{n/2})+\frac{1}{4}\lrA{w(P^*)-\frac{4}{3}w(M^*_{n/2})}+\frac{1}{4}\sum_{xyz\in P^*}(\min\{w(xy),w(yz)\}-\max\{w(xy),w(yz)\})\\
&=\frac{2}{3}w(M^*_{n/2})+\frac{1}{4}\cdot\OPT+\frac{1}{4}\sum_{xyz\in P^*}(\min\{w(xy),w(yz)\}-\max\{w(xy),w(yz)\})\\
&=\frac{2}{3}w(M^*_{n/2})+\frac{1}{4}\cdot\OPT+\frac{1}{4}\sum_{xyz\in P^*}(\min\{w(xy),w(yz)\}+\max\{w(xy),w(yz)\}-2\max\{w(xy),w(yz)\})\\
&=\frac{2}{3}w(M^*_{n/2})+\frac{1}{4}\cdot\OPT+\frac{1}{4}\cdot\OPT-\frac{1}{2}\sum_{xyz\in P^*}\max\{w(xy),w(yz)\}\\
&\geq \frac{1}{6}w(M^*_{n/2})+\frac{1}{2}\sum_{xyz\in P^*}\max\{w(xy),w(yz)\}+\frac{1}{2}\cdot\OPT-\frac{1}{2}\sum_{xyz\in P^*}\max\{w(xy),w(yz)\}\\
&\geq \frac{1}{6}\cdot\frac{1}{2}\cdot\OPT+\frac{1}{2}\cdot\OPT=\frac{7}{12}\cdot\OPT,
\end{align*}
where the first inequality follows from Lemma~\ref{lb3}, the second from Lemma~\ref{lb6}, the third from Lemma~\ref{lb5} and (\ref{lb-e2}), the fourth from $w(M^*_{n/2})\geq \sum_{xyz\in P^*}\max\{w(xy),w(yz)\}$ by Lemma~\ref{lb1}, the last from $w(M^*_{n/2})\geq \frac{1}{2}\cdot\OPT$ by Lemma~\ref{lb1}.
\end{proof}

Note that the middle result $w(P_1)\geq\frac{2}{3}w(M^*_{n/2})+\frac{1}{2}\cdot\OPT-\frac{1}{2}\sum_{xyz\in P^*}\max\{w(xy),w(yz)\}$ in Lemma~\ref{lb8} will be used in our final analysis.
In the proof of Lemma~\ref{lb8}, we use the inequality $w(M^*_{n/2})\geq \frac{1}{2}\cdot\OPT$. However, we also have $w(M^*_{n/2})\geq w(M^*_{n/3})\geq\frac{1}{2}\cdot\OPT$ by Lemma~\ref{lb1}, which implies that in the worst case we may have $w(M^*_{n/2})=w(M^*_{n/3})$. Thus, it may be more effective to use $M^*_{n/3}$. 
Next, we design our second algorithm based on $M^*_{n/3}$.

\section{The Second Algorithm}\label{Second}
Our second algorithm is based on $M^*_{n/3}$, and we denote it by Alg.2. It contains four steps. 

\medskip
\noindent\textbf{Step~1.} Find $M^*_{n/3}$, a maximum weight matching of size $n/3$ in the graph $G$.

\noindent\textbf{Step~2.} Construct a multi-graph $G/M^*_{n/3}$ such that there are $n/3$ super-vertices one-to-one corresponding to the $n/3$ edges in $M^*_{n/3}$ and $n/3$ vertices in $V\setminus V(M^*_{n/3})$. We define the cost function $c$ on the edges in $G/M^*_{n/3}$ as follows:
\begin{itemize}
    \item For two super-vertices corresponding to edges $ux,yz\in M^*_{n/3}$, there are four edges $uy,uz,xy,xz$ between them, and each edge $e\in\{uy,uz,xy,xz\}$ is defined to have a cost of $c(e)\coloneqq w(e)-\min\{w(ux),w(yz)\}$.
    \item For one super-vertex corresponding to one edge $xy\in M^*_{n/3}$ and one vertex $z\in V\setminus V(M^*_{n/3})$, there are two edges $xz,yz$ between them, and each edge $e\in\{xz,yz\}$ is defined to have a cost of $c(e)\coloneqq w(e)$.
    \item For two vertices $x,y\in V\setminus V(M^*_{n/3})$, there is one edge $e=xy$ between them, and $e$ is defined to have a cost of $c(e)\coloneqq -\infty$. 
\end{itemize}

\noindent\textbf{Step~3.} Find a maximum cost matching $M^{**}$ of size at most $n/3$ in the graph $G/M^*_{n/3}$. 

\noindent\textbf{Step~4.} Obtain a 3-path packing $P_2$ of $G$ as follows:
\begin{itemize}
    \item For each edge $xy\in M^{**}$, 
    if there are two edges $e_x,e_y\in M^*_{n/3}$ corresponding to the endpoints of $xy$ in $G/M^*_{n/3}$, where we assume that $w(e_x)\geq w(e_y)$, obtain a 3-path using edges $xy$ and $e_x$, and call the vertex of $e_y$ that is not incident to $xy$ a \emph{residual vertex}.
    \item For each edge $xy\in M^{**}$, 
    if there is an edge $e_x\in M^*_{n/3}$ corresponding to the endpoint $x$ in $G/M^*_{n/3}$ and a vertex $y\in V\setminus V(M^*_{n/3})$, obtain a 3-path using edges $xy$ and $e_x$.
    \item For each vertex in $V\setminus V(M^*_{n/3})$, call it a \emph{residual vertex} if it is not matched by $M^{**}$. Arbitrarily create vertex-disjoint 2-paths using all residual vertices.
\end{itemize}
\medskip

In Step 3 of Alg.2, we do not find a maximum cost matching $M^{**}_{n/3}$ of size $n/3$. The reason is that we will construct a matching $M'$ in $G/M^*_{n/3}$ whose size may not be $n/3$, and then obtain a bound $c(M^{**})\geq c(M')$ by the optimality of $M^{**}$, while we may have $c(M^{**})>c(M^{**}_{n/3})$ since the cost function may be negative, and then we cannot get $c(M^{**}_{n/3})\geq c(M')$. 
Moreover, it holds that $c(M^{**})\geq 0$ since an empty matching has a cost of 0 and $M^{**}$ is a maximum cost matching. 

An illustration of Alg.2 can be seen in Fig.\ref{fig:fig03}.
\begin{figure}[t]
    \centering
    \includegraphics[scale=0.58]{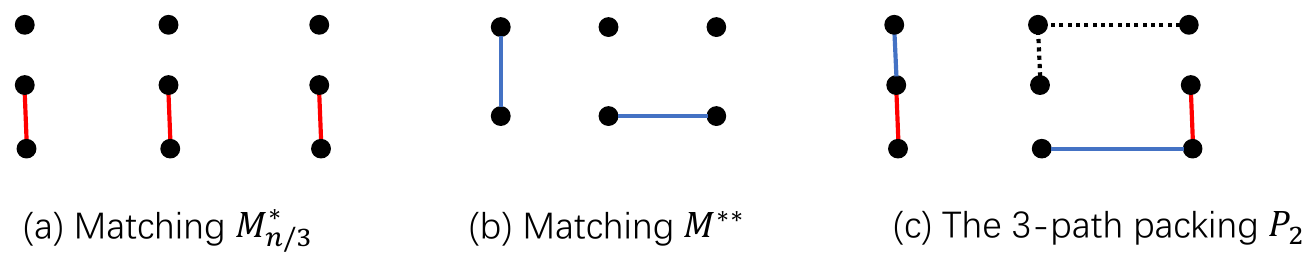}
    \caption{An illustration of Alg.2: In (a), $M^*_{n/3}$ contains three red edges; In (b), $M^{**}$ contains two blue edges; In (c), $G$ contains three residual vertices, and Alg.2 outputs three 3-paths.}
    \label{fig:fig03}
\end{figure}

\subsection{The analysis}
First, we consider the quality of the 3-path packing $P_2$ returned by Alg.2. For any $v\in V(M^*_{n/3})$, in this section, we let $e_v$ be the edge in $M^*_{n/3}$ containing $v$ as an endpoint. 
\begin{lemma}\label{lb10}
It holds that $w(P_2)\geq w(M^*_{n/3})+c(M^{**})$.
\end{lemma}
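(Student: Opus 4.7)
The plan is to adapt the proof of Lemma~\ref{lb3} to the more heterogeneous setting of Alg.2, where $M^{**}$ can contain edges of two different kinds (while the third, cost-$-\infty$ kind never appears). The main idea is to trace the weight of every 3-path of $P_2$ and show that the total telescopes to $w(M^*_{n/3})+c(M^{**})$.

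First, I would partition $M^*_{n/3}$ into $M'$, the edges whose corresponding super-vertex in $G/M^*_{n/3}$ is matched by $M^{**}$, and $M''=M^*_{n/3}\setminus M'$. I would likewise split $M^{**}$ into two classes: \emph{type-A} edges joining two super-vertices of $G/M^*_{n/3}$ (covering two elements of $M'$ per edge), and \emph{type-B} edges joining a super-vertex to a vertex of $V\setminus V(M^*_{n/3})$ (covering one element of $M'$ per edge). The case of an $M^{**}$ edge whose two endpoints both lie in $V\setminus V(M^*_{n/3})$ is ruled out, because its cost would be $-\infty$ while $c(M^{**})\geq 0$ ($M^{**}$ is a maximum-cost matching and the empty matching has cost~$0$).

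Second, I would evaluate every 3-path produced in Step~4 that uses an $M^{**}$ edge. For a type-A edge $xy\in M^{**}$ with corresponding $e_x,e_y\in M^*_{n/3}$ and $w(e_x)\geq w(e_y)$, Step~4 creates a 3-path of weight
\[
w(xy)+w(e_x)=w(e_x)+w(e_y)+\bigl(w(xy)-\min\{w(e_x),w(e_y)\}\bigr)=w(e_x)+w(e_y)+c(xy),
\]
contributing $w(e_x)+w(e_y)$ to $w(M')$ and $c(xy)$ to $c(M^{**})$. For a type-B edge $xy\in M^{**}$ with $e_x\in M^*_{n/3}$ and $y\in V\setminus V(M^*_{n/3})$, Step~4 creates a 3-path of weight $w(xy)+w(e_x)=w(e_x)+c(xy)$, since $c(xy)=w(xy)$ for such edges; this contributes $w(e_x)$ to $w(M')$ and $c(xy)$ to $c(M^{**})$. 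Summing over all of $M^{**}$ and using the identity $|M'|=2\cdot|\text{type-A}|+|\text{type-B}|$, the combined weight of these 3-paths equals exactly $w(M')+c(M^{**})$.

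Third, every edge $e\in M''$ is completed to a 3-path in the final bullet of Step~4 by attaching a residual vertex, so each such 3-path has weight at least $w(e)$ (weights are non-negative), and any further 3-paths formed purely from the remaining residual vertices contribute non-negative weight. These contributions together add up to at least $w(M'')$, and combining the two sources yields $w(P_2)\geq w(M')+c(M^{**})+w(M'')=w(M^*_{n/3})+c(M^{**})$. The main obstacle, beyond the Lemma~\ref{lb3} template, is the bookkeeping for the completion step: one must check that the residual vertices from type-A edges together with the unmatched vertices of $V\setminus V(M^*_{n/3})$ are enough to extend each edge of $M''$ into a 3-path and still cover the remainder of $V$. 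I would resolve this by a direct counting check using $|V|=n$, $|M^*_{n/3}|=n/3$, $|V\setminus V(M^*_{n/3})|=n/3$, and the two $M^{**}$ types identified above.
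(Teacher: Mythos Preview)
Your proposal follows the same approach as the paper's proof: the type-A/type-B case split and the resulting weight identities are exactly what the paper uses. The only divergence is that the paper explicitly invokes (as ``the assumption in Step~3'') that every super-vertex of $G/M^*_{n/3}$ is matched by $M^{**}$, making your $M''$ empty; your attempt to treat a nonempty $M''$ slightly misreads Step~4 (the final bullet only handles residual vertices coming from type-A edges and from $V\setminus V(M^*_{n/3})$, not unmatched edges of $M^*_{n/3}$), but since one may assume $M''=\emptyset$ without loss of generality---any unmatched super-vertex can be matched to a leftover vertex of $V\setminus V(M^*_{n/3})$ at nonnegative cost without exceeding size $n/3$---this does not affect correctness.
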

\begin{proof}
By Step 4 of Alg.2, there are three kinds of 3-paths in $P_2$, and we denote the set of the first kind of 3-paths in $P_2$ by $P'$ and the second kind of 3-paths in $P_2$ by $P''$. For each 3-path $xyz\in P'\cup P''$, we assume $xy\in M^*_{n/3}$ and $yz\in M^{**}$.
By definition of the cost function, for each 3-path $xyz\in P'$ we have $e_x=xy$, $w(e_x)\geq w(e_z)$, and $w(xy)+w(yz)=w(e_x)+w(e_z)-w(e_z)+w(yz)=w(e_x)+w(e_z)+c(yz)$ since $c(yz)=w(yz)-w(e_z)$; for each 3-path $xyz\in P''$ we have $e_x=xy$, and $w(xy)+w(yz)=w(e_x)+c(yz)$ since $w(yz)=c(yz)$. Therefore, we have 
\begin{align*}
&\sum_{xyz\in P'} (w(xy)+w(yz))+\sum_{xyz\in P''} (w(xy)+w(yz))\\
&=\sum_{xyz\in P'} (w(e_x)+w(e_z)+c(yz))+\sum_{xyz\in P''} (w(e_x)+c(yz))\\
&=w(M^*_{n/3})+c(M^{**}),
\end{align*}
where the second equality follows from the fact that by the assumption in Step 3 of Alg.2 for every edge in $M^*_{n/3}$ its corresponding vertex in $G/M^*_{n/3}$ is matched by $M^{**}$. Since the weight of the third kind of 3-paths is non-negative, we get $w(P_2)\geq w(P')+w(P'')=w(M^*_{n/3})+c(M^{**})$.
\end{proof}

Alg.2 can be seen as a simple variant of Alg.1. However, the main difficulty is to analyze its approximation ratio. We will generalize the use of the charging method in the proof of Lemma~\ref{lb5} to bound $c(M^{**})$.
Before that, we split $P^*$ into eight pairwise disjoint sets $P^*_1$, $P^*_2$, $P^*_3$, $P^*_4$, $P^*_5$, $P^*_6$, $P^*_7$, and $P^*_8$ based on $M^*_{n/3}$: consider a 3-path $xyz\in P^*$ and we have 
\begin{align*}
\left\{
\begin{array}{*{20}l}
xyz\in P^*_1, & \size{\{x,y,z\}\cap V(M^*_{n/3})}=0,\\
xyz\in P^*_2, & \size{\{x,y,z\}\cap V(M^*_{n/3})}=1~\mbox{and}~y\notin V(M^*_{n/3}),\\
xyz\in P^*_3, & \size{\{x,y,z\}\cap V(M^*_{n/3})}=1~\mbox{and}~y\in V(M^*_{n/3}),\\
xyz\in P^*_4, & \size{\{x,y,z\}\cap V(M^*_{n/3})}=2~\mbox{and}~y\notin V(M^*_{n/3}),\\
xyz\in P^*_5, & \size{\{x,y,z\}\cap V(M^*_{n/3})}=2~\mbox{and}~y\in V(M^*_{n/3})~\mbox{and}~\size{\{xy,yz\}\cap E(M^*_{n/3})}=1,\\
xyz\in P^*_6, & \size{\{x,y,z\}\cap V(M^*_{n/3})}=2~\mbox{and}~y\in V(M^*_{n/3})~\mbox{and}~\size{\{xy,yz\}\cap E(M^*_{n/3})}=0,\\
xyz\in P^*_7, & \size{\{x,y,z\}\cap V(M^*_{n/3})}=3~\mbox{and}~\size{\{xy,yz\}\cap E(M^*_{n/3})}=1,\\
xyz\in P^*_8, & \size{\{x,y,z\}\cap V(M^*_{n/3})}=3~\mbox{and}~\size{\{xy,yz\}\cap E(M^*_{n/3})}=0.
\end{array}
\right.
\end{align*}

We assume that the optimal solution $P^*$ satisfies the following property.
\begin{assumption}\label{ASS}
For each 3-path $xyz$ in $P^*$ with $y\notin V(M^*_{n/3})$ and $x,z\in V(M^*_{n/3})$, the vertices $x$ and $z$ are contained in two distinct edges of $M^*_{n/3}$.
\end{assumption}

If Assumption~\ref{ASS} does not hold, we have $xz\in M^*_{n/3}$, and then $(P^*\setminus\{xyz\})\cup\{xzy\}$ is still an optimal solution by the optimality of $M^*_{n/3}$. Hence, there must exist an optimal solution satisfying the property. Note that Assumption~\ref{ASS} was made to ensure that there is only one kind of 3-path in $P^*_4$, as illustrated in Fig.\ref{fig:fig04}, which will simplify our analysis.

Next, we define the following notation.
\begin{itemize}
    \item For $P^*_i$, where $i\in\{1,3,4,8\}$, define $X_i\coloneqq \{xy\mid xyz\in P^*_i, w(xy)\geq w(yz)\}$.
    
    \item For $P^*_2$, define $X_2\coloneqq \{xy\mid xyz\in P^*_2, x\in V(M^*_{n/3})\}$.
    
    \item For $P^*_5$, define $X_5\coloneqq \{xy\mid xyz\in P^*_5, xy\notin E(M^*_{n/3})\}$.

    \item For $P^*_6$, define $X_6\coloneqq \{xy\mid xyz\in P^*_6, x\notin V(M^*_{n/3})\}$.

    \item For $P^*_7$, define $X_7\coloneqq \{xy\mid xyz\in P^*_7, xy\notin E(M^*_{n/3})\}$.

    \item For $P^*_i$, where $i\in\{1,2,...,8\}$, define $Y_i\coloneqq E(P^*_i)\setminus X_i$.
\end{itemize}
Note that we have $E(P^*_i)=X_i\cup Y_i$ and $X_i\cap Y_i=\emptyset$ for each $i\in\{1,...,8\}$.

For an edge $xy\in E(P^*)$, it is called a \emph{middle} edge if it satisfies that $\{x,y\}\cap V(M^*_{n/3})=1$; otherwise, it is called a \emph{non-middle} edge. Note that all middle edges are in $E(P^*_2\cup P^*_3\cup P^*_4\cup P^*_5\cup P^*_6)$. 
For edges in $M^*_{n/3}$, we split them into five pairwise disjoint sets: $M_1$, $M_2$, $M_3$, $M_4$, and $M_5$.
\begin{itemize}
    \item Define $M_1$ to be the set of edges in $E(M^*_{n/3})$ connecting at least one middle edge in $E(P^*_6)$.
    \item Define $M_2$ to be the set of edges in $E(M^*_{n/3})$ connecting at least one middle edge, where the middle edge is only contained in $E(P^*_2\cup P^*_3\cup P^*_4)$.
    \item Define $M_3\coloneqq E(M^*_{n/3})\cap E(P^*_7)$.
    \item Define $M_4\coloneqq E(M^*_{n/3})\cap E(P^*_5)$.
    \item Define $M_5\coloneqq E(M^*_{n/3})\setminus (M_1\cup M_2\cup M_3\cup M_4)$.
\end{itemize}

See Fig.\ref{fig:fig04} for an illustration of the 8 kinds of 3-paths in $P^*$, and the sets $X_i$, $Y_i$, and $M_i$.

\begin{figure}[t]
    \centering
    \includegraphics[scale=0.58]{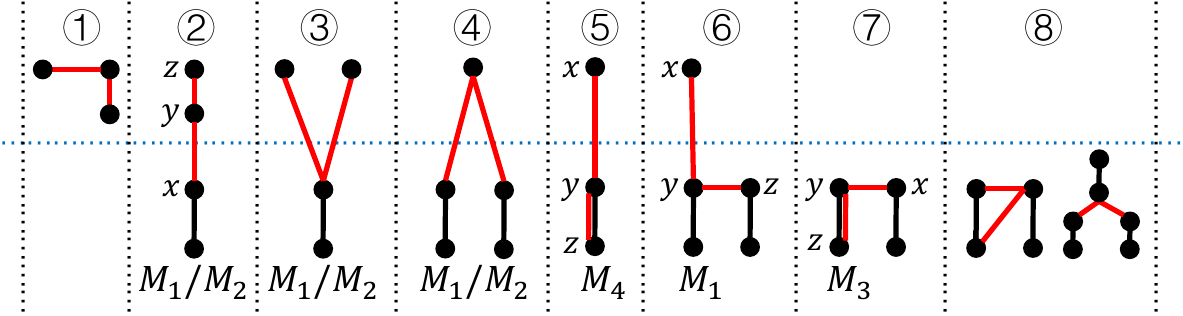}
    \caption{An illustration of the eight kinds of 3-paths in the optimal solution $P^*$, where the black edges represent the edges in $M^*_{n/3}$, the red edges represent the edges of 3-paths in $P^*$, and the 3-paths on the $i$-th column represent the 3-paths in $P^*_i$.}
    \label{fig:fig04}
\end{figure}

\begin{lemma}\label{lb11}
It holds that $w(X_i)\geq w(Y_i)$ for each $i\in\{1,3,4,8\}$, $w(M_3)=w(Y_7)$, $w(M_4)=w(Y_5)$, and $w(Y_5)\geq w(X_5)$.    
\end{lemma}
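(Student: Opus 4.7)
The plan is to handle the four claims separately, since three of them are essentially tautological from the definitions while only $w(Y_5) \geq w(X_5)$ requires actual work (a local exchange argument on $M^*_{n/3}$).

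First I would dispatch the claims $w(X_i) \geq w(Y_i)$ for $i \in \{1,3,4,8\}$. For these indices, $X_i$ is defined as the set of heavier edges, one per 3-path, namely $\{xy : xyz \in P^*_i, w(xy) \geq w(yz)\}$. Since each 3-path in $P^*_i$ has exactly two edges, $Y_i = E(P^*_i) \setminus X_i$ consists of the lighter edge of each 3-path. Summing the per-3-path inequality $\max\{w(xy),w(yz)\} \geq \min\{w(xy),w(yz)\}$ over all 3-paths in $P^*_i$ yields $w(X_i) \geq w(Y_i)$ immediately.

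Next I would verify the two equalities $w(M_3) = w(Y_7)$ and $w(M_4) = w(Y_5)$. By the definition of $P^*_7$ (resp.\ $P^*_5$), each 3-path $xyz$ in $P^*_7$ (resp.\ $P^*_5$) contains exactly one edge of $E(M^*_{n/3})$ among $\{xy, yz\}$, and $X_7$ (resp.\ $X_5$) was defined as the edge not in $E(M^*_{n/3})$. Therefore $Y_7 = E(P^*_7) \cap E(M^*_{n/3})$ as edge sets, which coincides with $M_3$ by definition; similarly $Y_5 = E(P^*_5) \cap E(M^*_{n/3}) = M_4$. The weight equalities follow.

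The only substantive step is the final claim $w(Y_5) \geq w(X_5)$, which I would prove by a local exchange argument on $M^*_{n/3}$. Consider any 3-path $xyz \in P^*_5$, and assume without loss of generality that $xy \in E(M^*_{n/3})$ and $yz \notin E(M^*_{n/3})$, so $yz$ is the unique edge contributed to $X_5$ and $xy$ the unique edge contributed to $Y_5$. By the definition of $P^*_5$, exactly two of $\{x,y,z\}$ lie in $V(M^*_{n/3})$ and $y \in V(M^*_{n/3})$; since $xy \in E(M^*_{n/3})$ forces $x \in V(M^*_{n/3})$, we conclude $z \notin V(M^*_{n/3})$. Then $M' \coloneqq (M^*_{n/3} \setminus \{xy\}) \cup \{yz\}$ is a matching of size $n/3$: $z$ was previously unmatched, and removing $xy$ frees both $x$ and $y$, so adding $yz$ yields a valid matching. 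By the optimality of $M^*_{n/3}$, $w(M^*_{n/3}) \geq w(M')$, i.e.\ $w(xy) \geq w(yz)$. Summing this inequality over all 3-paths in $P^*_5$ gives $w(Y_5) \geq w(X_5)$.

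The hard part, if any, is simply spotting that $z \notin V(M^*_{n/3})$ so that the single-edge swap keeps the matching size at $n/3$; the rest is bookkeeping directly from the definitions of the partition $\{P^*_i\}_{i=1}^{8}$ and the sets $X_i, Y_i, M_j$.
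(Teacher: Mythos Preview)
Your proposal is correct and follows essentially the same approach as the paper. The only cosmetic difference is that for $w(Y_5)\geq w(X_5)$ you perform the exchange one 3-path at a time (obtaining $w(xy)\geq w(yz)$ per path and then summing), whereas the paper swaps all of $Y_5$ out for $X_5$ simultaneously to conclude $w(M^*_{n/3})\geq w(M^*_{n/3})-w(Y_5)+w(X_5)$ directly; both are the same local exchange idea.
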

\begin{proof}
For each $i\in\{1,3,4,8\}$, we have $X_i=\{xy\mid xyz\in P^*_i, w(xy)\geq w(yz)\}$ and $Y_i=\{yz\mid xyz\in P^*_i, w(xy)\geq w(yz)\}$ by definition, and hence we have $w(X_i)\geq w(Y_i)$. We also have $M_3=Y_7$ and $M_4=Y_5$ by definition, and then we have $w(M_3)=w(Y_7)$ and $w(M_4)=w(Y_5)$. Since $(E(M^*_{n/3})\setminus Y_5)\cup X_5$ is a matching of size $n/3$ by definition, we have $w(Y_5)\geq w(X_5)$ since $w(M^*_{n/3})\geq w(M^*_{n/3})-w(Y_5)+w(X_5)$ by the optimality of $M^*_{n/3}$.
\end{proof}

\begin{lemma}\label{lb13}
It holds that $w(M^*_{n/2})\geq w(M^*_{n/3})+w(X_1)+w(Y_2)$.
\end{lemma}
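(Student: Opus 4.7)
The plan is to exhibit an explicit matching of size $n/2$ in $G$ whose weight is at least $w(M^*_{n/3}) + w(X_1) + w(Y_2)$, and then to conclude by the optimality of $M^*_{n/2}$. The candidate matching I will construct is $M^*_{n/3} \cup X_1 \cup Y_2$, padded out to size $n/2$ by an arbitrary matching on the leftover vertices.

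First I will verify that $M^*_{n/3} \cup X_1 \cup Y_2$ is indeed a matching. For every edge $xy \in X_1$, by the definition of $P^*_1$ we have $\{x,y\} \cap V(M^*_{n/3}) = \emptyset$, so both endpoints lie in $V \setminus V(M^*_{n/3})$. For every edge $yz \in Y_2$, the 3-path $xyz \in P^*_2$ has the unique vertex in $V(M^*_{n/3})$ equal to $x$ (by the definition of $X_2$ and $Y_2$), and hence again $y,z \in V \setminus V(M^*_{n/3})$. Since the 3-paths of $P^*$ are pairwise vertex-disjoint and each 3-path contributes at most one edge to $X_1 \cup Y_2$ (a 3-path in $P^*_1$ never lies in $P^*_2$ and vice versa), the set $X_1 \cup Y_2$ is itself a matching, all of whose vertices lie in $V \setminus V(M^*_{n/3})$. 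Therefore $M^*_{n/3}$ and $X_1 \cup Y_2$ have disjoint vertex sets, and their union is a matching of size $n/3 + |X_1| + |Y_2|$.

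Next I will bound the size and extend. Because $X_1 \cup Y_2$ lives inside $V \setminus V(M^*_{n/3})$, which has exactly $n/3$ vertices, we have $2(|X_1|+|Y_2|) \leq n/3$, so the combined matching has size at most $n/2$. Since $n$ is assumed even and divisible by $3$, $n/3 - 2(|X_1|+|Y_2|)$ is an even non-negative integer, so we can pair up the remaining uncovered vertices arbitrarily in the complete graph $G$ to obtain a matching $M$ of size exactly $n/2$. Since all edge weights are non-negative, $w(M) \geq w(M^*_{n/3}) + w(X_1) + w(Y_2)$, and the optimality of $M^*_{n/2}$ gives $w(M^*_{n/2}) \geq w(M) \geq w(M^*_{n/3}) + w(X_1) + w(Y_2)$.

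There is no real obstacle here; the content of the lemma is the combinatorial observation that $X_1$ and $Y_2$ consist of edges entirely outside $V(M^*_{n/3})$, so they can be added to $M^*_{n/3}$ without conflict. The only thing one must keep careful track of is unpacking the definitions of $P^*_1$, $P^*_2$, $X_1$, and $Y_2$ in order to confirm that both endpoints of every edge in $X_1 \cup Y_2$ avoid $V(M^*_{n/3})$.
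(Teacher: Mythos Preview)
Your proof is correct and follows essentially the same approach as the paper: you observe that every edge of $X_1\cup Y_2$ has both endpoints in $V\setminus V(M^*_{n/3})$, so $M^*_{n/3}\cup X_1\cup Y_2$ is a matching of size at most $n/2$, and then invoke the optimality of $M^*_{n/2}$. Your explicit padding step and parity check are a slightly more careful rendition of what the paper does in one line.
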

\begin{proof}
By definition, we know that three sets $V(X_1)$, $V(Y_2)$, and $V(M^*_{n/3})$ are pairwise disjoint, and $E(M^*_{n/3})\cup X_1\cup Y_2$ is a matching of size at most $n/2$ in $G$. Since $M^*_{n/2}$ is the maximum weight matching in $G$, we have $w(M^*_{n/2})\geq w(M^*_{n/3})+w(X_1)+w(Y_2)$.
\end{proof}

\begin{lemma}\label{lb14}
It holds that $w(M^*_{n/2})\geq w(M^*_{n/3})-w(M_1)-w(M_2)+w(X_1)+\sum_{xyz\in P^*_2}\max\{w(xy),w(yz)\}+w(X_3)+w(X_4)+w(X_6)$.
\end{lemma}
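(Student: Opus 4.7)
The plan is to construct an explicit matching $M'$ in $G$ whose total weight equals the right-hand side; the claim then follows because $M'$ extends to a matching of size $n/2$ in the complete graph $G$ without decreasing weight (all weights are non-negative), so $w(M^*_{n/2})\geq w(M')$.

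I would take
\[
M'\coloneqq\lrA{M^*_{n/3}\setminus(M_1\cup M_2)}\cup X_1\cup S_2\cup X_3\cup X_4\cup X_6,
\]
where $S_2$ contains, for each 3-path $xyz\in P^*_2$, the heavier of $xy$ and $yz$, so that $w(S_2)=\sum_{xyz\in P^*_2}\max\{w(xy),w(yz)\}$. None of the added edges lies in $E(M^*_{n/3})$: the $X_1$-edges have both endpoints outside $V(M^*_{n/3})$, and every other added edge is a middle edge by construction, hence not in $E(M^*_{n/3})$. Therefore, provided $M'$ is indeed a matching,
\[
w(M')=w(M^*_{n/3})-w(M_1)-w(M_2)+w(X_1)+w(S_2)+w(X_3)+w(X_4)+w(X_6),
\]
which matches the right-hand side of the claim exactly.

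The key step is verifying that $M'$ is a matching. Distinct 3-paths of $P^*$ contribute vertex-disjoint added edges (each 3-path contributes at most one), and edges of $M^*_{n/3}\setminus(M_1\cup M_2)$ are pairwise disjoint as a sub-matching of $M^*_{n/3}$. The non-trivial check is that no added edge shares a vertex with any surviving edge of $M_3\cup M_4\cup M_5=M^*_{n/3}\setminus(M_1\cup M_2)$. Edges in $X_1$ and those edges of $S_2$ whose max-endpoints both lie outside $V(M^*_{n/3})$ trivially cause no conflict. Every other added edge is a middle edge with a unique $V(M^*_{n/3})$-endpoint $v$, and I would show $e_v\in M_1\cup M_2$, so that $e_v$ has already been removed. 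Since $v$ is incident to a middle edge of its own 3-path $P_v\in P^*_2\cup P^*_3\cup P^*_4\cup P^*_6$, the edge $e_v$ is incident to at least one middle edge; if any middle edge incident to $e_v$ lies in $P^*_6$ then $e_v\in M_1$, and otherwise—because middle edges exist only in $P^*_2\cup P^*_3\cup P^*_4\cup P^*_5\cup P^*_6$—all middle edges incident to $e_v$ lie in $P^*_2\cup P^*_3\cup P^*_4$, forcing $e_v\in M_2$.

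The delicate sub-claim, and the main obstacle of the proof, is ruling out the $P^*_5$ case: $e_v$ cannot be incident to a middle edge of any 3-path in $P^*_5$. If it were, the unique $V(M^*_{n/3})$-endpoint of that middle edge (the center of the $P^*_5$ 3-path, call it $y'$) would be either $v$ or the other endpoint of $e_v$; since $y'$ already lies on the $M^*_{n/3}$-edge internal to its own $P^*_5$ 3-path, one obtains $e_v=e_{y'}$, forcing $v$ itself to belong to that $P^*_5$ 3-path, which contradicts $P_v\notin P^*_5$ together with the vertex-disjointness of $P^*$. Assumption~\ref{ASS} is used implicitly in the $P^*_4$ case, where it guarantees $e_x\neq e_z$ for the two $V(M^*_{n/3})$-endpoints, so that the $X_4$ edge finds an unoccupied slot after $M_1\cup M_2$ has been removed. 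Once disjointness is established, the weight identity above yields $w(M^*_{n/2})\geq w(M')$, which is precisely the claimed inequality.
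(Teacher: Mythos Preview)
Your proof is correct and follows essentially the same construction as the paper: both build the matching $(M^*_{n/3}\setminus(M_1\cup M_2))\cup X_1\cup Z_2\cup X_3\cup X_4\cup X_6$ (your $S_2$ is the paper's $Z_2$) and observe that it has size at most $n/2$. Your disjointness verification is in fact more careful than the paper's; the paper asserts that $V(M^*_{n/3}\setminus(M_1\cup M_2))$ is disjoint from all of $V(P^*_1\cup P^*_2\cup P^*_3\cup P^*_4\cup P^*_6)$, which is slightly stronger than what is actually needed (and not literally true for the non-middle endpoint $z$ of a $P^*_6$ 3-path), whereas you correctly check only the relevant vertices. One small remark: your invocation of Assumption~\ref{ASS} for the $P^*_4$ case is unnecessary here, since only the single $V(M^*_{n/3})$-endpoint of the chosen $X_4$ edge matters, and that endpoint's matching edge already lies in $M_1\cup M_2$ regardless of whether $e_x=e_z$.
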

\begin{proof}
Since the edges in $M_1\cup M_2$ are the edges in $M^*_{n/3}$ connecting all middle edges in $E(P^*_2\cup P^*_3\cup P^*_4\cup P^*_6)$, we know that two sets $V(M^*_{n/3}\setminus(M_1\cup M_2))$ and $V(P^*_1\cup P^*_2\cup P^*_3\cup P^*_4\cup P^*_6)$ are disjoint. Let $Z_2\coloneqq\{xy\mid xyz\in E(P^*_2),w(xy)\geq w(yz)\}$ and we have $w(Z_2)=\sum_{xyz\in P^*_2}\max\{w(xy),w(yz)\}$. Note that $(M^*_{n/3}\setminus(M_1\cup M_2))\cup (X_1\cup Z_2\cup X_3\cup X_4\cup X_6)$ is a matching of size at most $n/2$ in $G$. Since $M^*_{n/2}$ is the maximum weight matching in $G$, by definitions of $M_i$, $X_i$, and $Y_i$, it is easy to see that the lemma holds.
\end{proof}

\begin{lemma}\label{lb15}
It holds that $w(M^*_{n/2})\geq w(M^*_{n/3})-w(M_1)+w(X_1)+w(Y_2)+w(X_6)$.
\end{lemma}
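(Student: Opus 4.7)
The plan is to mimic the construction used in Lemmas~\ref{lb13} and~\ref{lb14}: exhibit a matching $N$ in $G$ whose weight equals $w(M^*_{n/3})-w(M_1)+w(X_1)+w(Y_2)+w(X_6)$, then appeal to the optimality of $M^*_{n/2}$ (in a complete graph with non-negative edge weights, any matching can be extended to one of size $n/2$ without decreasing its weight, so $w(M^*_{n/2})\geq w(N)$). The natural candidate is
\[
N \coloneqq (E(M^*_{n/3}) \setminus M_1) \cup X_1 \cup Y_2 \cup X_6.
\]

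The main thing to check is that $N$ is in fact a matching. The edges of $X_1 \subseteq E(P^*_1)$ have both endpoints outside $V(M^*_{n/3})$, by the defining condition of $P^*_1$. The edges of $Y_2$ are, by definition, the non-middle edges of 3-paths in $P^*_2$; for such a 3-path $xyz \in P^*_2$ with $x \in V(M^*_{n/3})$, the non-middle edge $yz$ has both endpoints outside $V(M^*_{n/3})$ (since $y \notin V(M^*_{n/3})$ by the definition of $P^*_2$ and $z \notin V(M^*_{n/3})$ because $x$ was the unique $V(M^*_{n/3})$-vertex of the 3-path). Hence $V(X_1) \cup V(Y_2)$ is disjoint from $V(M^*_{n/3})$, so these edges do not conflict with anything in $E(M^*_{n/3}) \setminus M_1$.

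The only nontrivial interaction is with $X_6$. Each $xy \in X_6$ is, by construction, a middle edge of some 3-path $xyz \in P^*_6$ with $y \in V(M^*_{n/3})$ and $x \notin V(M^*_{n/3})$. By the very definition of $M_1$ — the set of $M^*_{n/3}$-edges meeting at least one middle edge of $E(P^*_6)$ — the $M^*_{n/3}$-edge $e_y$ incident to $y$ belongs to $M_1$ and is removed, so $y$ is free in $E(M^*_{n/3}) \setminus M_1$; the other endpoint $x$ is not in $V(M^*_{n/3})$ at all. Moreover, distinct edges in $X_6$ come from distinct vertex-disjoint 3-paths of $P^*_6$, so $X_6$ is itself a matching; and since $P^*_1, P^*_2, P^*_6$ are pairwise disjoint subsets of $P^*$, the three pieces $X_1, Y_2, X_6$ are mutually vertex-disjoint.

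I expect the only possible obstacle to be the bookkeeping showing that removing exactly $M_1$ (and no more) suffices to accommodate every middle edge in $X_6$; but this is immediate from the definition of $M_1$, because $X_6$ consists precisely of middle edges of $P^*_6$, and every such middle edge has its unique $V(M^*_{n/3})$-endpoint covered by some $e_y \in M_1$. Combining these observations, $N$ is a matching in $G$ of weight $w(M^*_{n/3}) - w(M_1) + w(X_1) + w(Y_2) + w(X_6)$, and the claimed inequality follows from the optimality of $M^*_{n/2}$.
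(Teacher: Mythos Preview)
Your proposal is correct and follows the same approach as the paper: both construct the matching $N=(E(M^*_{n/3})\setminus M_1)\cup X_1\cup Y_2\cup X_6$, verify it is a matching by checking the relevant vertex-disjointness conditions, and then invoke the optimality of $M^*_{n/2}$. Your argument is in fact more explicit than the paper's in justifying why each piece fits together (in particular, the check that the $V(M^*_{n/3})$-endpoint of every edge in $X_6$ is freed by removing $M_1$), but the underlying construction and logic are identical.
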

\begin{proof}
Since the edges in $M_1$ are the edges in $M^*_{n/3}$ connecting all middle edges in $E(P^*_6)$, we know two sets $V(M^*_{n/3}\setminus M_1)$ and $V(P^*_1)\cup V(Y_2)\cup V(X_6)$ are disjoint. Note that $(M^*_{n/3}\setminus M_1)\cup X_1\cup Y_2\cup X_6$ is a matching of size at most $n/2$ in $G$. Since $M^*_{n/2}$ is the maximum weight matching in $G$, by definitions of $M_i$, $X_i$, and $Y_i$, it is easy to see that the lemma holds.
\end{proof}

Recall that $M_1$ is the set of edges in $E(M^*_{n/3})$ connecting at least one middle edge in $E(P^*_6)$. It connects zero middle edges in $E(P^*_5)$, but may connect middle edges in $E(P^*_2\cup P^*_3\cup P^*_4)$. We need to further split $M_1$ into three pairwise disjoint sets: $M'_1$, $M''_1$, and $M'''_1$.
\begin{itemize}
    \item Define $M'_1$ to be the set of edges in $E(M^*_{n/3})$ connecting one middle edge in $E(P^*_6)$ and zero middle edges in $E(P^*_2\cup P^*_3\cup P^*_4)$.
    \item Define $M''_1$ to be the set of edges in $E(M^*_{n/3})$ connecting two middle edges in $E(P^*_6)$ and zero middle edges in $E(P^*_2\cup P^*_3\cup P^*_4)$.
    \item Define $M'''_1$ to be the set of edges in $E(M^*_{n/3})$ connecting one middle edge in $E(P^*_6)$ and at least one middle edge in $E(P^*_2\cup P^*_3\cup P^*_4)$.
\end{itemize} 

\begin{lemma}\label{lb16}
It holds that $w(M^*_{n/2})\geq w(M^*_{n/3})-w(M'''_1)-w(M_2)+w(X_1)+\sum_{xyz\in P^*_2}\max\{w(xy),w(yz)\}+w(X_3)+w(X_4)$.
\end{lemma}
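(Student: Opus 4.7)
The plan is to follow exactly the template of Lemmas~\ref{lb13}--\ref{lb15}: construct an explicit matching in $G$ whose weight equals the right-hand side and then invoke the optimality of $M^*_{n/2}$. Let $Z_2\coloneqq\{xy\mid xyz\in P^*_2,\,w(xy)\geq w(yz)\}$, so that $w(Z_2)=\sum_{xyz\in P^*_2}\max\{w(xy),w(yz)\}$. The candidate matching I would work with is
\[
M\coloneqq\bigl(E(M^*_{n/3})\setminus (M'''_1\cup M_2)\bigr)\cup X_1\cup Z_2\cup X_3\cup X_4.
\]
By definitions of $X_i$, $Y_i$, $M_i$, and $Z_2$, the weight $w(M)$ is exactly $w(M^*_{n/3})-w(M'''_1)-w(M_2)+w(X_1)+w(Z_2)+w(X_3)+w(X_4)$. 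So the lemma reduces to showing that $M$ is a matching in $G$ of size at most $n/2$.

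The first step is to check that $X_1\cup Z_2\cup X_3\cup X_4$ is itself a matching. Since the sets $P^*_1,\ldots,P^*_4$ are pairwise disjoint, and each of $X_1,Z_2,X_3,X_4$ selects at most one edge from each of its 3-paths, the vertices used by distinct edges in this union are disjoint.

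The main step is to verify that no kept edge of $E(M^*_{n/3})\setminus(M'''_1\cup M_2)$ shares a vertex with any edge in $X_1\cup Z_2\cup X_3\cup X_4$. Edges of $X_1$ lie entirely outside $V(M^*_{n/3})$ (since $P^*_1$ uses no vertex of $V(M^*_{n/3})$), so they trivially cause no conflict. For any edge selected into $Z_2$, $X_3$, or $X_4$, let $v$ be an endpoint lying in $V(M^*_{n/3})$ (if any); by the definitions of $P^*_2,P^*_3,P^*_4$, the 3-path of $P^*$ containing $v$ has a middle edge $e^*$ incident to $v$ with $e^*\in E(P^*_2\cup P^*_3\cup P^*_4)$. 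Let $e_v\in M^*_{n/3}$ be the edge containing $v$. Then $e_v$ connects the middle edge $e^*$, which rules out $e_v\in M_3\cup M_4\cup M_5$ (as $M_3,M_4$ lie inside $E(P^*)$ at 3-paths whose endpoints are not incident to any $P^*_2\cup P^*_3\cup P^*_4$-middle edge, and $M_5$ by definition connects no middle edge at all) and also rules out $e_v\in M'_1\cup M''_1$ (which by definition connect only $P^*_6$-middle edges). Hence $e_v\in M'''_1\cup M_2$, so $e_v$ has been removed from the matching and there is no conflict at $v$.

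The bookkeeping across the classes $M'_1, M''_1, M'''_1, M_2, M_3, M_4, M_5$ is where the main obstacle lies; I would go through the above case analysis carefully to ensure every possibility for $e_v$'s other endpoint is covered and that the ``at least one middle edge in $P^*_2\cup P^*_3\cup P^*_4$'' clause in the definition of $M'''_1$ is correctly matched. Once the disjointness is established, $M$ is a matching on at most $n$ vertices, so $\size{M}\leq n/2$; the optimality of $M^*_{n/2}$ then gives $w(M^*_{n/2})\geq w(M)$, which is the desired inequality.
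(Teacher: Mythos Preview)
Your proposal is correct and follows essentially the same approach as the paper: construct the matching $(E(M^*_{n/3})\setminus(M'''_1\cup M_2))\cup X_1\cup Z_2\cup X_3\cup X_4$ and invoke the optimality of $M^*_{n/2}$. The paper states the key disjointness claim more compactly---``$M'''_1\cup M_2$ are exactly the edges of $M^*_{n/3}$ connecting all middle edges in $E(P^*_2\cup P^*_3\cup P^*_4)$, so $V(M^*_{n/3}\setminus(M'''_1\cup M_2))$ and $V(P^*_1\cup P^*_2\cup P^*_3\cup P^*_4)$ are disjoint''---whereas you spell out the case analysis for $e_v$ across $M'_1,M''_1,M'''_1,M_2,\ldots,M_5$; both arguments are equivalent.
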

\begin{proof}
Since the edges in $M'''_1\cup M_2$ are the edges in $M^*_{n/3}$ connecting all middle edges in $E(P^*_2\cup P^*_3\cup P^*_4)$, we know that two sets $V(M^*_{n/3}\setminus(M'''_1\cup M_2))$ and $V(P^*_1\cup P^*_2\cup P^*_3\cup P^*_4)$ are disjoint. Let $Z_2\coloneqq\{xy\mid xyz\in E(P^*_2),w(xy)\geq w(yz)\}$ and we have $w(Z_2)=\sum_{xyz\in P^*_2}\max\{w(xy),w(yz)\}$. Note that $(M^*_{n/3}\setminus(M'''_1\cup M_2))\cup (X_1\cup Z_2\cup X_3\cup X_4)$ is a matching of size at most $n/2$ in $G$. Since $M^*_{n/2}$ is the maximum weight matching in $G$, by definitions of $M'''_1$, $M_2$, $X_i$, and $Y_i$, it is easy to see that the lemma holds.
\end{proof}

Lemmas~\ref{lb11}-\ref{lb16} state some inequalities satisfied by the weights of our newly defined edge sets and the maximum weight matchings $M^*_{n/3}$ and $M^*_{n/2}$. 

To analyze Alg.2, by Lemma~\ref{lb10}, we need to obtain a lower bound on $c(M^{**})$. 

We use the edges in $E(P^*_2\cup P^*_3\cup P^*_4\cup P^*_5\cup P^*_6\cup P^*_7\cup P^*_8)\setminus E(M^*_{n/3})$ to obtain three pairwise disjoint sets: $E_1$, $E_2$, and $E_3$ in the following way.
\begin{itemize}
    \item Define $E_1\coloneqq E(P^*_6)\cup E(P^*_8)$.
    \item Define $E_2\coloneqq X_2\cup X_3\cup X_4\cup X_7$.
    \item Define $E_3\coloneqq X_5$.
\end{itemize}
We will use the edges in $E_1\cup E_2\cup E_3$ to construct a matching $M'$ in $G/M^*_{n/3}$, and then derive a bound $c(M^{**})\geq c(M')$ by the optimality of $M^{**}$.

Next, we first analyze the cost of the edges in $E_1\cup E_2\cup E_3$.

\begin{lemma}\label{lb17}
It holds that $c(E_2)\geq w(X_2)+w(X_3)+w(X_4)+w(X_7)-w(Y_7)$, and $c(E_3)=w(X_5)$.
\end{lemma}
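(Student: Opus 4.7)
The plan is to verify the lemma by a direct case analysis on the location of the endpoints of each edge with respect to $V(M^*_{n/3})$, and then apply the three cases of the cost function from Step~2 of Alg.2.

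First I would handle $E_3 = X_5$. For any $xy \in X_5$, the corresponding 3-path $xyz \in P^*_5$ has $y \in V(M^*_{n/3})$, exactly one of $\{x,z\}$ in $V(M^*_{n/3})$, and exactly one of $\{xy,yz\}$ in $E(M^*_{n/3})$. Since $xy \notin E(M^*_{n/3})$ by definition of $X_5$, we must have $yz \in E(M^*_{n/3})$, which forces $z \in V(M^*_{n/3})$ and hence $x \notin V(M^*_{n/3})$. Thus in the contracted graph $G/M^*_{n/3}$, the edge $xy$ connects a super-vertex (containing $y$) with a vertex from $V\setminus V(M^*_{n/3})$, so by the second bullet in Step~2 we get $c(xy) = w(xy)$. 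Summing gives $c(E_3) = w(X_5)$.

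Next I would handle $c(X_2), c(X_3), c(X_4)$ by the same kind of argument. For $xy \in X_2$, the 3-path $xyz \in P^*_2$ has only one vertex in $V(M^*_{n/3})$ with $y \notin V(M^*_{n/3})$, and by definition of $X_2$ that vertex is $x$; hence $y \notin V(M^*_{n/3})$ and $c(xy)=w(xy)$. For $xy \in X_3$, the middle vertex $y$ is the unique vertex in $V(M^*_{n/3})$, so $x \notin V(M^*_{n/3})$ and $c(xy)=w(xy)$. For $xy \in X_4$, Assumption~\ref{ASS} ensures $x$ and $z$ lie in different edges of $M^*_{n/3}$ while $y \notin V(M^*_{n/3})$, giving again $c(xy)=w(xy)$.

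The only edges contributing a negative correction come from $X_7$, which is the place I expect to do a little work. For $xy \in X_7$, the 3-path $xyz \in P^*_7$ has all three vertices in $V(M^*_{n/3})$ and exactly one of $\{xy,yz\}$ in $E(M^*_{n/3})$; since $xy \notin E(M^*_{n/3})$ we must have $yz \in E(M^*_{n/3})$, so $e_y = yz$. By the first bullet in Step~2, $c(xy) = w(xy) - \min\{w(e_x),w(e_y)\} \geq w(xy) - w(e_y) = w(xy) - w(yz)$. Summing over $X_7$ and using $Y_7 = \{yz : xyz \in P^*_7,\, yz \in E(M^*_{n/3})\} = M_3$ (so that $\sum_{xy\in X_7} w(yz) = w(Y_7)$) yields $c(X_7) \geq w(X_7) - w(Y_7)$.

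Adding the four bounds gives $c(E_2) \geq w(X_2)+w(X_3)+w(X_4)+w(X_7)-w(Y_7)$, completing the argument. The main (mild) obstacle is just keeping the case breakdown of $P^*_i$ straight and confirming that for every edge in $E_2 \cup E_3$ the relevant endpoint is either a super-vertex paired with a free vertex (cost $= w$) or two super-vertices with a controllable minimum; once the membership conditions defining $X_i$ are unpacked, the inequalities are immediate.
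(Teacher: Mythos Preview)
Your proposal is correct and follows essentially the same argument as the paper: the paper simply observes that each edge in $X_2\cup X_3\cup X_4\cup X_5$ is a \emph{middle edge} (one endpoint in $V(M^*_{n/3})$, one outside) and hence has $c=w$, and then bounds $c(X_7)$ exactly as you do via $c(xy)\geq w(xy)-w(e_y)=w(xy)-w(yz)$. Your invocation of Assumption~\ref{ASS} for $X_4$ is unnecessary (the cost identity $c(xy)=w(xy)$ only needs $y\notin V(M^*_{n/3})$), but it does no harm.
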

\begin{proof}
Note that $c(E_2)=c(X_2)+c(X_3)+c(X_4)+c(X_7)$ and $c(E_3)=c(X_5)$.
Since the cost of a middle edge equals to its weight, we have $c(X_i)=w(X_i)$ for each $i\in\{2,3,4\}$. 
Recall that $X_7=\{xy\mid xyz\in P^*_7, xy\notin E(M^*_{n/3})\}$. By Lemma~\ref{lb4}, we have 
\begin{align*}
c(X_7)&=\sum_{xyz\in P^*_7,xy\notin E(M^*_{n/3})}c(xy)\\
&\geq\sum_{xyz\in P^*_7,xy\notin E(M^*_{n/3})}(w(xy)-w(yz))\\
&=w(X_7)-w(Y_7).
\end{align*}
Hence, we have $c(E_2)\geq w(X_2)+w(X_3)+w(X_4)+w(X_7)-w(Y_7)$.
\end{proof}

\begin{lemma}\label{lb18}
We have $c(E_1)+c(E_2)+c(E_3)\geq w(P^*_6)+w(P^*_8)+w(X_2)+w(X_3)+w(X_4)+w(X_5)+w(X_7)-\frac{4}{3}w(M^*_{n/3})+\frac{1}{3}w(M'_1)+\frac{2}{3}w(M''_1)+w(M'''_1)+\frac{2}{3}w(M_2)+w(M_3)+\frac{4}{3}w(M_4)$.
\end{lemma}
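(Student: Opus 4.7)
The plan is to generalize the charging argument of Lemma~\ref{lb5} to the $8$-way partition of $P^*$. For each non-middle edge $xy$ of $E_1 \cup E_2 \cup E_3$ (i.e., an edge with both endpoints in $V(M^*_{n/3})$), I apply Lemma~\ref{lb4} with $\theta = 2/3$, distributing the cost loss as $\tfrac{2}{3}w(e_x) + \tfrac{1}{3}w(e_y)$, where $x$ is the $1$-degree end and $y$ the $2$-degree middle vertex of the unique $3$-path of $P^*$ containing $xy$. Summing over all contributing edges yields
\[
c(E_1)+c(E_2)+c(E_3) \geq w(E_1)+w(E_2)+w(E_3) - \sum_{e\in E(M^*_{n/3})} \eta_e \cdot w(e),
\]
where $\eta_e$ is the total fractional charge received by $e \in M^*_{n/3}$.

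A crucial simplification is that every middle edge of $E_1 \cup E_2 \cup E_3$ automatically satisfies $c(e) = w(e)$ by the definition of $c$ in Step~2 of Alg.2, so such edges need no charging. Consequently, the only edges that require charges are the genuinely non-middle ones: the unique non-middle edge of each $P^*_6$-path, both edges of each $P^*_8$-path, and the $X_7$-edge of each $P^*_7$-path. I invoke Assumption~\ref{ASS} to guarantee that every $P^*_4$-path has the structure depicted in Fig.~\ref{fig:fig04}, so that both of its edges are genuinely middle.

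The bulk of the proof is then a case analysis on the class of $e = uv \in M^*_{n/3}$. For each endpoint $w \in \{u,v\}$ I identify which $P^*$-path contains $w$ and in which role, and read off the partial charge $f(w)$ from any incident non-middle edges; a short check shows $f(w) \in \{0, \tfrac{1}{3}, \tfrac{2}{3}\}$ depending on whether $w$ touches zero non-middle edges, sits as the $2$-degree middle of exactly one such edge, or is the $1$-degree end of one (or is the $2$-degree middle of two, as happens in $P^*_8$). The class of $e$ severely restricts the admissible roles: $e \in M_3$ forces $e$ to lie on a $P^*_7$-path, $e \in M_4$ on a $P^*_5$-path, and the subclasses $M'_1, M''_1, M'''_1$ are distinguished by how many $P^*_6$-middle vertices (zero, one, or two) sit at the endpoints of $e$ together with whether $e$ also touches a $P^*_2 \cup P^*_3 \cup P^*_4$-middle edge. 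Systematically enumerating the remaining sub-cases yields the per-class bounds $\eta_e \leq \tfrac{4}{3},\, 1,\, \tfrac{2}{3},\, \tfrac{1}{3},\, \tfrac{2}{3},\, \tfrac{1}{3},\, 0$ for $e$ in $M_5, M'_1, M''_1, M'''_1, M_2, M_3, M_4$, respectively.

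Plugging these bounds into the displayed inequality and using the disjoint unions $M_1 = M'_1 \cup M''_1 \cup M'''_1$ and $M^*_{n/3} = M_1 \cup M_2 \cup M_3 \cup M_4 \cup M_5$ reproduces exactly the refund terms $\tfrac{1}{3}w(M'_1) + \tfrac{2}{3}w(M''_1) + w(M'''_1) + \tfrac{2}{3}w(M_2) + w(M_3) + \tfrac{4}{3}w(M_4)$ and delivers the claimed inequality. The main obstacle will be the exhaustiveness of the case analysis: with seven classes of $M^*_{n/3}$-edges and many possible $P^*$-roles for each endpoint, one must carefully rule out impossible combinations using the definitions of the classes themselves and verify that every remaining combination respects the advertised per-class bound.
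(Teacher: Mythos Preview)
Your proposal is correct and follows essentially the same route as the paper: the identical $2/3$--$1/3$ charging rule on non-middle edges of $E_1\cup E_2\cup E_3$, the observation that middle edges contribute no charge, and a case analysis over the classes $M'_1,M''_1,M'''_1,M_2,M_3,M_4,M_5$ yielding exactly the per-class bounds $\eta_e\le 1,\tfrac23,\tfrac13,\tfrac23,\tfrac13,0,\tfrac43$ that the paper obtains. One small wording slip: your description of $M'_1,M''_1,M'''_1$ as distinguished by ``zero, one, or two'' $P^*_6$-middle vertices is off, since every $M_1$-edge touches at least one such middle edge; the actual trichotomy is (one $P^*_6$-middle, no $P^*_{2\cup3\cup4}$-middle), (two $P^*_6$-middle), and (one $P^*_6$-middle plus at least one $P^*_{2\cup3\cup4}$-middle).
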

\begin{proof}
To prove this lemma, we still use the charging method in Lemma~\ref{lb5}.
For any non-middle edge $xy$ in $E_1\cup E_2\cup E_3$, there are two edges $e_x,e_y\in M^*_{n/3}$ corresponding to the endpoints of $xy$ in $G/M^*_{n/3}$. By Lemma~\ref{lb4}, it holds that $c(xy)\geq w(xy)-(\theta\cdot w(e_x)+(1-\theta)\cdot w(e_y))$ for any $0\leq \theta\leq 1$, and we say that $xy$ charges $\theta$ point from $e_x$ and $1-\theta$ point from $e_y$. Note that if $xy$ is a middle edge we have $c(xy)=w(xy)$, i.e., it does not charge any point.
Next, we design a charging rule so that each non-middle edge in $E_1\cup E_2\cup E_3$ receives 1 point in total.

A vertex is called an \emph{$i$-degree vertex} if it is incident to $i$ edges in $E(P^*)$. For any edge $xy\in E_1\cup E_2\cup E_3$, there exists exactly one 2-degree vertex and one 1-degree vertex in $\{x,y\}$. Assume w.l.o.g. that $x$ is a 1-degree vertex and $y$ is a 2-degree vertex. If $xy$ is a non-middle edge, we let $xy$ charge $\frac{2}{3}$ point from $e_x$ and $\frac{1}{3}$ point from $e_y$. Note that $e_x\neq e_y$.

Assume that for any edge $e\in M^*_{n/3}$ it is charged by $\eta_e$ points in total by non-middle edges in $E_1\cup E_2\cup E_3$.
Then, by Lemma~\ref{lb4}, we get $c(E_1)+c(E_2)+c(E_3)\geq w(E_1)+w(E_2)+w(E_3)-\sum_{e\in E(M^*_{n/3})}\eta_e\cdot w(e)$.
Next, we calculate $\eta_e$ for each $e\in M^*_{n/3}$ by considering the following cases. We only consider non-middle edges in $E_1\cup E_2\cup E_3$.

\textbf{Case~1: $e\in M_1$.} Note that $M_1=M'_1\cup M''_1\cup M'''_1$. We further consider three cases.

\textbf{Case~1.1: $e=xy\in M'_1$.} By definition, $e$ connects one middle edge in $E(P^*_6)$ and no other middle edges. Assume that $x$ is incident to one middle edge in $E(P^*_6)\subseteq E_1$. We know that $x$ is a 2-degree vertex, and $x$ is also incident to one non-middle edge in $E(P^*_6)$ that charges $xy$ $\frac{1}{3}$ point. Note that $y$ must be incident to non-middle edges only in $E_1\cup E_2$.
If $y$ is incident to one non-middle edge, $y$ is a 1-degree vertex, and $xy$ will be charged by $\frac{2}{3}$ point by the incident edge; otherwise, it is incident to two non-middle edges, $y$ is a 2-degree vertex, and $xy$ will be charged by $\frac{2}{3}$ point in total by the two incident edges. Hence, $xy$ is charged by $1$ point in total, and we have $\eta_e=1$.

\textbf{Case~1.2: $e=xy\in M''_1$.} By definition, $e$ connects two middle edges in $E(P^*_6)$ and no other middle edges. Note that a vertex of $e$ can be incident to at most one middle edge in $E(P^*_6)$. Hence, both $x$ and $y$ are incident to one middle edge in $E(P^*_6)$, both are 2-degree vertices, and incident to one non-middle middle in $E(P^*_6)$. By the analysis of Case~1.1, we know that $\eta_e=\frac{1}{3}+\frac{1}{3}=\frac{2}{3}$.

\textbf{Case~1.3: $e=xy\in M'''_1$.} By definition, $e$ connects one middle edge in $E(P^*_6)$ and at least one middle edge in $E(P^*_2\cup P^*_3\cup P^*_4)$. If $x$ is incident to the middle edge in $E(P^*_6)$, it is also incident to one non-middle edge in $E(P^*_6)$ that charges $xy$ $\frac{1}{3}$ point. Hence, $y$ is incident to at least one middle edge in $E(P^*_2\cup P^*_3\cup P^*_4)$, and then it cannot be incident to any non-middle edge. Hence, $y$ is incident to only middle edges. Therefore, $xy$ is charged by $\frac{1}{3}$ point in total, and we have $\eta_e=\frac{1}{3}$.

\textbf{Case~2: $e=xy\in M_2$.} By definition, $xy$ connects middle edges only in $E(P^*_2\cup P^*_3\cup P^*_4)$. Assume that $x$ is incident to one middle edge in $E(P^*_2\cup P^*_3\cup P^*_4)$. As mentioned in the analysis of Case~1.3, $x$ cannot be incident to non-middle edges. Similarly, if $y$ is incident to one middle edge, and then it cannot be incident to non-middle edges. If $y$ is incident to non-middle edges, it may be incident to one or two non-middle edges, and by the analysis of Case~1.1 $xy$ will be charged by $\frac{2}{3}$ point in total by the edges incident to $y$. Hence, $xy$ is charged by at most $\frac{2}{3}$ point by the possible existing non-middle edges incident to $y$, and we have $\eta_e\leq\frac{2}{3}$.

\textbf{Case~3: $e=xy\in M_3$.} By definition, it is easy to get that $xy$ connects only one non-middle edge and zero middle edges (see Fig.\ref{fig:fig04}). Moreover, we assume that $x$ is incident to one non-middle edge, and we get that $x$ is a 2-degree vertex. Hence, $xy$ is charged by $\frac{1}{3}$ point by the non-middle edge, and we have $\eta_e=\frac{1}{3}$. 

\textbf{Case~4: $e=xy\in M_4$.} By definition, $xy$ connects only one middle edge and zero non-middle edges (see Fig.\ref{fig:fig04}). Hence, $xy$ is not charged, and we have $\eta_e=0$. 

\textbf{Case~5: $e=xy\in M_5$.} By definition, $xy$ connects only non-middle edges and zero middle edges. Note that $x$ (or $y$) may be incident to one or two non-middle edges. By the analysis of Case~1.1, $xy$ will be charged by $\frac{2}{3}$ point in total by the edges incident to $x$ (or $y$). Hence, $xy$ is charged by $\frac{4}{3}$ points by the non-middle edges incident to $x$ and $y$, and we have $\eta_e=\frac{4}{3}$.

Therefore, we get that 
\begin{align*}
&\sum_{e\in E(M^*_{n/3})}\eta_e\cdot w(e)\\
&\leq \sum_{e\in M'_1}w(e)+\sum_{e\in M''_1}\frac{2}{3}w(e)+\sum_{e\in M'''_1}\frac{1}{3}w(e)+\sum_{e\in M_2}\frac{2}{3}w(e)+\sum_{e\in M_3}\frac{1}{3}w(e)+\sum_{e\in M_5}\frac{4}{3}w(e)\\
&=\frac{4}{3}w(M^*_{n/3})-\frac{1}{3}w(M'_1)-\frac{2}{3}w(M''_1)-w(M'''_1)-\frac{2}{3}w(M_2)-w(M_3)-\frac{4}{3}w(M_4),
\end{align*}
where the last equality follows from the fact that $E(M^*_{n/3})=M_1\cup\cdots\cup M_5$.

Recall that $c(E_1)+c(E_2)+c(E_3)\geq w(E_1)+w(E_2)+w(E_3)-\sum_{e\in E(M^*_{n/3})}\eta_e\cdot w(e)$. By definitions of $E_1$, $E_2$, and $E_3$, we have $w(E_1)=w(P^*_6)+w(P^*_8)$, $w(E_2)=w(X_2)+w(X_3)+w(X_4)+w(X_7)$, and $w(E_3)=w(X_5)$. Hence, we get that $c(E_1)+c(E_2)+c(E_3)\geq w(P^*_6)+w(P^*_8)+w(X_2)+w(X_3)+w(X_4)+w(X_5)+w(X_7)-\frac{4}{3}w(M^*_{n/3})+\frac{1}{3}w(M'_1)+\frac{2}{3}w(M''_1)+w(M'''_1)+\frac{2}{3}w(M_2)+w(M_3)+\frac{4}{3}w(M_4)$.
\end{proof}

Note that the charging method involves only two cases in the proof Lemma~\ref{lb5} for Alg.1, which correspond to two of five cases (Cases~3 and~5) in the proof of Lemma~\ref{lb18} for Alg.2.

Based on the proof of Lemma~\ref{lb6}, we are ready to get a lower bound on $c(M^{**})$.
\begin{lemma}\label{lb19}
It holds that $c(M^{**})\geq \frac{1}{4}c(E_1)+\frac{1}{2}c(E_2)+c(E_3)$.
\end{lemma}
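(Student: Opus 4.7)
The plan is to exhibit a matching $M'$ in $G/M^*_{n/3}$ with $c(M')\ge\tfrac14 c(E_1)+\tfrac12 c(E_2)+c(E_3)$ and then invoke the optimality of $M^{**}$. The matching $M'$ will be the union of two pieces that are made vertex-disjoint in $G/M^*_{n/3}$ by construction. The first piece is $X_5$ itself: each $xy\in X_5$ arises from a 3-path $xyz\in P^*_5$ with $yz\in M_4$, so in $G/M^*_{n/3}$ it runs between the super-vertex corresponding to $yz$ and the free vertex $x$; since the 3-paths of $P^*_5$ are pairwise vertex-disjoint in $G$, the edges of $X_5$ are vertex-disjoint in $G/M^*_{n/3}$ and form a matching of cost $c(X_5)=c(E_3)$.

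For the second piece I would mimic the construction in Lemma~\ref{lb6}. Let $F$ consist of the unique $X_i$-edge selected from each $xyz\in P^*_2\cup P^*_3\cup P^*_4\cup P^*_7$ and, for each $xyz\in P^*_6\cup P^*_8$, of the edge in $\{xy,yz\}$ with the larger $c$-value. The selection rule gives $c(F)\ge\tfrac12 c(E_1)+c(E_2)$, and $F\cap E(M^*_{n/3})=\emptyset$ because middle edges are never in $M^*_{n/3}$, $X_7$ excludes such edges by definition, and $P^*_6,P^*_8$ carry no $M^*_{n/3}$-edges. Since $F$ and $E(M^*_{n/3})$ are both matchings in $G$, $F\cup E(M^*_{n/3})$ is a disjoint union of paths and cycles.

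The crux is a structural observation: every cycle in $F\cup E(M^*_{n/3})$ contains only $F$-edges coming from $P^*_6$ (in the case where the non-middle edge was the $c$-winner) or from $P^*_8$. Indeed, any selected middle edge (from $P^*_2$, $P^*_3$, $P^*_4$, or the middle-edge case of $P^*_6$) has a free-vertex endpoint whose degree in $F\cup E(M^*_{n/3})$ is $1$, and any $X_7$ edge $xy$ from $xyz\in P^*_7$ forces the attached $M^*_{n/3}$-edge $yz$ to terminate at $z$, which carries no other incident edge of $F\cup E(M^*_{n/3})$. With this in place, the cycle-elimination argument of Lemma~\ref{lb6} transfers directly: for each cycle $C$, swap the $F$-edge $x_Cy_C\in C$ minimizing $c(\text{selected})-c(\text{other})$ with the other edge $y_Cz_C$ of its 3-path. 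The swap breaks $C$ (the free vertex of the $P^*_6$ middle edge, or the discarded $V(M^*_{n/3})$-endpoint of the $P^*_8$ swap, becomes degree-$1$), and combining the minimality with the selection rule on a second $F$-edge $x'y'\in C$ from a 3-path $x'y'z'\in P^*_6\cup P^*_8$ yields, by the same computation as in Lemma~\ref{lb6}, $c(x'y')+c(y_Cz_C)\ge\tfrac12(c(x_Cy_Cz_C)+c(x'y'z'))$. Iterating over all cycles produces $F''$ with $c(F'')\ge\tfrac12 c(E_1)+c(E_2)$ and $F''\cup E(M^*_{n/3})$ a disjoint union of paths.

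Viewed in $G/M^*_{n/3}$, $F''$ is itself a disjoint union of paths, so I $2$-edge-color its edges alternately and keep the heavier color class $\widetilde M$, obtaining $c(\widetilde M)\ge\tfrac12 c(F'')\ge\tfrac14 c(E_1)+\tfrac12 c(E_2)$. Setting $M'\coloneqq X_5\cup\widetilde M$: since the super-vertices of $M_4$ and the free vertices in $V(P^*_5)\setminus V(M^*_{n/3})$ used by $X_5$ are inaccessible to the other 3-paths (by the vertex-disjointness of $P^*$), $M'$ is a matching in $G/M^*_{n/3}$ with $c(M')\ge\tfrac14 c(E_1)+\tfrac12 c(E_2)+c(E_3)$, and optimality of $M^{**}$ finishes the proof. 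The main obstacle will be the cycle-structure analysis: each of the six 3-path types contributing to $F$ must be individually checked to identify the \emph{free-or-dead-end} vertex that forbids cycles, and the Lemma~\ref{lb6} minimality-swap inequality must be re-verified despite the asymmetric edge costs inside $P^*_6$ (one middle edge with $c=w$, one non-middle edge with $c=w-\min$).
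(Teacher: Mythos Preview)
Your proof is correct and essentially identical to the paper's: both select one edge per 3-path from $P^*_2\cup\cdots\cup P^*_8$, observe that any cycle in the union with $E(M^*_{n/3})$ can involve only $E_1$-edges (since the $X_2,X_3,X_4,X_5$ edges and the $P^*_6$ middle edges dead-end at free vertices and the $X_7$ edges dead-end via the adjacent $M_3$ edge), apply the cycle-elimination swap of Lemma~\ref{lb6}, and then $2$-color the resulting path system in $G/M^*_{n/3}$ while keeping $E_3=X_5$ intact. The only cosmetic difference is that the paper carries $X_5$ inside $E'$ throughout and separates it at the $2$-coloring step (noting that $X_5$-edges are isolated paths in $G/M^*_{n/3}$), whereas you set $X_5$ aside from the beginning; your worry about the asymmetric costs in $P^*_6$ is unfounded, since the Lemma~\ref{lb6} inequality is pure arithmetic in $c$-values and does not depend on the $c$--$w$ relationship.
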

\begin{proof}
We will construct a matching $M'$ in $G/M^*_{n/3}$ using the edges in $E_1\cup E_2\cup E_3$ such that $c(M')\geq \frac{1}{4}c(E_1)+\frac{1}{2}c(E_2)+c(E_3)$.

We first select $\frac{1}{2}\size{E_1}+\size{E_2}+\size{E_3}$ edges in $E_1\cup E_2\cup E_3$ as follows. Select $\frac{1}{2}\size{E_1}$ edges from $E_1$ by selecting $xy$ for each 3-path $xyz\in P^*_6\cup P^*_8$ with $c(xy)\geq c(yz)$; Select all edges in $E_2$ and $E_3$. Denote the selected edges by $E'$, and clearly we have $c(E')\geq\frac{1}{2}c(E_1)+c(E_2)+c(E_3)$.
Note that we have $\size{E'}=\frac{1}{2}\size{E_1}+\size{E_2}+\size{E_3}=\frac{1}{2}\sum_{i=2}^{8}\size{E(P^*_i)}$, i.e., $E'$ contains one edge from each 3-path in $P^*_2\cup\cdots\cup P^*_8$ and zero edges from each 3-path in $P^*_1$. Hence, these edges are vertex-disjoint. Moreover, since these edges are distinct from the edges in $E(M^*_{n/3})$, we know that $E'\cup E(M^*_{n/3})$ forms a set of paths and cycles in $G$.
Recall that $E_2\cup E_3=X_2\cup X_3\cup X_4\cup X_5\cup X_7$.
Since each edge in $X_2\cup X_3\cup X_4\cup X_5$ (resp., $X_7$) must be the first/last edge (resp., the second edge to the first/last edge) of a path in $E'\cup E(M^*_{n/3})$, if there is a cycle $C$ formed by $E'\cup E(M^*_{n/3})$, we have $E(C)\cap E'=E(C)\cap E_1$. By the proof of Lemma~\ref{lb6}, we know that we can eliminate all cycles formed by $E'\cup E(M^*_{n/3})$ by updating $E_1\cap E'$, while maintaining $c(E')\geq\frac{1}{2}c(E_1)+c(E_2)+c(E_3)$. Note that we only update the edges in $E_1\cap E'$. Hence, we still have $E_2\cup E_3\subseteq E'$, and then we get $c(E'\cap E_1)\geq \frac{1}{2}c(E_1)$ since $c(E')=c(E'\cap E_1)+c(E_2)+c(E_3)$.

Therefore, given the optimal solution, we can construct a set of edges $E'$ satisfying that $E'\cap E(M^*_{n/3})=\emptyset$ and $E'\cup E(M^*_{n/3})$ forms a set of paths only with $c(E')\geq \frac{1}{2}c(E_1)+c(E_2)+c(E_3)$, and then $E'$ forms a set of paths in $G/M^*_{n/3}$. Moreover, every edge in $E_3$ is a single isolated path, and the edges in $(E'\cap E_1)\cup E_2$ form a set of paths. Hence, by decomposing the edges in $(E'\cap E_1)\cup E_2$ into two matchings and combining the matching with the edges in $E_3$, we get a matching $M'$ with a cost of at least $\frac{1}{2}c((E'\cap E_1)\cup E_2)+c(E_3)=\frac{1}{2}c(E'\cap E_1)+\frac{1}{2}c(E_2)+c(E_3)\geq \frac{1}{4}c(E_1)+\frac{1}{2}c(E_2)+c(E_3)$ since $c(E'\cap E_1)\geq \frac{1}{2}c(E_1)$.

By the optimality of $M^{**}$, we get $c(M^{**})\geq c(M')\geq\frac{1}{4}c(E_1)+\frac{1}{2}c(E_2)+c(E_3)$.
\end{proof}
\begin{lemma}\label{lb20}
It holds that $w(P_2)\geq\frac{2}{3}w(M^*_{n/3})+\frac{1}{12}w(M'_1)+\frac{1}{6}w(M''_1)+\frac{1}{4}w(M'''_1)+\frac{1}{6}w(M_2)+\frac{1}{4}w(M_3)+\frac{1}{3}w(M_4)+\frac{1}{2}w(X_2)+\frac{1}{2}w(X_3)+\frac{1}{2}w(X_4)+w(X_5)+\frac{1}{2}w(X_7)-\frac{1}{4}w(Y_7)+\frac{1}{4}w(P^*_6)+\frac{1}{4}w(P^*_8)$.
\end{lemma}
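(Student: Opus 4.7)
The plan is to assemble the bound purely by combining the already-established lemmas. Starting from Lemma~\ref{lb10}, we have $w(P_2)\geq w(M^*_{n/3})+c(M^{**})$, so the whole task reduces to producing a lower bound on $c(M^{**})$ that is expressed in terms of the weights of $M^*_{n/3}$, of the classes $M'_1, M''_1, M'''_1, M_2, M_3, M_4$, and of the edge-sets $X_2,\dots,X_7$, $Y_7$, $P^*_6$, $P^*_8$.

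The key algebraic move is to rewrite the right-hand side of Lemma~\ref{lb19} as
\[
\tfrac{1}{4}c(E_1)+\tfrac{1}{2}c(E_2)+c(E_3)
=\tfrac{1}{4}\bigl(c(E_1)+c(E_2)+c(E_3)\bigr)+\tfrac{1}{4}c(E_2)+\tfrac{3}{4}c(E_3),
\]
so that the bulk term $c(E_1)+c(E_2)+c(E_3)$ can be handled by Lemma~\ref{lb18}, and the two leftover terms $\tfrac14 c(E_2)$ and $\tfrac34 c(E_3)$ can be handled by Lemma~\ref{lb17}. First I would substitute the Lemma~\ref{lb18} estimate into the first summand, which contributes (after multiplication by $1/4$) exactly $-\tfrac13 w(M^*_{n/3})$ together with $\tfrac{1}{12}w(M'_1)+\tfrac{1}{6}w(M''_1)+\tfrac{1}{4}w(M'''_1)+\tfrac{1}{6}w(M_2)+\tfrac{1}{4}w(M_3)+\tfrac{1}{3}w(M_4)$ plus $\tfrac14$ of each of $w(P^*_6)$, $w(P^*_8)$, $w(X_2)$, $w(X_3)$, $w(X_4)$, $w(X_5)$, $w(X_7)$. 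Next I would substitute $c(E_2)\geq w(X_2)+w(X_3)+w(X_4)+w(X_7)-w(Y_7)$ from Lemma~\ref{lb17} into $\tfrac14 c(E_2)$, adding a further $\tfrac14$ of each of $w(X_2)$, $w(X_3)$, $w(X_4)$, $w(X_7)$ and a term $-\tfrac14 w(Y_7)$. Finally I would substitute $c(E_3)=w(X_5)$ into $\tfrac34 c(E_3)$, contributing $\tfrac34 w(X_5)$.

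Putting all of this together with the leading $w(M^*_{n/3})$ from Lemma~\ref{lb10} gives
\[
w(M^*_{n/3})-\tfrac13 w(M^*_{n/3})=\tfrac23 w(M^*_{n/3})
\]
for the matching term, and the $X_i$ coefficients consolidate as $\tfrac14+\tfrac14=\tfrac12$ for $X_2,X_3,X_4,X_7$, $\tfrac14+\tfrac34=1$ for $X_5$, while all the coefficients on $M'_1,M''_1,M'''_1,M_2,M_3,M_4,P^*_6,P^*_8$ and the term $-\tfrac14 w(Y_7)$ are exactly what is claimed in the lemma statement. The whole argument is essentially a bookkeeping exercise; no new structural insight is needed beyond Lemmas~\ref{lb10}, \ref{lb17}, \ref{lb18}, and~\ref{lb19}.

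The only step that requires a bit of care — and is the minor obstacle — is the choice of the decomposition $\tfrac14(c(E_1)+c(E_2)+c(E_3))+\tfrac14 c(E_2)+\tfrac34 c(E_3)$. It is forced by the fact that Lemma~\ref{lb18} controls only the sum $c(E_1)+c(E_2)+c(E_3)$ (because the charging argument distributes the $M^*_{n/3}$ deficit across all non-middle edges simultaneously), while the sharper individual bounds of Lemma~\ref{lb17} on $c(E_2)$ and $c(E_3)$ must be layered on top to produce the correct final coefficients $\tfrac12$ on $w(X_2),w(X_3),w(X_4),w(X_7)$ and $1$ on $w(X_5)$. Any other split of the three coefficients $(\tfrac14,\tfrac12,1)$ of Lemma~\ref{lb19} would either waste slack from Lemma~\ref{lb17} or fail to combine cleanly with the $-\tfrac43 w(M^*_{n/3})$ term from Lemma~\ref{lb18}, so this particular decomposition is the natural one and makes the whole substitution go through.
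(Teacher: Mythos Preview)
Your proposal is correct and follows exactly the same route as the paper: start from Lemma~\ref{lb10}, apply Lemma~\ref{lb19}, rewrite $\tfrac14 c(E_1)+\tfrac12 c(E_2)+c(E_3)$ as $\tfrac14(c(E_1)+c(E_2)+c(E_3))+\tfrac14 c(E_2)+\tfrac34 c(E_3)$, then plug in Lemmas~\ref{lb18} and~\ref{lb17} and collect terms. The arithmetic you describe matches the paper's computation line by line.
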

\begin{proof}
We get that 
\begin{align*}
&w(P_2)\\
&\geq w(M^*_{n/3})+c(M^{**})\\
&\geq w(M^*_{n/3})+\frac{1}{4}c(E_1)+\frac{1}{2}c(E_2)+c(E_3)\\
&=w(M^*_{n/3})+\frac{1}{4}(c(E_1)+c(E_2)+c(E_3))+\frac{1}{4}c(E_2)+\frac{3}{4}c(E_3)\\
&\geq w(M^*_{n/3})+\frac{1}{4}\left(w(P^*_6)+w(P^*_8)+w(X_2)+w(X_3)+w(X_4)+w(X_5)+w(X_7)-\frac{4}{3}w(M^*_{n/3})\right.\\
&\quad\left.\!+\frac{1}{3}w(M'_1)+\frac{2}{3}w(M''_1)+w(M'''_1)+\frac{2}{3}w(M_2)+w(M_3)+\frac{4}{3}w(M_4)\right)\\
&\quad+\frac{1}{4}(w(X_2)+w(X_3)+w(X_4)+w(X_7)-w(Y_7))+\frac{3}{4}w(X_5)\\
&=\frac{2}{3}w(M^*_{n/3})+\frac{1}{12}w(M'_1)+\frac{1}{6}w(M''_1)+\frac{1}{4}w(M'''_1)+\frac{1}{6}w(M_2)+\frac{1}{4}w(M_3)+\frac{1}{3}w(M_4)\\
&\quad+\frac{1}{2}w(X_2)+\frac{1}{2}w(X_3)+\frac{1}{2}w(X_4)+w(X_5)+\frac{1}{2}w(X_7)-\frac{1}{4}w(Y_7)+\frac{1}{4}w(P^*_6)+\frac{1}{4}w(P^*_8),
\end{align*}
where the first inequality follows from Lemma~\ref{lb10}, the second from Lemma~\ref{lb19}, and the last from Lemmas~\ref{lb17} and \ref{lb18}.
\end{proof}


We do not theoretically prove an approximation ratio better than $\frac{7}{12}$ by using the results in our previous lemmas and making a trade-off between Alg.1 and Alg.2. However, if we make a trade-off between Alg.1 and Alg.2, we find that $\sum_{i=1}^{6}w(P^*_i)/\OPT<10^{-4}$ in the worst case.\footnote{Based on the linear program in the proof of Theorem~\ref{main}, numerical results show that, under the constraint that $\sum_{i=1}^{6}w(P^*_i)/\OPT\geq 10^{-4}$, we can obtain an approximation ratio of at least $0.583335>\frac{7}{12}$ by making a trade-off between Alg.1 and Alg.2.} This implies that almost all weighted parts of $P^*$ lie within the graph $G[V(M^*_{n/3})]$. Therefore, the maximum weight 2-star packing may have a weight of at least $\OPT$ in this graph. We will design our third algorithm to handle this case.

\section{The Third Algorithm}
The third algorithm, denoted by Alg.3, is based on the $\frac{2}{3}$-approximation algorithm for the maximum weight 2-star packing problem in~\cite{babenko2011new}. It contains the following three steps. 

\medskip
\noindent\textbf{Step~1.} Find $M^*_{n/3}$, the maximum weight matching of size $n/3$ in the graph $G$ used in Alg.2. 

\noindent\textbf{Step~2.} Find a 2-star packing $S$ in the graph $LG\coloneqq\! G[V(M^*_{n/3})]$ by using the $\frac{2}{3}$-approximation algorithm for the maximum weight 2-star packing problem~\cite{babenko2011new}.

\noindent\textbf{Step~3.} Obtain a 3-path packing $P_3$ of $G$ as follows:
\begin{itemize}
    \item Call each vertex in $V\setminus V(M^*_{n/3})$ as a \emph{residual vertex}.
    \item For each 2-path $xy\in S$, create a 3-path using $xy$ with a residual vertex.
    \item For each 3-path $xyz\in S$, directly select it.
    \item Arbitrarily create vertex-disjoint 3-paths using left residual vertices.
\end{itemize}
\medskip

Note that there are at most $n/3$ 2-paths in $S$, obtained in Step 2 of Alg.3, and hence every 2-path in $S$ can be combined into a 3-path with a residual vertex since there are $n/3$ vertices in $V\setminus V(M^*_{n/3})$.
An illustration of Alg.3 can be seen in Fig.\ref{fig:fig05}.
\begin{figure}[t]
    \centering
    \includegraphics[scale=0.58]{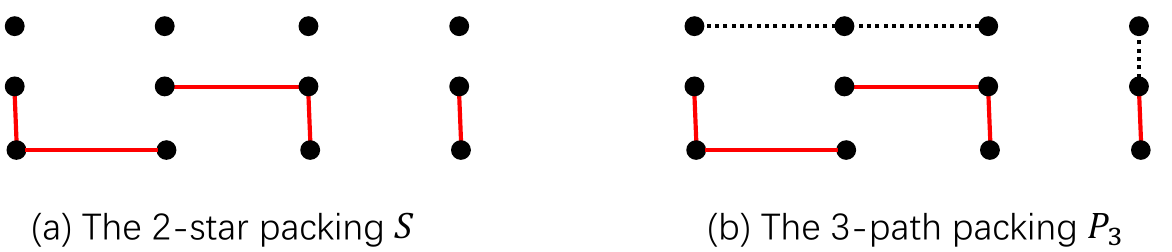}
    \caption{An illustration of Alg.3: In (a), $S$ contains two 3-paths and one edge, represented by red edges; In (b), $S$ is augmented into a 3-path packing in $G$ with four 3-paths.}
    \label{fig:fig05}
\end{figure}

Let $S^*$ be the maximum weight 2-star packing in $LG$, and $A^*$ be the maximum weight 2-feasible arc set in $\overleftrightarrow{LG}$. Since we use the $\frac{2}{3}$-approximation algorithm to obtain $S$ in Step 2 of Alg.3, it holds that $w(S)\geq \frac{2}{3}w(S^*)$. We use a stronger result in their paper. 

\begin{lemma}[\cite{babenko2011new}]\label{lb22}
The 2-star packing $S$ in Step 2 of Alg.3 satisfies that $w(S)\geq \frac{4}{9}w(A^*)$.
\end{lemma}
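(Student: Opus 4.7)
The plan is to factor the claim as the product of two ratios of $\tfrac{2}{3}$: namely, the approximation guarantee $w(S)\geq \tfrac{2}{3}\,w(S^*)$ of the Babenko--Gusakov algorithm invoked in Step~2, combined with a purely structural bound $w(S^*)\geq \tfrac{2}{3}\,w(A^*)$ relating the optimum 2-star packing in $LG$ to the optimum 2-feasible arc set in $\overleftrightarrow{LG}$. Multiplying these gives $w(S)\geq \tfrac{4}{9}\,w(A^*)$, which is what we want. The first factor is already available, so the real content of the proof is the structural bound.

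The easy direction $w(A^*)\geq w(S^*)$ is immediate: given any 2-star packing, orient each edge from the center of its star toward its leaf. Then leaves have in-degree $1$ and out-degree $0$, centers have in-degree $0$ and out-degree at most $2$, so the resulting arc set is 2-feasible and has the same weight. The harder direction is to extract, from an optimum $A^*$, a 2-star packing retaining at least $\tfrac{2}{3}$ of the total weight. My approach would be to analyze the weakly connected components of $A^*$: since every vertex has in-degree at most $1$, each component is either an in-branching or contains exactly one directed cycle, and its underlying undirected graph has maximum degree at most $3$. For each such component I would use a local averaging or exchange argument to partition its edges into vertex-disjoint $K_{1,1}$s and $K_{1,2}$s of total weight at least $\tfrac{2}{3}$ of the component's weight, intuitively because around each out-degree-$2$ vertex one needs to sacrifice at most one of three incident arcs to break the structure into disjoint stars.

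The main obstacle will be making this local $\tfrac{2}{3}$ guarantee tight uniformly, particularly on components that contain a directed cycle or long alternating "zig-zag" chains of out-degree-$2$ vertices, where a naive greedy partition can easily waste more than a third of the weight. A cleaner route, which I would ultimately prefer, is an LP-based argument: formulate the natural LP relaxation of the 2-star packing problem, observe that its optimum coincides with $w(A^*)$ (the 2-feasible arc set polytope is integral and dominates the 2-star packing polytope), and exhibit a rounding scheme that always preserves at least a $\tfrac{2}{3}$-fraction of any fractional solution, thereby certifying an integrality gap of at most $\tfrac{3}{2}$. Once the structural inequality $w(S^*)\geq \tfrac{2}{3}\,w(A^*)$ is in place, composing it with the Babenko--Gusakov $\tfrac{2}{3}$-approximation guarantee gives the desired $w(S)\geq \tfrac{4}{9}\,w(A^*)$.
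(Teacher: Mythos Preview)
The paper does not give a proof of this lemma; it is quoted directly from \cite{babenko2011new} as a property of the specific algorithm invoked in Step~2. Your plan to recover the bound as a product of two $\tfrac{2}{3}$ factors has a genuine gap: the structural inequality $w(S^*)\ge \tfrac{2}{3}\,w(A^*)$ is false in general. Take $LG$ to be the complete graph on $2m$ vertices in which a perfect matching $M$ of $m$ edges carries all the weight (each matching edge has weight $1$, every other edge weight $0$). Orienting every matching edge in both directions yields a $2$-feasible arc set of weight $2m$, so $w(A^*)=2m$. On the other hand, any $2$-star packing of $LG$ consists of $a$ three-vertex paths and $b$ single edges with $3a+2b=2m$; since the two edges of a $3$-path are adjacent, each $3$-path contains at most one edge of $M$, and hence the packing has weight at most $a+b=m-\tfrac{a}{2}\le m$. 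Thus $w(S^*)=m=\tfrac{1}{2}\,w(A^*)$, so the ratio $w(S^*)/w(A^*)$ can be as low as $\tfrac{1}{2}$, not $\tfrac{2}{3}$.

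The same instance also shows that the $\tfrac{4}{9}$ bound cannot be obtained from the black-box guarantee $w(S)\ge \tfrac{2}{3}\,w(S^*)$ alone: that gives only $w(S)\ge \tfrac{2}{3}m$, while $\tfrac{4}{9}\,w(A^*)=\tfrac{8}{9}m$. So the lemma genuinely depends on the internal analysis of the Babenko--Gusakov algorithm (how it constructs $S$ from an arc set it computes), not on any post-hoc composition of an approximation ratio with a structural gap. Your LP route hits the same wall: on the matching instance either the LP optimum is strictly below $w(A^*)$, or the integrality gap is at least $2$, contradicting the $\tfrac{3}{2}$ bound your rounding argument would need.
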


\begin{lemma}\label{lb23}
It holds that $w(P_3)\geq \frac{4}{9}(2w(Y_5)+2w(Y_6)+w(P^*_7)+w(P^*_8)+\sum_{xyz\in P^*_7\cup P^*_8}\max\{w(xy),w(yz)\})$.  
\end{lemma}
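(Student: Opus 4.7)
The plan is to chain two easy bounds and then do the real work at the arc-set level. First, Alg.3 turns each $2$-path in $S$ into a $3$-path by appending a residual vertex of non-negative weight, and fills in the remaining residual vertices with further $3$-paths of non-negative weight, so $w(P_3) \ge w(S)$. Combined with Lemma~\ref{lb22}, this yields $w(P_3) \ge \tfrac{4}{9}\, w(A^*)$, and it therefore suffices to exhibit a $2$-feasible arc set $A$ in $\overleftrightarrow{LG}$ of weight at least $2w(Y_5) + 2w(Y_6) + w(P^*_7) + w(P^*_8) + \sum_{xyz\in P^*_7\cup P^*_8}\max\{w(xy),w(yz)\}$.

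The first observation I will establish is that each of $Y_5$, $Y_6$, $E(P^*_7)$, $E(P^*_8)$ already lies inside $E(LG)$. For $xyz\in P^*_5$, the unique $M^*_{n/3}$-edge (which is exactly the $Y_5$-edge of this $3$-path) has both endpoints in $V(M^*_{n/3})$; for $xyz\in P^*_6$, the $Y_6$-edge is precisely the one joining the middle vertex $y\in V(M^*_{n/3})$ to the unique other endpoint of the $3$-path that belongs to $V(M^*_{n/3})$; and every vertex of a $3$-path in $P^*_7\cup P^*_8$ is already in $V(M^*_{n/3})$. I will then build $A$ by taking the disjoint union, over $3$-paths $xyz$ in $P^*_5\cup P^*_6\cup P^*_7\cup P^*_8$, of the following locally defined arc sets: for $xyz\in P^*_5$ with $Y_5$-edge $xy$, take both arcs $(x,y)$ and $(y,x)$; for $xyz\in P^*_6$ with $Y_6$-edge $yz$ (after relabeling so $z\in V(M^*_{n/3})$), take both arcs $(y,z)$ and $(z,y)$; and for $xyz\in P^*_7\cup P^*_8$, after relabeling so that $w(xy)\ge w(yz)$, take the three arcs $(x,y)$, $(y,x)$, $(y,z)$.

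Because the $3$-paths of $P^*$ are vertex-disjoint, these local arc sets share no vertex, so $2$-feasibility reduces to a per-$3$-path check: the first two constructions use two mutually reversed arcs, giving each endpoint in-degree $1$ and out-degree $1$, while in the three-arc pattern the middle vertex $y$ has in-degree $1$ and out-degree $2$, the endpoint $x$ has in-degree $1$ and out-degree $1$, and the endpoint $z$ has in-degree $1$ and out-degree $0$. The contributions to $w(A)$ are, respectively, $2w(xy)$, $2w(yz)$, and $2w(xy)+w(yz)=w(xyz)+\max\{w(xy),w(yz)\}$. Summing over all $3$-paths and combining with $w(P_3)\ge \tfrac{4}{9}w(A^*)\ge \tfrac{4}{9}w(A)$ gives exactly the stated inequality.

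The only non-obvious ingredient is the three-arc pattern assigned to each $3$-path of $P^*_7\cup P^*_8$: the naive orientation of a $3$-path as a $2$-star rooted at its middle vertex yields only $w(xyz)$, and misses the $\max\{w(xy),w(yz)\}$ term. The trick is that $2$-feasibility permits the heavier edge $xy$ to be traversed in both directions without violating either the in-degree or the out-degree constraint at $x$ and $y$, which is exactly what is needed to accumulate the additional weight that distinguishes Lemma~\ref{lb23} from the weaker bound $\tfrac{4}{9}\bigl(2w(Y_5)+2w(Y_6)+w(P^*_7)+w(P^*_8)\bigr)$ obtainable from $2$-star packings alone.
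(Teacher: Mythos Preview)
Your proposal is correct and follows essentially the same approach as the paper: the paper likewise chains $w(P_3)\ge w(S)\ge\tfrac{4}{9}w(A^*)$ via Lemma~\ref{lb22} and then exhibits the very same $2$-feasible arc set in $\overleftrightarrow{LG}$, doubling the arcs on each edge of $Y_5\cup Y_6$ and using the three-arc pattern $\{(x,y),(y,x),(y,z)\}$ (with $w(xy)\ge w(yz)$) on each $3$-path of $P^*_7\cup P^*_8$. Your write-up is slightly more explicit in checking that the relevant edges lie in $E(LG)$ and in verifying $2$-feasibility per vertex, but the argument is identical.
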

\begin{proof}
By Alg.3, $P_3$ contains all 3-paths in $S$, and for each 2-path in $S$, $P_3$ contains a 3-path containing the 2-path. Hence, we have $w(P_3)\geq w(S)$. Next, we give a lower bound on $w(S)$.

By definitions of $X_i$, $Y_i$, and $P^*_i$, it is easy to see that $S'\coloneqq Y_5\cup Y_6\cup P^*_7\cup P^*_8$ is a set of 2-paths and 3-paths in $G[V(M^*_{n/3})]$. Moreover, we can obtain a 2-feasible arc set $A'$ in $\overleftrightarrow{LG}$ in the following way: for each 2-path $xy\in Y_5\cup Y_6$ obtain two arcs $\{(x,y),(y,x)\}$, and for each 3-path $xyz\in P^*_7\cup P^*_8$ with $w(xy)\geq w(yz)$ obtain three arcs $\{(x,y),(y,z),(y,x)\}$. Hence,
\begin{align*}
w(A')&=\sum_{xy\in Y_5\cup Y_6}2w(xy) + \sum_{xyz\in P^*_7\cup P^*_8}(w(xy)+w(yz)+\max\{w(xy),w(yz)\})\\
&=2w(Y_5)+2w(Y_6)+w(P^*_7)+w(P^*_8)+\sum_{xyz\in P^*_7\cup P^*_8}\max\{w(xy),w(yz)\},
\end{align*}
where the last equality follows from the fact that $X_i$ and $Y_i$ is a decomposition of the edges in $E(P^*_i)$.
Since $w(P_3)\geq w(S)$ and $w(A^*)\geq w(A')$, by Lemma~\ref{lb22}, we have $w(P_3)\geq\frac{4}{9}w(A')\geq\frac{4}{9}(2w(Y_5)+2w(Y_6)+w(P^*_7)+w(P^*_8)+\sum_{xyz\in P^*_7\cup P^*_8}\max\{w(xy),w(yz)\})$. 
\end{proof}

\section{The Trade-Off}
For each $i\in\{1,...,8\}$, we let $\xi_i\coloneqq w(P^*_i)/\OPT$, $\alpha_i\coloneqq w(X_i)/\OPT$, $\beta_i\coloneqq w(Y_i)/\OPT$, and $\gamma_i\coloneqq\sum_{xyz\in P^*_i}\max\{w(xy),w(yz)\}/\OPT$.
Let $\delta\coloneqq w(M^*_{n/2})/\OPT$ and $\pi\coloneqq w(M^*_{n/3})/\OPT$.
For each $i\in\{1,...,5\}$, let $\tau_i\coloneqq w(M_i)/\OPT$. Let $\phi_1\coloneqq w(M'_1)/\OPT$, $\phi_2\coloneqq w(M''_1)/\OPT$, and $\phi_3\coloneqq w(M'''_1)/\OPT$. 
Since $P^*=P^*_1\cup\cdots\cup P^*_8$, $E(P^*_i)=X_i\cup Y_i$, $M^*_{n/3}=M_1\cup\cdots\cup M_5$, and $M_1=M'_1\cup M''_1\cup M'''_1$, we have
\begin{align}
1&=\sum_{i=1}^{8}\xi_i,\label{lp1}\\
\xi_i&=\alpha_i+\beta_i,\quad i\in\{1,...,8\},\\
\pi&=\tau_1+\tau_2+\tau_3+\tau_4+\tau_5,\\
\tau_1&=\phi_1+\phi_2+\phi_3,\\
0&\leq\xi_i,\alpha_i,\beta_i,\gamma_i,\quad i\in\{1,...,8\},\\
0&\leq\delta,\pi,\tau_1,\tau_2,\tau_3,\tau_4,\tau_5,\phi_1,\phi_2,\phi_3.
\end{align}

By Lemma~\ref{lb8}, we have 
\begin{equation}
w(P_1)\geq\lrA{\frac{2}{3}\delta+\frac{1}{2}-\frac{1}{2}\sum_{i=1}^{8}\gamma_i}\cdot\OPT.
\end{equation}

By Lemmas~\ref{lb11}-\ref{lb16}, we get 
\begin{align}
\alpha_i&\geq \beta_i,\quad i\in\{1,3,4,8\},\\
\tau_3&= \beta_7,\\
\tau_4&=\beta_5,\\
\beta_5&\geq \alpha_5,\\
\delta&\geq\pi+\alpha_1+\beta_2,\\
\delta&\geq\pi-\tau_1-\tau_2+\alpha_1+\gamma_2+\alpha_3+\alpha_4+\alpha_6,\\
\delta&\geq\pi-\tau_1+\alpha_1+\beta_2+\alpha_6,\\
\delta&\geq\pi-\phi_3-\tau_2+\alpha_1+\gamma_2+\alpha_3+\alpha_4.
\end{align}

By Lemma~\ref{lb20}, we have 
\begin{equation}
\begin{split}
w(P_2)&\geq\left(\frac{2}{3}\pi+\frac{1}{12}\phi_1+\frac{1}{6}\phi_2+\frac{1}{4}\phi_3+\frac{1}{6}\tau_2+\frac{1}{4}\tau_3+\frac{1}{3}\tau_4\right.\\
&\quad\quad\left.+\frac{1}{2}\alpha_2+\frac{1}{2}\alpha_3+\frac{1}{2}\alpha_4+\alpha_5+\frac{1}{2}\alpha_7-\frac{1}{4}\beta_7+\frac{1}{4}\xi_6+\frac{1}{4}\xi_8\right)\cdot\OPT.
\end{split}
\end{equation}

By Lemma~\ref{lb23}, we have 
\begin{equation}
w(P_3)\geq \frac{4}{9}\lrA{2\beta_5+2\beta_6+\xi_7+\xi_8+\gamma_7+\gamma_8}\cdot\OPT.
\end{equation}

\begin{lemma}\label{lb24}
It holds that $\pi\geq\sum_{i=1}^{8}\gamma_i$, and $\gamma_i\geq \max\{\alpha_i,\beta_i\}$ for each $i\in\{1,...,8\}$.
\end{lemma}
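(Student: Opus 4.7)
The plan is to handle the two inequalities separately, each by a short direct argument.

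For the first inequality $\pi\geq\sum_{i=1}^{8}\gamma_i$, I would mimic the proof of Lemma~\ref{lb1}. For each 3-path $xyz\in P^*$, select the (single) edge attaining $\max\{w(xy),w(yz)\}$. Since the 3-paths of $P^*$ are vertex-disjoint, the resulting $n/3$ edges form a matching of size $n/3$ in $G$. The optimality of $M^*_{n/3}$ as the maximum weight matching of size exactly $n/3$ then yields
\[
w(M^*_{n/3})\;\geq\;\sum_{xyz\in P^*}\max\{w(xy),w(yz)\}\;=\;\sum_{i=1}^{8}\sum_{xyz\in P^*_i}\max\{w(xy),w(yz)\},
\]
where the final equality uses that $\{P^*_i\}_{i=1}^{8}$ partitions $P^*$. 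Dividing by $\OPT$ gives $\pi\geq\sum_{i=1}^{8}\gamma_i$.

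For the second inequality $\gamma_i\geq\max\{\alpha_i,\beta_i\}$, the key observation I would establish is that, for every $i\in\{1,\ldots,8\}$, each 3-path in $P^*_i$ contributes exactly one of its two edges to $X_i$ and the other to $Y_i$, so the partition $E(P^*_i)=X_i\cup Y_i$ is a ``one edge per 3-path'' decomposition. For $i\in\{1,3,4,8\}$ this is immediate from the weight-based definition of $X_i$. For $i\in\{2,5,6,7\}$ I would verify it by a brief case analysis against the structural definitions of $P^*_i$: in any $xyz\in P^*_2$, we have $y\notin V(M^*_{n/3})$ and exactly one of $x,z$ lies in $V(M^*_{n/3})$, so precisely one of $xy,yz$ satisfies the criterion $x\in V(M^*_{n/3})$ defining $X_2$; in any $xyz\in P^*_5$ or $xyz\in P^*_7$, exactly one of $xy,yz$ belongs to $E(M^*_{n/3})$ by the very definition of $P^*_5$ and $P^*_7$; and in any $xyz\in P^*_6$, both $y\in V(M^*_{n/3})$ and precisely one of $x,z$ lies outside $V(M^*_{n/3})$, pinning down a unique edge satisfying the criterion defining $X_6$. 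Once this one-per-path decomposition is in hand, the trivial per-3-path bounds $w(xy)\leq\max\{w(xy),w(yz)\}$ and $w(yz)\leq\max\{w(xy),w(yz)\}$ sum over $P^*_i$ to
\[
w(X_i)\;\leq\;\sum_{xyz\in P^*_i}\max\{w(xy),w(yz)\}\qand w(Y_i)\;\leq\;\sum_{xyz\in P^*_i}\max\{w(xy),w(yz)\},
\]
and dividing by $\OPT$ yields $\alpha_i\leq\gamma_i$ and $\beta_i\leq\gamma_i$.

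The only potential source of friction is the structural bookkeeping for $i\in\{2,5,6,7\}$, but this reduces to reading off the definitions of $P^*_i$ (or consulting Fig.~\ref{fig:fig04}), so no substantive obstacle is expected; neither part of the lemma requires the charging machinery from Lemmas~\ref{lb5}--\ref{lb18}.
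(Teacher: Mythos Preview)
Your proposal is correct and follows essentially the same approach as the paper: both parts reduce to Lemma~\ref{lb1} for the first inequality and to the ``one edge per 3-path'' decomposition $E(P^*_i)=X_i\cup Y_i$ for the second. The paper's own proof is terser---it simply invokes ``by definitions of $X_i$ and $Y_i$'' where you spell out the case analysis for $i\in\{2,5,6,7\}$---but the underlying argument is identical.
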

\begin{proof}
Since $P^*$ is split into the eight pairwise disjoint sets $\{P^*_1,...,P^*_8\}$, by Lemma~\ref{lb1} and the definition of $\gamma_i$, we have 
\begin{align*}
w(M^*_{n/3})&\geq\sum_{xyz\in P^*}\max\{w(xy),w(yz)\}=\sum_{i=1}^{8}\sum_{xyz\in P^*_i}\max\{w(xy),w(yz)\}=\sum_{i=1}^{8}\gamma_i\cdot\OPT.
\end{align*}
Since $w(M^*_{n/3})=\pi\cdot\OPT$ by the definition of $\pi$, we have $\pi\geq \sum_{i=1}^{8}\gamma_i$.

For each $i\in\{1,...,8\}$, by definitions of $X_i$ and $Y_i$, we have $\sum_{xyz\in P^*_i}\max\{w(xy),w(yz)\}=\gamma_i\cdot\OPT\geq \max\{w(X_i),w(Y_i)\}=\max\{\alpha_i,\beta_i\}\cdot\OPT$, and thus $\gamma_i\geq \max\{\alpha_i,\beta_i\}$.
\end{proof}

By Lemma~\ref{lb24}, we have
\begin{align}
\pi&\geq\gamma_1+\gamma_2+\gamma_3+\gamma_4+\gamma_5+\gamma_6+\gamma_7+\gamma_8,\\
\gamma_i&\geq \max\{\alpha_i,\beta_i\},\quad i\in\{1,...,8\}.\label{lp11}
\end{align}

Using (\ref{lp1})-(\ref{lp11}) the approximation ratio $\frac{\max\{w(P_1),\ w(P_2),\ w(P_3)\}}{\OPT}$ can be obtained via solving a linear program. 
Hence, we have the following theorem.

\begin{theorem}\label{main}
For MW3PP, there is a polynomial-time $10/17$-approximation algorithm.
\end{theorem}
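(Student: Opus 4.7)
The plan is to combine the three algorithms Alg.1, Alg.2, and Alg.3 into a single meta-algorithm that returns whichever of $P_1,P_2,P_3$ has largest weight, and to prove this meta-algorithm attains ratio $10/17$. Each component runs in polynomial time: Alg.1 and Alg.2 are dominated by two maximum weight matching computations in $O(n^3)$ time, while Alg.3 invokes the polynomial $2/3$-approximation of~\cite{babenko2011new} on top of a matching. Combined with the $O(n^2)$ enumeration loop from Section~\ref{prelim} to handle odd $n$, the overall meta-algorithm is polynomial.

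For the approximation guarantee, the key observation is that for any $\lambda_1,\lambda_2,\lambda_3\ge 0$ with $\lambda_1+\lambda_2+\lambda_3=1$, the inequality $\max\{w(P_1),w(P_2),w(P_3)\}\ge \lambda_1 w(P_1)+\lambda_2 w(P_2)+\lambda_3 w(P_3)$ holds pointwise. Substituting the lower bounds on $w(P_1)$, $w(P_2)$, $w(P_3)$ established respectively in Lemma~\ref{lb8}, Lemma~\ref{lb20}, and Lemma~\ref{lb23} gives a linear lower bound, normalized by $\OPT$, on $\max\{w(P_1),w(P_2),w(P_3)\}/\OPT$ in terms of the scalar variables $\xi_i,\alpha_i,\beta_i,\gamma_i$ for $i\in\{1,\dots,8\}$ together with $\delta,\pi,\tau_1,\dots,\tau_5,\phi_1,\phi_2,\phi_3$.

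Hence the analysis reduces to the following linear program: minimize $\rho$ subject to the three algorithmic inequalities $w(P_k)/\OPT\le \rho$ for $k=1,2,3$ (rewritten using the aforementioned lemmas) and the structural feasibility constraints (\ref{lp1})--(\ref{lp11}) that any profile arising from an instance of MW3PP must satisfy. The approximation ratio of the meta-algorithm is at least the optimal value of this LP. The plan is to invoke a standard LP solver (or equivalently exhibit an explicit feasible dual solution, which yields a concrete convex combination of the three lower bounds summing to $\tfrac{10}{17}\cdot\OPT$) and verify that the LP optimum equals exactly $10/17$. Since the LP has finite size (roughly $34$ variables and a modest number of constraints) and only rational data, this verification can be carried out exactly.

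The main obstacle lies in the LP itself: although the bookkeeping is linear, several of the constraints from Lemmas~\ref{lb13}--\ref{lb16} express $\delta$ as the maximum of several right-hand sides, and the worst-case configuration over the eight 3-path types $P^*_1,\dots,P^*_8$ is not a priori transparent. I expect the optimum to be attained at a vertex where all three inequalities on $w(P_1),w(P_2),w(P_3)$ are tight simultaneously---otherwise one could perturb the dual weights and obtain a strictly better bound---and where most $\xi_i$ vanish, consistent with the text's observation that in the bad case for Alg.1 and Alg.2 almost all weight of $P^*$ lies inside $G[V(M^*_{n/3})]$, namely concentrated on $P^*_7\cup P^*_8$ (and possibly $Y_5,Y_6$) which is exactly the regime that Alg.3 is designed to handle via Lemma~\ref{lb23}. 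The ratio $10/17$ is then obtained as the LP optimum, completing the proof.
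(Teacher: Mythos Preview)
Your proposal is correct and follows essentially the same route as the paper: set up the finite LP with objective $y$ and constraints coming from Lemmas~\ref{lb8}, \ref{lb20}, \ref{lb23} together with the structural inequalities (\ref{lp1})--(\ref{lp11}), and certify the optimum $10/17$ via an explicit feasible dual solution. The paper carries this out by writing the dual explicitly and exhibiting rational values for the $55$ dual multipliers (with denominators $51$ and $153$) whose objective equals $10/17$; your proposal stops at the point of saying such a certificate exists, but the plan is identical.
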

\begin{proof}
It is easy to see that our algorithm takes polynomial time.
Using (\ref{lp1})-(\ref{lp11}), the linear program for calculating the approximation ratio can be built as follows. 
(For convenience in deriving the dual, each equality constraint $a = b$ is equivalently rewritten as the pair of inequalities $a \geq b$ and $-a \geq -b$.)
\begin{alignat}{2}
\min\quad &y \nonumber \\
\mbox{s.t.}\quad 
&y\geq\frac{2}{3}\delta+\frac{1}{2}-\frac{1}{2}\sum_{i=1}^{8}\gamma_i, \tag{C1}\\
&y\geq\frac{2}{3}\pi+\frac{1}{12}\phi_1+\frac{1}{6}\phi_2+\frac{1}{4}\phi_3+\frac{1}{6}\tau_2+\frac{1}{4}\tau_3+\frac{1}{3}\tau_4\nonumber\\
&\quad\quad+\frac{1}{2}\alpha_2+\frac{1}{2}\alpha_3+\frac{1}{2}\alpha_4+\alpha_5+\frac{1}{2}\alpha_7-\frac{1}{4}\beta_7+\frac{1}{4}\xi_6+\frac{1}{4}\xi_8,\tag{C2}\\
&y\geq\frac{4}{9}\lrA{2\beta_5+2\beta_6+\xi_7+\xi_8+\gamma_7+\gamma_8}, \tag{C3}\\
&\sum_{i=1}^{8}\xi_i\geq1, \tag{C4}\\
&-\sum_{i=1}^{8}\xi_i\geq -1, \tag{C5}\\
&\xi_i\geq\alpha_i+\beta_i,\quad i\in\{1,...,8\}, \tag{C6-C13}\\
&-\xi_i\geq -(\alpha_i+\beta_i),\quad i\in\{1,...,8\}, \tag{C14-C21}\\
&\pi\geq\tau_1+\tau_2+\tau_3+\tau_4+\tau_5, \tag{C22}\\
&-\pi\geq-(\tau_1+\tau_2+\tau_3+\tau_4+\tau_5), \tag{C23}\\
&\tau_1\geq \phi_1+\phi_2+\phi_3, \tag{C24}\\
&-\tau_1\geq -(\phi_1+\phi_2+\phi_3), \tag{C25}\\
&\alpha_i\geq \beta_i,\quad i\in\{1,3,4,8\},\tag{C26-C29}\\
&\tau_3\geq\beta_7,\tag{C30}\\
&-\tau_3\geq-\beta_7,\tag{C31}\\
&\tau_4\geq\beta_5,\tag{C32}\\
&-\tau_4\geq-\beta_5,\tag{C33}\\
&\beta_5\geq\alpha_5,\tag{C34}\\
&\delta\geq\pi+\alpha_1+\beta_2,\tag{C35}\\
&\delta\geq\pi-\tau_1-\tau_2+\alpha_1+\gamma_2+\alpha_3+\alpha_4+\alpha_6,\tag{C36}\\
&\delta\geq\pi-\tau_1+\alpha_1+\beta_2+\alpha_6,\tag{C37}\\
&\delta\geq\pi-\phi_3-\tau_2+\alpha_1+\gamma_2+\alpha_3+\alpha_4,\tag{C38}\\
&\pi\geq\gamma_1+\gamma_2+\gamma_3+\gamma_4+\gamma_5+\gamma_6+\gamma_7+\gamma_8,\tag{C39}\\
&\gamma_i\geq \alpha_i,\quad i\in\{1,...,8\},\tag{C40-C47}\\
&\gamma_i\geq \beta_i,\quad i\in\{1,...,8\},\tag{C48-C55}\\
&\xi_i,\alpha_i,\beta_i,\gamma_i\geq0,\quad i\in\{1,...,8\},\nonumber\\
&\delta,\pi,\tau_1,\tau_2,\tau_3,\tau_4,\tau_5,\phi_1,\phi_2,\phi_3\geq 0.\nonumber
\end{alignat}
By solving the above linear program, the obtained value is about $0.588235294$, which is very close to $10/17$. 

To prove an approximation ratio of $10/17$, we provide a feasible solution to its dual linear program, which is shown as follows. 

\begin{alignat}{2}
\max\quad &\frac{1}{2}\lambda_1+\lambda_4-\lambda_5 \nonumber \\
\mbox{s.t.}\quad 
&\lambda_1+\lambda_2+\lambda_3\leq 1, \tag{$y$}\\
&\lambda_4-\lambda_5+\lambda_6-\lambda_{14}\leq 0,\tag{$\xi_1$}\\
&\lambda_4-\lambda_5+\lambda_7-\lambda_{15}\leq 0,\tag{$\xi_2$}\\
&\lambda_4-\lambda_5+\lambda_8-\lambda_{16}\leq 0,\tag{$\xi_3$}\\
&\lambda_4-\lambda_5+\lambda_9-\lambda_{17}\leq 0,\tag{$\xi_4$}\\
&\lambda_4-\lambda_5+\lambda_{10}-\lambda_{18}\leq 0,\tag{$\xi_5$}\\
&-\frac{1}{4}\lambda_2+\lambda_4-\lambda_5+\lambda_{11}-\lambda_{19}\leq 0,\tag{$\xi_6$}\\
&-\frac{4}{9}\lambda_3+\lambda_4-\lambda_5+\lambda_{12}-\lambda_{20}\leq 0,\tag{$\xi_7$}\\
&-\frac{1}{4}\lambda_2-\frac{4}{9}\lambda_3+\lambda_4-\lambda_5+\lambda_{13}-\lambda_{21}\leq 0,\tag{$\xi_8$}\\
&-\lambda_6 + \lambda_{14} + \lambda_{26} - \lambda_{35} - \lambda_{36} - \lambda_{37} - \lambda_{38} - \lambda_{40}\leq 0,\tag{$\alpha_1$}\\
&
-\frac{1}{2} \lambda_2 - \lambda_7 + \lambda_{15} - \lambda_{41}\leq 0,\tag{$\alpha_2$}\\
&-\frac{1}{2}\lambda_{2}-\lambda_{8}+\lambda_{16}+\lambda_{27}-\lambda_{36}-\lambda_{38}-\lambda_{42}\leq 0,\tag{$\alpha_3$}\\
&-\frac{1}{2}\lambda_{2}-\lambda_{9}+\lambda_{17}+\lambda_{28}-\lambda_{36}-\lambda_{38}-\lambda_{43}\leq 0,\tag{$\alpha_4$}\\
&-\lambda_{2}-\lambda_{10}+\lambda_{18}-\lambda_{34}-\lambda_{44}\leq 0,\tag{$\alpha_5$}\\
&-\lambda_{11}+\lambda_{19}-\lambda_{36}-\lambda_{37}-\lambda_{45}\leq 0,\tag{$\alpha_6$}\\
&-\frac{1}{2}\lambda_{2}-\lambda_{12}+\lambda_{20}-\lambda_{46}\leq 0,\tag{$\alpha_7$}\\
&-\lambda_{13}+\lambda_{21}+\lambda_{29}-\lambda_{47}\leq 0,\tag{$\alpha_8$}\\
&-\lambda_{6}+\lambda_{14}-\lambda_{26}-\lambda_{48}\leq 0,\tag{$\beta_1$}\\
&-\lambda_{7}+\lambda_{15}-\lambda_{35}-\lambda_{37}-\lambda_{49}\leq 0,\tag{$\beta_2$}\\
&-\lambda_{8}+\lambda_{16}-\lambda_{27}-\lambda_{50}\leq 0,\tag{$\beta_3$}\\
&-\lambda_{9}+\lambda_{17}-\lambda_{28}-\lambda_{51}\leq 0,\tag{$\beta_4$}\\
&-\frac{8}{9}\lambda_{3}-\lambda_{10}+\lambda_{18}-\lambda_{32}+\lambda_{33}+\lambda_{34}-\lambda_{52}\leq 0,\tag{$\beta_5$}\\
&-\frac{8}{9}\lambda_{3}-\lambda_{11}+\lambda_{19}-\lambda_{53}\leq 0,\tag{$\beta_6$}\\
&\frac{1}{4}\lambda_{2}-\lambda_{12}+\lambda_{20}-\lambda_{30}+\lambda_{31}-\lambda_{54}\leq 0,\tag{$\beta_7$}\\
&-\lambda_{13}+\lambda_{21}-\lambda_{29}-\lambda_{55}\leq 0,\tag{$\beta_8$}\\
&\frac{1}{2}\lambda_{1}-\lambda_{39}+\lambda_{40}+\lambda_{48}\leq 0, \tag{$\gamma_1$}\\
&\frac{1}{2}\lambda_{1}-\lambda_{36}-\lambda_{38}-\lambda_{39}+\lambda_{41}+\lambda_{49}\leq 0, \tag{$\gamma_2$}\\
&\frac{1}{2}\lambda_{1}-\lambda_{39}+\lambda_{42}+\lambda_{50}\leq 0, \tag{$\gamma_3$}\\
&\frac{1}{2}\lambda_{1}-\lambda_{39}+\lambda_{43}+\lambda_{51}\leq 0, \tag{$\gamma_4$}\\
&\frac{1}{2}\lambda_{1}-\lambda_{39}+\lambda_{44}+\lambda_{52}\leq 0, \tag{$\gamma_5$}\\
&\frac{1}{2}\lambda_{1}-\lambda_{39}+\lambda_{45}+\lambda_{53}\leq 0, \tag{$\gamma_6$}\\
&\frac{1}{2}\lambda_{1}-\frac{4}{9}\lambda_3-\lambda_{39}+\lambda_{46}+\lambda_{54}\leq 0, \tag{$\gamma_7$}\\
&\frac{1}{2}\lambda_{1}-\frac{4}{9}\lambda_3-\lambda_{39}+\lambda_{47}+\lambda_{55}\leq 0, \tag{$\gamma_8$}\\
&-\frac{2}{3}\lambda_{1}+\lambda_{35}+\lambda_{36}+\lambda_{37}+\lambda_{38}\leq 0, \tag{$\delta$}\\
&-\frac{2}{3}\lambda_{2}+\lambda_{22}-\lambda_{23}-\lambda_{35}-\lambda_{36}-\lambda_{37}-\lambda_{38}+\lambda_{39}\leq 0, \tag{$\pi$}\\
&-\lambda_{22}+\lambda_{23}+\lambda_{24}-\lambda_{25}+\lambda_{36}+\lambda_{37}\leq 0, \tag{$\tau_1$}\\
&-\frac{1}{6}\lambda_2-\lambda_{22}+\lambda_{23}+\lambda_{36}+\lambda_{38}\leq 0, \tag{$\tau_2$}\\
&-\frac{1}{4}\lambda_2-\lambda_{22}+\lambda_{23}+\lambda_{30}-\lambda_{31}\leq 0, \tag{$\tau_3$}\\
&-\frac{1}{3}\lambda_2-\lambda_{22}+\lambda_{23}+\lambda_{32}-\lambda_{33}\leq 0, \tag{$\tau_4$}\\
&-\lambda_{22}+\lambda_{23}\leq 0, \tag{$\tau_5$}\\
&-\frac{1}{12}\lambda_2-\lambda_{24}+\lambda_{25}\leq 0, \tag{$\phi_1$}\\
&-\frac{1}{6}\lambda_2-\lambda_{24}+\lambda_{25}\leq 0, \tag{$\phi_2$}\\
&-\frac{1}{4}\lambda_2-\lambda_{24}+\lambda_{25}+\lambda_{38}\leq 0, \tag{$\phi_3$}\\
&\lambda_i\geq0,\quad i\in\{1,...,55\}.\nonumber
\end{alignat}

There are 55 main constraints in the primal linear program, each corresponding to one of the 55 variables in the dual linear program, denoted by $\lambda_1, \dots, \lambda_{55}$.
Moreover, each constraint in the dual linear program corresponds to a variable in the primal linear program, which is indicated after the constraint.

Consider the following setting: $\lambda_{1}=\frac{24}{51}$, $\lambda_{2}=\frac{24}{51}$, $\lambda_{3}=\frac{3}{51}$, $\lambda_{4}=\frac{18}{51}$, $\lambda_{14}=\frac{18}{51}$, $\lambda_{15}=\frac{18}{51}$, $\lambda_{16}=\frac{18}{51}$, $\lambda_{17}=\frac{18}{51}$, $\lambda_{18}=\frac{18}{51}$, $\lambda_{19}=\frac{12}{51}$, $\lambda_{20}=\frac{50}{153}$, $\lambda_{21}=\frac{32}{153}$, $\lambda_{25}=\frac{2}{51}$, $\lambda_{26}=\frac{18}{51}$, $\lambda_{27}=\frac{18}{51}$, $\lambda_{28}=\frac{18}{51}$, $\lambda_{30}=\frac{6}{51}$, $\lambda_{32}=\frac{8}{51}$, $\lambda_{35}=\frac{10}{51}$, $\lambda_{37}=\frac{2}{51}$, $\lambda_{38}=\frac{4}{51}$, $\lambda_{39}=\frac{32}{51}$, $\lambda_{40}=\frac{20}{51}$, $\lambda_{41}=\frac{6}{51}$, $\lambda_{42}=\frac{20}{51}$, $\lambda_{43}=\frac{20}{51}$, $\lambda_{45}=\frac{10}{51}$, $\lambda_{46}=\frac{14}{153}$, $\lambda_{47}=\frac{32}{153}$, $\lambda_{49}=\frac{6}{51}$, $\lambda_{52}=\frac{22}{153}$, $\lambda_{53}=\frac{28}{153}$, $\lambda_{54}=\frac{50}{153}$, and $\lambda_{55}=\frac{32}{153}$, with $\lambda_i = 0$ for all other $i$. It can be verified that this setting forms a feasible solution to the dual linear program, with an objective value of $\frac{1}{2}\lambda_1+\lambda_4-\lambda_5=\frac{10}{17}$.

Therefore, the dual linear program has a value of least $\frac{10}{17}$, and thus the approximation ratio obtained by the primal linear program is also at least $\frac{10}{17}$.
The details are put in the appendix.
\end{proof}

\begin{remark}
By calculation, we provide the values of the variables in the linear program in one of the worst cases, where we have $\xi_2=\xi_4=\xi_5=\xi_6=\xi_7=0$, $\xi_1=\frac{1}{68}$, $\xi_3=\frac{7}{68}$, $\xi_8=\frac{15}{17}$, $\alpha_i=\beta_i=\gamma_i=\frac{1}{2}\xi_i$ for each $i\in\{1,...,8\}$, $\delta=\frac{69}{136}$, $\pi=\frac{1}{2}$, $\tau_1=\tau_3=\tau_4=0$, $\tau_2=\frac{7}{136}$, $\tau_5=\frac{61}{136}$, and $\phi_1=\phi_2=\phi_3=0$.
\end{remark}

\section{Conclusion}
In this paper, we propose a $\frac{10}{17}$-approximation algorithm for MW3PP, improving the previous claimed approximation ratio of $\frac{7}{12}$. Our algorithm is based on three algorithms. The analysis is based on a newly proposed charging method.
In the future, it would be interesting to extend our methods to design and analyze algorithms for related problems, such as MW3CP~\cite{DBLP:journals/jco/ChenCLWZ21,DBLP:journals/dam/ChenTW09}, MW4PP~\cite{hassin1997approximation}, and MW4CP~\cite{DBLP:journals/tcs/ZhaoX25}.

We remark that for MW4PP, the best-known $3/4$-approximation algorithm~\cite{hassin1997approximation} has remained the state-of-the-art for nearly 30 years.
A similar situation can be observed: in the worst case, the weight of a maximum weight matching of size $n/2$ is roughly the same as that of a maximum weight matching of size $n/4$.
Whether our method can be used to break the $3/4$-approximation barrier remains an open problem.

Moreover, it would be worthwhile to explore better constructions of 2-star packings than the one in Lemma~\ref{lb22}, which is based on~\cite{babenko2011new}. Such improvements may further enhance the approximation ratio for MW3PP.

\bibliographystyle{plain}
\bibliography{main}
\appendix
\section{The Calculation of the Approximation Ratio}

\begin{remark}
The Python code (with Gurobi) for the primal linear program is shown in List~\ref{111111}.\footnote{See also \url{https://github.com/JingyangZhao/MW3PP}.}
\end{remark}

\nonumber
\begin{lstlisting}[caption={The Python code for solving the linear grogram}, label={111111}]
from gurobipy import *
def mw3pp():

    m = Model('calculate ratio')
    y = m.addVar(vtype=GRB.CONTINUOUS, name="y")
    
    #########################  variables  #########################
    
    xi = m.addVars([1,2,3,4,5,6,7,8], vtype=GRB.CONTINUOUS)
    alpha = m.addVars([1,2,3,4,5,6,7,8], vtype=GRB.CONTINUOUS)
    beta = m.addVars([1,2,3,4,5,6,7,8], vtype=GRB.CONTINUOUS)
    gamma = m.addVars([1,2,3,4,5,6,7,8], vtype=GRB.CONTINUOUS)
    
    delta = m.addVar(vtype=GRB.CONTINUOUS)
    pi = m.addVar(vtype=GRB.CONTINUOUS)
    
    tau = m.addVars([1,2,3,4,5], vtype=GRB.CONTINUOUS)
    phi = m.addVars([1,2,3], vtype=GRB.CONTINUOUS)
    
    #########################  constraints  #########################
    
    m.addConstr( y >= 2/3*delta + 1/2 - 1/2*quicksum(gamma[i] for i in [1,2,3,4,5,6,7,8]) )
    
    m.addConstr( y >= 2/3*pi + 1/12*phi[1] + 1/6*phi[2] + 1/4*phi[3] + 1/6*tau[2] + 1/4*tau[3] + 1/3*tau[4] + 1/2*alpha[2] + 1/2*alpha[3] + 1/2*alpha[4] + alpha[5] + 1/2*alpha[7] - 1/4*beta[7] + 1/4*xi[6] + 1/4*xi[8] )
    
    m.addConstr( y >= 4/9*(2*beta[5] + 2*beta[6] + xi[7] + xi[8] + gamma[7] + gamma[8]) )
    
    m.addConstr( xi.sum('*') >= 1 )
    m.addConstr( -xi.sum('*') >= -1 )
    
    m.addConstrs( xi[i] >= alpha[i] + beta[i] for i in [1,2,3,4,5,6,7,8] )
    m.addConstrs( -xi[i] >= -(alpha[i] + beta[i]) for i in [1,2,3,4,5,6,7,8] )
    
    m.addConstr( pi >= quicksum(tau[i] for i in [1,2,3,4,5]) )
    m.addConstr( -pi >= -quicksum(tau[i] for i in [1,2,3,4,5]) )
    
    m.addConstr( tau[1] >= quicksum(phi[i] for i in [1,2,3]) )
    m.addConstr( -tau[1] >= -quicksum(phi[i] for i in [1,2,3]) )
    
    m.addConstrs( alpha[i] >= beta[i] for i in [1,3,4,8] )
    
    m.addConstr( tau[3] >= beta[7] )
    m.addConstr( -tau[3] >= -beta[7] )
    
    m.addConstr( tau[4] >= beta[5] )
    m.addConstr( -tau[4] >= -beta[5] )
    
    m.addConstr( beta[5] >= alpha[5] )
    
    m.addConstr( delta >= pi + alpha[1] + beta[2] )
    
    m.addConstr( delta >= pi - tau[1] - tau[2] + alpha[1] + gamma[2] + alpha[3] + alpha[4] + alpha[6] )
    
    m.addConstr( delta >= pi - tau[1] + alpha[1] + beta[2] + alpha[6] )
    
    m.addConstr( delta >= pi - phi[3] - tau[2] + alpha[1] + gamma[2] + alpha[3] + alpha[4] )
    
    m.addConstr( pi >= quicksum(gamma[i] for i in [1,2,3,4,5,6,7,8]) )
    
    m.addConstrs( gamma[i] >= alpha[i] for i in [1,2,3,4,5,6,7,8] )
    
    m.addConstrs( gamma[i] >= beta[i] for i in [1,2,3,4,5,6,7,8] )  

    m.addConstrs( xi[i] >= 0 for i in [1,2,3,4,5,6,7,8] )
    m.addConstrs( alpha[i] >= 0 for i in [1,2,3,4,5,6,7,8] )
    m.addConstrs( beta[i] >= 0 for i in [1,2,3,4,5,6,7,8] )
    m.addConstrs( gamma[i] >= 0 for i in [1,2,3,4,5,6,7,8] )
    
    m.addConstr( delta >= 0 )
    m.addConstr( pi >= 0 )
    m.addConstrs( tau[i] >= 0 for i in [1,2,3,4,5] )
    m.addConstrs( phi[i] >= 0 for i in [1,2,3] )
    
    #########################  solve  #########################
    
    m.setObjective(y, GRB.MINIMIZE); m.optimize()
    if m.status == GRB.OPTIMAL: return m.objVal
    else: return -1
mw3pp()
\end{lstlisting}

\begin{remark} 
Recall that at the end of Section~\ref{Second} we mention that, under the constraint that $\sum_{i=1}^{6}w(P^*_i)/\OPT\geq 10^{-4}$, we can achieve an approximation ratio of at least $0.583335$ by making a trade-off between Alg.1 and Alg.2.
It can be verified by deleting the constraint:
\begin{lstlisting}
    m.addConstr( y >= 4/9*(2*beta[5] + 2*beta[6] + xi[7] + xi[8] + gamma[7] + gamma[8]) )
\end{lstlisting}
and adding the following constraint.
\begin{lstlisting}
    m.addConstr( quicksum(xi[i] for i in [1,2,3,4,5,6]) >= 0.0001 )
\end{lstlisting}
\end{remark}

\begin{remark} 
The Python code for verifying the solution to the dual linear program is shown in List~\ref{22222}, where each $\lambda_i$ is scaled to an integer by multiplying by 153, and the coefficients in each constraint are also scaled to integers.
\end{remark}

\nonumber
\begin{lstlisting}[caption={The Python code for verifying the solution to the dual linear grogram}, label={22222}]
def checkInequality(x, y):
    if x <= y:
        return True
    return False

lambdas = [0, 72, 72, 9, 54, 0, 0, 0, 0, 0, 0, 0, 0, 0, 54, 54, 54, 54, 54, 36, 50, 32, 0, 0, 0, 6, 54, 54, 54, 0, 18, 0, 24, 0, 0, 30, 0, 6, 12, 96, 60, 18, 60, 60, 0, 30, 14, 32, 0, 18, 0, 0, 22, 28, 50, 32]

valid = True;

# Inequality (y)
valid &= checkInequality(lambdas[1] + lambdas[2] + lambdas[3], 153);

# Inequality (\xi_1) to (\xi_8)
valid &= checkInequality(lambdas[4] - lambdas[5] + lambdas[6] - lambdas[14], 0);
valid &= checkInequality(lambdas[4] - lambdas[5] + lambdas[7] - lambdas[15], 0);
valid &= checkInequality(lambdas[4] - lambdas[5] + lambdas[8] - lambdas[16], 0);
valid &= checkInequality(lambdas[4] - lambdas[5] + lambdas[9] - lambdas[17], 0);
valid &= checkInequality(lambdas[4] - lambdas[5] + lambdas[10] - lambdas[18], 0);
valid &= checkInequality(-lambdas[2] + 4 * lambdas[4] - 4 * lambdas[5] + 4 * lambdas[11] - 4 * lambdas[19], 0);
valid &= checkInequality(-4 * lambdas[3] + 9 * lambdas[4] - 9 * lambdas[5] + 9 * lambdas[12] - 9 * lambdas[20], 0);
valid &= checkInequality(-9 * lambdas[2] - 16 * lambdas[3] + 36 * lambdas[4] - 36 * lambdas[5] + 36 * lambdas[13] - 36 * lambdas[21], 0);

# Inequality (\alpha_1) to (\alpha_8)
valid &= checkInequality(-lambdas[6] + lambdas[14] + lambdas[26] - lambdas[35] - lambdas[36] - lambdas[37] - lambdas[38] - lambdas[40], 0);
valid &= checkInequality(-lambdas[2] - 2 * lambdas[7] + 2 * lambdas[15] - 2 * lambdas[41], 0);
valid &= checkInequality(-lambdas[2] - 2 * lambdas[8] + 2 * lambdas[16] + 2 * lambdas[27] - 2 * lambdas[36] - 2 * lambdas[38] - 2 * lambdas[42], 0);
valid &= checkInequality(-lambdas[2] - 2 * lambdas[9] + 2 * lambdas[17] + 2 * lambdas[28] - 2 * lambdas[36] - 2 * lambdas[38] - 2 * lambdas[43], 0);
valid &= checkInequality(-lambdas[2] - lambdas[10] + lambdas[18] - lambdas[34] - lambdas[44], 0);
valid &= checkInequality(-lambdas[11] + lambdas[19] - lambdas[36] - lambdas[37] - lambdas[45], 0);
valid &= checkInequality(-lambdas[2] - 2 * lambdas[12] + 2 * lambdas[20] - 2 * lambdas[46], 0);
valid &= checkInequality(-lambdas[13] + lambdas[21] + lambdas[29] - lambdas[47], 0);

# Inequality (\beta_1) to (\beta_8)
valid &= checkInequality(-lambdas[6] + lambdas[14] - lambdas[26] - lambdas[48], 0);
valid &= checkInequality(-lambdas[7] + lambdas[15] - lambdas[35] - lambdas[37] - lambdas[49], 0);
valid &= checkInequality(-lambdas[8] + lambdas[16] - lambdas[27] - lambdas[50], 0);
valid &= checkInequality(-lambdas[9] + lambdas[17] - lambdas[28] - lambdas[51], 0);
valid &= checkInequality(-8 * lambdas[3] - 9 * lambdas[10] + 9 * lambdas[18] - 9 * lambdas[32] + 9 * lambdas[33] + 9 * lambdas[34] - 9 * lambdas[52], 0);
valid &= checkInequality(-8 * lambdas[3] - 9 * lambdas[11] + 9 * lambdas[19] - 9 * lambdas[53], 0);
valid &= checkInequality(lambdas[2] - 4 * lambdas[12] + 4 * lambdas[20] - 4 * lambdas[30] + 4 * lambdas[31] - 4 * lambdas[54], 0);
valid &= checkInequality(-lambdas[13] + lambdas[21] - lambdas[29] - lambdas[55], 0);

# Inequality (\gamma_1) to (\gamma_8)
valid &= checkInequality(lambdas[1] - 2 * lambdas[39] + 2 * lambdas[40] + 2 * lambdas[48], 0);
valid &= checkInequality(lambdas[1] - 2 * lambdas[36] - 2 * lambdas[38] - 2 * lambdas[39] + 2 * lambdas[41] + 2 * lambdas[49], 0);
valid &= checkInequality(lambdas[1] - 2 * lambdas[39] + 2 * lambdas[42] + 2 * lambdas[50], 0);
valid &= checkInequality(lambdas[1] - 2 * lambdas[39] + 2 * lambdas[43] + 2 * lambdas[51], 0);
valid &= checkInequality(lambdas[1] - 2 * lambdas[39] + 2 * lambdas[44] + 2 * lambdas[52], 0);
valid &= checkInequality(lambdas[1] - 2 * lambdas[39] + 2 * lambdas[45] + 2 * lambdas[53], 0);
valid &= checkInequality(9 * lambdas[1] - 8 * lambdas[3] - 18 * lambdas[39] + 18 * lambdas[46] + 18 * lambdas[54], 0);
valid &= checkInequality(9 * lambdas[1] - 8 * lambdas[3] - 18 * lambdas[39] + 18 * lambdas[47] + 18 * lambdas[55], 0);

# Inequality (\delta)
valid &= checkInequality(-2 * lambdas[1] + 3 * lambdas[35] + 3 * lambdas[36] + 3 * lambdas[37] + 3 * lambdas[38], 0);

# Inequality (\pi)
valid &= checkInequality(-2 * lambdas[2] + 3 * lambdas[22] - 3 * lambdas[23] - 3 * lambdas[35] - 3 * lambdas[36] - 3 * lambdas[37] - 3 * lambdas[38] + 3 * lambdas[39], 0);

# Inequality (\tau_1) to (\tau_5)
valid &= checkInequality(-lambdas[22] + lambdas[23] + lambdas[24] - lambdas[25] + lambdas[36] + lambdas[37], 0);
valid &= checkInequality(-lambdas[2] - 6 * lambdas[22] + 6 * lambdas[23] + 6 * lambdas[36] + 6 * lambdas[38], 0);
valid &= checkInequality(-lambdas[2] - 4 * lambdas[22] + 4 * lambdas[23] + 4 * lambdas[30] - 4 * lambdas[31], 0);
valid &= checkInequality(-lambdas[2] - 3 * lambdas[22] + 3 * lambdas[23] + 3 * lambdas[32] - 3 * lambdas[33], 0);
valid &= checkInequality(-lambdas[22] + lambdas[23], 0);

# Inequality (\phi_1) to (\phi_3)
valid &= checkInequality(-lambdas[2] - 12 * lambdas[24] + 12 * lambdas[25], 0);
valid &= checkInequality(-lambdas[2] - 6 * lambdas[24] + 6 * lambdas[25], 0);
valid &= checkInequality(-lambdas[2] - 4 * lambdas[24] + 4 * lambdas[25] + 4 * lambdas[38], 0);

print(valid)
\end{lstlisting}
\end{document}